\documentclass[12pt]{article}
\pdfoutput=1

\usepackage{xr-hyper}
\usepackage[margin=1in]{geometry}
\usepackage{amsfonts,amsmath,amssymb,amsthm,graphicx,bm}
\usepackage{mathtools}
\usepackage{mathrsfs}
\usepackage{ifthen}
\usepackage{changepage}
\usepackage{enumerate}
\usepackage{scalerel}
\usepackage{accents}
\usepackage{enumitem}
\usepackage{float}      
\usepackage[utf8]{inputenc} 
\usepackage[T1]{fontenc} 
\usepackage{csquotes}
\usepackage{comment}
\usepackage{bbm}
\usepackage{array}
\usepackage{makecell}
\usepackage{booktabs}
\usepackage{xfrac}
\usepackage{thm-restate}
\usepackage{tikz}
\usepackage{pgfplots}
\usepackage{subcaption}
\usepackage{graphicx}
\usepackage{multirow}
\usetikzlibrary{patterns}
\usepgfplotslibrary{fillbetween}
\usetikzlibrary{intersections}
\usepackage{wrapfig}
\usetikzlibrary{positioning,fit,calc,arrows.meta}
\usepackage{fix-cm}
\usepackage[ruled,vlined,linesnumbered]{algorithm2e}
\usepackage{xcolor}
\usepackage{natbib}
\bibpunct{(}{)}{;}{a}{,}{,}

\usepackage{standalone}  
\usetikzlibrary{patterns, decorations.pathreplacing} 
\usetikzlibrary{arrows.meta, calc}

\usepackage[suppress]{color-edits}
\addauthor{zs}{blue}
\addauthor{wt}{purple}
\addauthor{wtres}{red}

\newcommand{\xhdr}[1]{\vspace{6pt}\noindent{\bf {#1.}}}

\allowdisplaybreaks
\theoremstyle{plain}
\newtheorem{theorem}{Theorem}[section]
\newtheorem{lemma}[theorem]{Lemma}

\newtheorem{proposition}[theorem]{Proposition}

\theoremstyle{plain}
\newtheorem{definition}{Definition}[section] 

\theoremstyle{plain}


\usepackage[colorlinks=true, linkcolor=blue!70!black, citecolor=blue!70!black, urlcolor=blue, breaklinks=true]{hyperref}
\usepackage{cleveref}
\Crefname{algocf}{Algorithm}{Algorithms}
\crefname{claim}{claim}{claims}

\newcommand{\Payoff}[2][]{\text{\bf Payoff}\ifthenelse{\not\equal{}{#1}}{_{#1}}{}\!\left[{\def\givenn{\middle|}#2}\right]}
\newcommand{\Reg}[2][]{\text{\bf REG}\ifthenelse{\not\equal{}{#1}}{_{#1}}{}\!\left[{\def\givenn{\middle|}#2}\right]}

\newcommand{\Rev}[2][]{\text{\bf Rev}\ifthenelse{\not\equal{}{#1}}{_{#1}}{}\!\left[{\def\givenn{\middle|}#2}\right]}
\newcommand{\optRev}[2][]{\text{\bf Rev}^\dagger\ifthenelse{\not\equal{}{#1}}{_{#1}}{}\!\left[{\def\givenn{\middle|}#2}\right]}

\newcommand{\dd}{\mathrm{d}}
\newcommand{\biddernum}{N}
\newcommand{\val}{v}
\renewcommand{\val}{v}
\newcommand{\valprofile}{v}
\newcommand{\valUB}{\bar{\val}}
\newcommand{\valLB}{\underline{\val}}

\newcommand{\signal}{s}
\newcommand{\signalprofile}{s}
\newcommand{\signalspace}{S}
\newcommand{\signalprofilespace}{S}
\newcommand{\valspace}{V}
\newcommand{\valprofilespace}{\valspace^\biddernum}
\newcommand{\valprior}{\lambda}

\newcommand{\inforstructure}{\mathcal{I}}
\newcommand{\signalscheme}{\pi}
\newcommand{\margsignalscheme}{\pi_{\signalprofilespace}}
\newcommand{\bidstrategy}{\sigma}
\newcommand{\bidstrategyprofile}{\sigma}
\newcommand{\bid}{b}
\newcommand{\bidprofile}{b}
\newcommand{\bidspace}{B}
\newcommand{\bidstrategyspace}{\Sigma}
\newcommand{\middlepoint}{\kappa}
\newcommand{\inforstructureGeneral}{\inforstructure}

\newcommand{\biddersurplus}{\textsc{BS}}
\newcommand{\winprob}{\omega}
\newcommand{\secmax}{\textsc{secmax}}
\renewcommand{\Rev}{\textsc{Rev}}
\newcommand{\totalsurplus}{\textsc{WEL}}
\newcommand{\optwelfare}{\overline{\totalsurplus}}
\newcommand{\worstwelfare}{\underline{\totalsurplus}}
\newcommand{\inforstructureUnique}{\mathcal{I}^{\dagger}}

\newcommand{\winnerNum}{m}
\newcommand{\modifiedpriorCDF}{\widehat{\valprior}}
\newcommand{\lowerSignal}{\underline{\signal}}
\newcommand{\upperSignal}{\bar{\signal}}

\newcommand{\eps}{\varepsilon}

\newcommand{\posteriorBeliefDist}{H}

\newcommand{\permutation}{\xi}
\newcommand{\construcrule}{{\text{Winner-Dominance}}}
\newcommand{\zerorentcondi}{{\text{No-Rent Winner}}}

\newcommand{\symsignalprofilespace}{\signalprofilespace^{\biddernum}}

\definecolor{efficiencyColor}{HTML}{000000}
\definecolor{inefficiencyColor}{HTML}{404040}
\definecolor{regionFill}{HTML}{D9D9D9}

\newcommand{\winnerMean}{\hat{\val}}
\newcommand{\winnerExpectedvalProb}{g}
\newcommand{\minVal}{\underline{\val}}
\newcommand{\maxVal}{\valUB}

\newcommand{\constructsignalspace}{\bar{\signalspace}}
\newcommand{\constructsignalscheme}{\hat{\signalscheme}}
\newcommand{\winnerpriorCDF}{F}
\newcommand{\winnerposteriorCDF}{G}
\newcommand{\secmaxsignal}{h}
\newcommand{\jointdist}{f}
\newcommand{\Length}{L}
\newcommand{\strictsignalspace}{\hat{\signalspace}}
\newcommand{\signalProfileDist}{\phi}
\newcommand{\MPCfunction}{\Psi}
\newcommand{\conditionalDist}{Q}

\newcommand{\uniqueExpectedval}{y}
\newcommand{\constant}{C}
\newcommand{\positiveProb}{\omega}
\newcommand{\quantileBid}{\bid^\star}

\newcommand{\Devbidstrategyprofile}{\bidstrategyprofile^\dagger}

\newcommand{\selectvalprofile}{\hat{\val}}
\newcommand{\selectset}{B}
\newcommand{\maxbidder}{i^\star}
\newcommand{\deviationbid}{\bid\primed}
\newcommand{\mapping}{\gamma}
\newcommand{\uniform}{\mathrm{Unif}}
\newcommand{\lsignal}{\signal^{\cc{low}}}
\newcommand{\hsignal}{\signal^{\cc{high}}}

\newcommand{\expostbiddersurplus}{\Delta}
\newcommand{\event}{\mathcal{E}}
\newcommand{\setsize}[1]{{\left|#1\right|}}

\newcommand{\condition}{\,\mid\,}

\newcommand{\prob}[2][]{\text{Pr}\ifthenelse{\not\equal{}{#1}}{_{#1}}{}\!\left[{\def\givenn{\middle|}#2}\right]}
\newcommand{\indicator}[2][]{\mathbf{1}\ifthenelse{\not\equal{}{#1}}{_{#1}}{}\!\left\{{\def\givenn{\middle|}#2}\right\}}
\newcommand{\expect}[2][]{\mathbb{E}\ifthenelse{\not\equal{}{#1}}{_{#1}}{}\!\left[{\def\givenn{\middle|}#2}\right]}

\newcommand{\cc}[1]{\ensuremath{\mathsf{#1}}} 

\DeclareMathOperator*{\argmax}{arg\,max}

\newcommand{\R}{\mathbb{R}}
\newcommand{\reals}{\R}

\newcommand{\valCDF}{F}

\newcommand{\expectedVal}{x}

\newcommand{\mymax}{\max}

\newcommand{\supp}{\mathrm{supp}}

\newcommand{\primed}{^{\dagger}}
\newcommand{\doubleprimed}{^{\ddagger}}

\hypersetup{filecolor=blue!70!black}
\usepackage{setspace}
\title{Private Private Information in Second-Price Auction}
\author{
Boyu Liu\thanks{Renmin University of China. Email: {\tt liuboyu03@ruc.edu.cn}}
\and
Wei Tang\thanks{Chinese University of Hong Kong. Email: {\tt weitang@cuhk.edu.hk}}
\and
Zihe Wang\thanks{Renmin University of China. Email: {\tt wang.zihe@ruc.edu.cn}}
\and
Shuo Zhang\thanks{Renmin University of China. Email: {\tt zhangshuo1422@ruc.edu.cn}}
}
\date{}

\begin{document}

\maketitle

\begin{abstract}

Classic results show that even an arbitrarily small correlation across bidders' information can enable full surplus extraction in auctions and related mechanism design settings. 
Motivated by this fragility,
we study the information independence
in a second-price auction when the seller commits to a private private information structure, meaning bidders' signals are independent ex ante, while bidders share a symmetric and arbitrarily correlated prior distribution over their valuations.
We first show that the seller optimal efficient outcome with full surplus extraction can always be implemented by a private private information structure that admits a Bayes Nash equilibrium.
However, this equilibrium may not be stable. We then further construct a private private information structure that achieves revenue arbitrarily close to maximum welfare while admitting a strict equilibrium.
At the same time, we establish an impossibility result: under private private information, in general, bidder surplus cannot achieve maximal welfare exactly, and we characterize necessary and sufficient conditions on the prior distribution under which bidder surplus can be made arbitrarily close to maximal welfare. 
We finally explore which other efficient outcomes are achievable under private private information.
\end{abstract}

\textbf{Keywords}---{information independence, second-price auction, information design, Bayes Nash equilibrium.}\\
\indent\textbf{JEL}---{D44, D82, D83, C72}.

\newpage

\section{Introduction}
\label{sec:intro}

Early auction and mechanism design work emphasizes that, in interdependent value settings (including common value environments), 
bidders' information rents depend critically on whether their information remains private across bidders.
Indeed, \cite{MW-82} show by a common-value auction example that when one bidder's information is fully revealed by another's, the less informed bidder earns no surplus.
\cite{MR-92} generalize this insight by arguing that even arbitrarily small correlations let a designer use others' reports as ``cross-checks'' and drive each bidder’s interim informational rent to (almost) zero.
As they put it, ``introducing arbitrarily small amounts of correlation into the joint distribution of private information among the players is enough to render private information valueless.''\footnote{\cite{CM-88} also show that with correlated private values, a profit maximizing seller can design mechanisms that fully extract bidders' surplus by exploiting cross bidder correlation also through report-contingent transfers.}



Information independence, implicitly or explicitly, is also suggested in recent work to be crucial for sustaining bidders' surplus.
For example, in \cite{BBM-17}, the information structure used to attain the maximal bidder surplus in a first-price auction can be implemented with a particularly simple design in which bidders' signals are essentially independent and symmetric across bidders, while the information structure to minimize the bidders' surplus in an efficient allocation uses correlation across bidders' signals. 
Similarly, \cite{BBX-18} construct a revenue maximizing information structure for the second price auction that uses correlated, bidder specific signals to fine tune beliefs to extract nearly all surplus.
Recent work \citep{KFO-25} also mentions that ``compromising agents' privacy leads to undesirable outcomes (such as full surplus extraction).''
Although these results do not explicitly establish information or signal independence as a necessary condition for maximizing bidder surplus, they leave open a basic question: even in one of the most fundamental auction formats, the second price auction, does requiring bidders' additional information to be independent constrain the seller's ability to shape beliefs and extract surplus? In particular, under ex ante independent signals, can the seller still design an efficient information structure that approximately extracts bidders’ surplus, or does independence impose a fundamental barrier?

Motivated by these questions,\footnote{Beyond the surplus motivation, \wtresedit{information} independence is also practically motivated by privacy concerns in platform markets such as ad auctions, where the platform sends each bidder a bidder-specific score derived from sensitive user data. If a bidder can use its own signal to infer what rivals were shown, then ``private'' disclosure becomes a channel for cross-bidder inference and privacy leakage.} 
we study how \wtresedit{information} independence affects welfare outcomes, with particular emphasis on efficient allocations, in a second price auction where the seller commits to a private private information structure.
We adopt the notion of private private information introduced in \cite{KFO-25}, which requires bidders' information signals to be independent ex ante, \wtedit{so a bidder's signal realization does not reveal information about other bidders' signals}.\footnote{
We adopt the privacy notion proposed by \citet{KFO-25} since it admits a clean instantiation for imposing privacy across bidders  in our multi-bidder (i.e., multi-receiver) setting. 
See \Cref{sec:related-work} for more discussion about other privacy notions.}
Bidders nevertheless share a symmetric (exchangeable) prior over their valuations that may exhibit arbitrary correlation across bidders.
Because signals are independent ex ante, the seller cannot jointly manipulate bidders' beliefs in a coordinated way; 
this limits how aggressively the seller can raise the second-highest bid while keeping the allocation efficient, and it also constrains the extent to which the seller can lower the second-highest bid in the resulting equilibrium to transfer surplus to bidders.

Our first set of results shows that the private private information does not preclude full bidder surplus extraction, even under efficient allocation, and even within a particularly simple symmetric subclass of information structures.
We prove that there exists a private private information structure for which at least one induced Bayes Nash equilibrium is efficient and seller extracts full surplus, so the seller's expected revenue equals the maximum welfare (see \Cref{thm:pripri max rev}). 
In our construction, we restrict attention to a symmetric subclass of private private structures in which bidders share a common signal space, each bidder's signal has the same uniform marginal distribution on this space, and the information structure treats bidders symmetrically.
This restriction is deliberately simple, yet full surplus extraction remains achievable under an efficient allocation.

Our construction is guided by two principles that together guarantee full surplus extraction. 
The first principle is about efficient allocation. 
We design the joint relationship between values and signals so that whenever a bidder receives the highest signal, that bidder must also be the highest valuation bidder in every posterior state that the signal profile deems possible. 
Intuitively, the top signal serves as a certificate of being the efficient winner.
This ensures that the second-price auction allocates the item efficiently in the equilibrium we target, even though bidders do not observe their values.
The second principle aims to eliminate the winner's informational rent. 
Conditional on a bidder winning, we design beliefs so that the winner's posterior expected valuation equals the second highest signal in the realized signal profile. 
Equivalently, 
after observing the signal profile, the expected value advantage of holding the top signal is exactly offset, so the winner's posterior expected value is pinned down by the second highest signal rather than by the winner's own signal.
In a second-price auction, the payment is given by the second highest bid, so in the equilibrium, where we show that bidders would follow the pure strategy of bidding their signal, the winner pays exactly what the second highest signal suggests. Because we align the winner's expected value with that payment, the winner has no expected surplus left to capture. 
The seller therefore collects essentially the entire welfare as revenue.

A key intuition is that the private private constraint restricts only the ex ante dependence across bidders' signals, but it does not prevent the seller from coupling each bidder's signal with the underlying value profile in a way that creates strong competition in equilibrium. 
Our symmetric construction exploits this by choosing a carefully designed signal space and then selecting posterior mappings so that the induced distribution of winner posteriors is compatible with Bayes plausibility. At a high level, we first specify the distribution of the second highest signal conditional on winning, which determines the distribution of the winner's posterior expected value that we wish to implement. We then verify that this target distribution can indeed arise from the modified prior faced by the winner, and we build an information structure that realizes these posteriors while keeping signals independent and symmetric ex ante.


The equilibrium induced by the above private private information that extracts full surplus may not be stable (as there could be multiple best responses for bidders when fixing other bidders' bidding strategies). 
To address this issue, we show that there always exists a private private information that guarantees a {\em strict} Nash equilibrium while still achieving revenue arbitrarily close to the maximum welfare (see \Cref{prop:pripri approx max rev}). 
Under the strict equilibrium, each bidder has a unique best response to the others' equilibrium actions.
As in our full extraction result, we focus on the same symmetric private private class.
The key intuition behind equilibrium strictness is to eliminate flat regions in bidders' incentives. 
In the full extraction construction, some bidders may be indifferent among a range of bids at certain signals, which may make the resulting equilibrium unstable.
We therefore modify the information structure by adding a carefully designed slack that breaks these indifferences. 
Concretely, we perturb posteriors so that for every signal realization, deviating from the intended bid becomes strictly suboptimal, while the perturbation is small enough that the induced allocation remains efficient and the seller's revenue remains arbitrarily close to maximum welfare. 
This gives us a strict equilibrium that achieves near full surplus extraction, showing that ex ante independence does not prevent robust revenue guarantees even under a stringent stable equilibrium requirement.

Having characterized how the seller can extract bidders’ surplus under private private information, we next turn to the opposite extreme and ask whether the same ex ante signal-independence constraint can also support outcomes that are most favorable to bidders, specifically, whether bidder surplus can reach the level of maximal welfare. 
We establish an impossibility result: we show that, in general, under any private private information that admits an efficient equilibrium, bidder surplus cannot equal maximal welfare (\Cref{thm:pripri max bidder surplus}).
At the same time, we fully characterize the necessary and sufficient conditions on the prior value distribution under which bidder surplus can be made arbitrarily close to maximal welfare through some private private information structure. Specifically, we show that this is possible if and only if for every realized value profile, either bidders share a common value or at most one bidder has a positive value (\Cref{thm:pripri max bidder surplus iff}).

This limitation is specific to the private private requirement. 
The fundamental challenge is that ex ante signal independence limits the seller's ability to coordinate bidders' beliefs across bidders.
\wtresedit{In a second-price auction, maximizing bidder surplus requires both efficient allocation and an almost zero second-highest bid: the winner must be willing to win, while the losing bidders must not bid up the price. Under private private information, these two requirements conflict. If two bidders bid positively with positive probability, independence creates overlap events in which both bids are positive, generating positive revenue. If such overlap is ruled out, efficiency generally fails because different bidders must win in different states. As a result, except under the special priors characterized above, private private information cannot make bidder surplus arbitrarily close to maximal welfare, and exact equality is possible only in the common-value case.}
\wtresedit{By contrast, under general information structures, correlated signals can coordinate which bidder is induced to bid aggressively and which bidders are induced to submit low bids across states, so bidder surplus can achieve maximal welfare, as we show in the online appendix.}
This observation stands in contrast to the first-price auction setting studied in \citet{BBM-17}, where the bidder-surplus-maximizing outcome can already be implemented within a private private information structure. 
\wtresedit{The difference is that in first-price auctions the winner pays her own bid, whereas in second-price auctions the winner's surplus is pinned down by rivals' bids; hence bidder-optimal surplus requires precisely the cross-bidder coordination that private private information rules out.}

The above results also raise the following natural question: what other efficient outcomes are achievable under private private information?
We first revisit the prior conditions identified in \Cref{thm:pripri max bidder surplus iff} under which bidders' maximal surplus can approach maximal welfare, and we ask whether, under these priors, every efficient outcome is achievable by some private private information.
We show that for any such prior distribution, every efficient welfare outcome can be implemented by a private private information structure, except the endpoint in which bidder surplus equals maximal welfare (\Cref{thm:iff pri pri efficient}). 
We then turn to another well studied setting, the independent private values (IPV) model. In this setting, revealing full information is itself a private private information, because bidders' values are independent, and it admits an efficient equilibrium. 
We show that every efficient welfare outcome between the seller optimal outcome and the outcome induced by full information is implementable by some private private information (\Cref{thm:ipv pri pri efficient}).

\subsection{Related Work}
\label{sec:related-work}

Our work relates to the growing literature on  information design in auction settings.
The most closely related works are \cite{BBM-17,BBX-18,BHMSW-22}.
As mentioned, \cite{BBM-17} study how general information structures affect outcomes in first-price auctions and characterize the feasible region of (bidder surplus, revenue) outcomes.
In the same spirit, we characterize the feasible region in a second-price auction.
Like ours, \cite{BBX-18,BHMSW-22} study the information design in second-price auctions. 
However, \cite{BHMSW-22} restrict attention to independent private values and study exclusively independent and symmetric information policies, which correspond to a special case of private private structures under independent priors.
Thus, our work is a strict generalization of their work as we allow arbitrarily correlated prior valuation distributions. 
\cite{BBX-18} focus on seller-optimal information policies and construct correlated private signaling schemes that achieve near full surplus extraction.
We show that the seller can achieve exact full surplus extraction even within the private private information structures. 
Additional related work considers variants of (generalized) second-price auctions (see, e.g., \citealp{BS-12,EFGPT-14, BDPZ-22,DTWZ-25}) or other auction formats like all-pay auctions (see, e.g., \citealp{CLW-26}).
However, they either impose different disclosure policies (for instance, public disclosure where all bidders receive the same signal) or analyze settings that differ substantially from ours.

A complementary strand studies the joint design of the mechanism and the information policy, including information design in optimal auctions following \cite{M-81}.
See, for example, \citealp{BP-07,K-20,CY-23,BBDPXZ-25,BHM-26}.
Our work also relates, more broadly, to the literature on studying equilibria (including Bayes (coarse) correlated equilibria) in auction settings, including first-price auctions (see, e.g., \citealp{BS-03,FLN-16,AB-25}), in autobidding (see, e.g., \citealp{DMMZZ-24,B+26}), and in generalized second-price auctions (see, e.g., \citealp{GS-09,CH-13}).
After receiving their private signals, bidders' valuations are interdependent, and thus, our work also relates to auctions with interdependent values (see, e.g., \citealp{MP-04,MP-15,EFFG-18,EFGMM-23}).

Privacy has been extensively studied in both computer science and economics literature. 
For example, the theoretical computer science community has formalized the influential notion of differential privacy (see, e.g., \citealp{D-06,DMNS-06}), which has since been extensively studied in many contexts (see the book \citealp{DR-14} for a comprehensive treatment). 
In economics literature, a variety of privacy notions have been proposed to suit different environments (see, e.g., \citealp{I-20,LM-20,EEM-21,SY-22,AMMO-22,SY-24,KFO-25,SY-25}).
In this work, we adopt the notion of private private information, first formalized and systematically studied in \cite{KFO-25}, because it provides a clean way to model information independence across receivers in a multi-agent environment.\footnote{\cite{ABST-21} also raise the question of which belief distributions are implementable under private private information structures.}
\cite{KFO-25} focus on characterizing the Blackwell-Pareto frontier within the class of private private information structures. By contrast, we study how private private information shapes welfare outcomes in a mechanism design setting, focusing on second-price auctions.
Another closely related privacy notion is introduced in \cite{SY-24}. As they mentioned, a key difference is that in their framework the object that must remain private is exogenously specified (for example, a protected attribute), whereas under private private information the relevant privacy constraint is endogenous, arising from the requirement that receivers' signals be independent of one another.


\section{Preliminary}
\label{sec:pre}

There are $\biddernum$ bidders who compete for an indivisible good in a second-price auction. 
Bidder $i$'s value is denoted by $\val_i\in\valspace\subseteq [0, 1]$ where $\valspace$ is a common discrete value space shared by all bidders.\footnote{Our results and analysis can be readily generalized to continuous valuation.}
The bidders' values are jointly distributed according to a prior probability measure $\valprior\in\Delta(\valprofilespace)$.
Throughout this work, we consider symmetric (i.e., exchangeable) prior $\valprior$ for the bidders' values.\footnote{A distribution over $\reals^\biddernum$ is symmetric if, for all $\valprofile\in \reals^\biddernum$, the probability it assigns to $\valprofile$ is equal to the probability
it assigns to any permutation of $\valprofile$.}


The seller can choose how much information each bidder will have about her own value $\val_i$. 
An information structure is specified by a tuple $\inforstructure = ((\signalspace_i)_{i\in[\biddernum]}, \signalscheme)$ where each $\signalspace_i$ is a measurable signal space for the bidder $i$ 
and $\signalscheme\in\Delta(\valspace^\biddernum \times \signalprofilespace)$ is a joint
probability distribution over the space $\valprofilespace$ of value profiles and the space $\signalprofilespace \triangleq (\signalspace_i)_{i\in[\biddernum]}$ of signal profiles. 

For a fixed information structure $\inforstructure$, the second-price auction is a game of incomplete information, in which bidders' strategies are measurable mappings $\bidstrategy_i: \signalspace_i \rightarrow \Delta(\bidspace)$ from the received private signal to probability measures over bids where $\bidspace = [0, \valUB]$ is the bid space and $\valUB = \max \valspace$. 
We denote by $\bidstrategyspace_i$ the set of strategies for the bidder $i$.
Given a bid profile $\bidprofile = (\bid_i)_{i\in[\biddernum]}$, we shall consider a uniform tie-breaking rule where the allocation rule will be defined as follows
\begin{align*}
    \winprob_i(\bidprofile)
    =
    \begin{cases}
    \dfrac{1}{\left|\arg\max_{j\in[\biddernum]} \bid_j\right|}, & i \in \arg\max_{j\in[\biddernum]} \bid_j,\\[6pt]
    0, & \text{else.}
    \end{cases}
\end{align*}
We use $\secmax(\bidprofile)$ to denote the second highest bid in the bid profile $\bidprofile$. 
Fixing a profile of strategies $\bidstrategyprofile \in\bidstrategyspace = (\bidstrategyspace_i)_{i\in[\biddernum]}$, 
each bidder $i$'s (ex ante) surplus and seller revenue from the auction can be computed as\footnote{\wtedit{In non-confusing contexts, we write ``revenue'' for ex ante expected revenue, ``bidder surplus'' for ex ante expected surplus.}}
\begin{align*}
    \biddersurplus_i(\inforstructure, \bidstrategyprofile)
    & \triangleq
    \mathbb{E}_{(\valprofile, \signalprofile)\sim \signalscheme}
    \expect[\bidprofile\sim \bidstrategyprofile(\cdot \mid \signalprofile)]{\winprob_i(\bidprofile)
    \cdot (\val_i - \secmax(\bidprofile))}~;\\
    \Rev(\inforstructure, \bidstrategyprofile)
    & \triangleq
    \mathbb{E}_{(\valprofile, \signalprofile)\sim \signalscheme}
    \expect[\bidprofile\sim \bidstrategyprofile(\cdot \mid \signalprofile)]{\secmax(\bidprofile)}~.
\end{align*}
We adopt the Bayes Nash equilibrium as our solution concept for the bidders' incomplete information game.\footnote{When the strategy $\bidstrategy_i$ is a
pure strategy, we slightly abuse notation and write $\bidstrategy_i(\signal_i)$ for the bid played with probability $1$.}
\begin{definition}[Bayes Nash equilibrium]
\label{defn:bne}
We say a bidding strategy profile $\bidstrategyprofile$ is a {\em Bayes Nash equilibrium}, or equilibrium for short, under the information structure $\inforstructure$ if and only if the following holds\footnote{\wtedit{When a mixed strategy profile $\bidstrategyprofile$ is used, bids are drawn according to $\bigotimes_{i\in[\biddernum]}\bidstrategy_i(\cdot\mid \signal_i)$ conditional on the signal profile $\signalprofile$. Thus mixed strategies are implemented through independent private randomizations across bidders; no public or correlated randomization device is part of the Bayes Nash equilibrium concept used here.}}
\begin{align*}
    \biddersurplus_i(\inforstructure, \bidstrategyprofile)
    \; \ge \; 
    \biddersurplus_i(\inforstructure, \bidstrategy_i\primed, \bidstrategyprofile_{-i}), \quad 
    \forall \bidstrategy_i\primed \in \bidstrategyspace_i~,\; 
    \forall i\in[\biddernum]~.
\end{align*}
\end{definition}
We say an equilibrium is a strict equilibrium if all the inequalities in \Cref{defn:bne} are the strict inequalities.\footnote{Strict Nash equilibria are known to be strategically stable in the sense of \citet{KM-86}, since each player's equilibrium action is a unique best response, so the equilibrium is robust to small perturbations \citep{KM-86,B-88}.}
Given an information structure $\inforstructure$ and an equilibrium strategy profile $\bidstrategyprofile$, we define the total surplus, i.e., the welfare, under $\inforstructure, \bidstrategyprofile$ as follows:
\begin{align*}
    \totalsurplus(\inforstructure, \bidstrategyprofile)
    \triangleq 
    \mathbb{E}_{(\valprofile, \signalprofile)\sim \signalscheme}
    \expect[\bidprofile\sim \bidstrategyprofile(\cdot \mid \signalprofile)]{\sum\nolimits_{i\in[\biddernum]}\winprob_i(\bidprofile) \cdot \val_i}.
\end{align*}
Note that $\totalsurplus(\inforstructure, \bidstrategyprofile)$ is always upper bounded by the following quantity
\begin{align*}
    \optwelfare \triangleq \expect[\valprofile\sim\valprior]{\max(\valprofile)}~.
\end{align*}
We say that $\inforstructure, \bidstrategyprofile$ is efficient if it achieves maximum welfare $\totalsurplus(\inforstructure, \bidstrategyprofile) = \optwelfare$.

Following \cite{BBM-17}, throughout our analysis, we restrict attention to strategies in which each bidder never bids an amount which is sure to be strictly greater than her value. 
Equivalently, after any signal realization, bidder $i$'s bid is supported on the interval from $0$ up to the highest valuation in bidder $i$'s posterior belief.
This condition is closely related to, though slightly weaker than, excluding weakly dominated strategies: it rules out bids that would strictly overstate value with certainty, while still allowing a bidder to bid exactly their value with positive probability.


One important class of information structures considered in this work is the private private information structures, which we define as follows:
\begin{definition}[Private private information structure \citealp{KFO-25}]
An information structure $\inforstructure$ is {\em private private} if for any $\signalprofile = (\signal_1, \ldots, \signal_\biddernum) \in \signalprofilespace$, we have
\begin{align*}
     \prob{\signalprofile} = \prod\nolimits_{i\in[\biddernum]}\prob{\signal_i}~.
\end{align*}
Namely, the joint law of $(\signal_1, \ldots, \signal_\biddernum)$ is a product measure over each bidder's signal space.
\end{definition}

\section{Maximum Revenue}
\label{sec:pri pri rev}
\label{subsec:pripri maximal rev}


We show that there exists a private private information and an efficient equilibrium in which the seller is able to fully extract bidders' surplus. 
Our results apply to general symmetric priors. 
To streamline the exposition, we focus in this section on the generic case where the maximum value $\max_i \val_i$ variable is nondegenerate. 
The only exception is the boundary case in which $\max_i \val_i$ is a constant value across all value profile realizations in $\supp(\valprior)$, which we defer this case in the online appendix. 
\footnote{This nondegeneracy assumption is used only to rule out the boundary case in which $\max_i \val_i$ is a constant value on all value profile realizations in $\supp(\valprior)$. 
This condition is much weaker than full support and is satisfied by essentially any nontrivial symmetric prior.
In that case, exact full surplus extraction is impossible under {\em any} private private information (while maintaining efficiency),  but for every $\varepsilon>0$ we can construct a private private information that preserves efficiency and achieves revenue at least $\optwelfare-\varepsilon$; \wtedit{see the discussions in \Cref{remark: prior remark} in online appendix.}}
\begin{theorem}
\label{thm:pripri max rev}
There exists an information structure $\inforstructure$ and an efficient equilibrium $\bidstrategyprofile$ such that (i) the information $\inforstructure$ is private private; 
(ii) the seller extracts full surplus, that is, 
$\Rev(\inforstructure, \bidstrategyprofile) = \optwelfare$.
\end{theorem}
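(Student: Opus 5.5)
We construct a symmetric private private structure in which each bidder's signal $\signal_i$ is drawn i.i.d.\ uniform on a common signal space $\constructsignalspace\subseteq[0,1]$, and bidders bid their signal, $\bidstrategy_i(\signal_i)=\signal_i$. The structure will satisfy two properties.
\begin{itemize}
\item[(a)] \textbf{Winner dominance.} Whenever $\signal_i$ is the (unique) highest signal in $\signalprofile$, every value profile in the support of $\signalscheme(\cdot\mid\signalprofile)$ has $\val_i=\max(\valprofile)$.
\item[(b)] \textbf{Zero rent.} Conditional on the full signal profile $\signalprofile$ with $\signal_i$ highest, the winner's expected value equals the second-highest signal:
\begin{align*}
\expect{\val_i\mid \signalprofile}=\secmax(\signalprofile).
\end{align*}
\end{itemize}
Given (a) and (b), efficiency follows from (a), and under truthful signal-bidding the revenue is
\begin{align*}
\expect{\secmax(\signalprofile)}=\expect{\max(\valprofile)}=\optwelfare.
\end{align*}
This is because (b) implies $\expect{\secmax}$ equals the winner's expected value, which equals the maximum value by (a). Ties among signals occur with probability zero if $\constructsignalspace$ is continuous, or are handled by uniform tie-breaking and (a).

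\textbf{Building the structure.} The construction is done in reverse. First fix the marginal $\uniform$ on $\constructsignalspace$, so that the joint law of the order statistics of $N$ i.i.d.\ signals is known. Next, fix the law of the winner identity and of $\max(\valprofile)$.

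By symmetry we randomize the identity of the top bidder uniformly and independently of the values. Let $\winnerpriorCDF$ be the CDF of $\max(\valprofile)$ under $\valprior$, which is nondegenerate by assumption. We need a garbling of $\max(\valprofile)$ into the random variable $\secmax(\signalprofile)$ such that $\secmax(\signalprofile)$ is a mean-preserving contraction of $\max(\valprofile)$. By Strassen/Blackwell, it is enough that the law $\winnerposteriorCDF$ of $\secmax(\signalprofile)$ is a mean-preserving contraction of $\winnerpriorCDF$. We choose $\constructsignalspace$, for example a union of intervals or a rescaled support, so that this holds. The natural choice is to make $\constructsignalspace$ an interval with $\min\constructsignalspace\ge\min$ of the support of $\winnerpriorCDF$, with $\expect{\secmax}$ equal to $\optwelfare$ and the spread controlled. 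Then define $\signalscheme$ in stages:
\begin{itemize}
\item draw $\signalprofile$ i.i.d.;
\item let $i$ be the argmax;
\item draw $\max(\valprofile)$ from the conditional law given $\secmax(\signalprofile)$ provided by the martingale coupling;
\item draw the remaining profile from $\valprior$ conditioned on its maximum, placing the maximum at coordinate $i$ and permuting the others uniformly.
\end{itemize}
One must then check that the $\valprofile$-marginal is $\valprior$. It is, because $\max(\valprofile)$ has law $\winnerpriorCDF$, the rest is drawn from the correct conditional, and the placement is symmetric. Signals are independent by construction, so the structure is private private.

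\textbf{Verifying equilibrium.} Consider bidder $i$ with signal $\signal_i$ who deviates to bid $\bid$. Changing the bid only alters outcomes where $\bid$ lies between $\signal_i$ and the top rival signal $t$.
\begin{itemize}
\item If $t<\signal_i$, bidder $i$ wins anyway, pays $t$, and by (b) has expected value $t$ given the profile. Her interim utility is therefore $0$, so losing by bidding below $t$ costs nothing.
\item If $t>\signal_i$, winning requires bidding above $t$ and paying $t$. Her value is then no larger than the winner's in every state by (a). I plan to design the conditional law of losers' values so that her expected value is at most $t$: pay $t$, and loser expected value at most winner expected value $=$ the second signal, which is $\le t$.
\end{itemize}
So no deviation is profitable, and the equilibrium is weak, consistent with the paper's remark. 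We also check the no-overbidding restriction, $\signal_i\le$ the highest value in the support of her posterior.

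\textbf{Main obstacle.} The hardest step is the mean-preserving-contraction step: showing that one can pick $\constructsignalspace$ and $N$-fold uniform order statistics so that the law of $\secmax(\signalprofile)$ is dominated in the convex order by the law of $\max(\valprofile)$, exactly matching its mean. The convex-order condition alone is not enough, because the coupling also has to be consistent with (a) once losers' values are drawn. My first attempt is to let $\constructsignalspace=[a,c]$ with $\secmax$ affinely rescaled: choose $a,c$ within $[\min\supp\winnerpriorCDF,\max\supp\winnerpriorCDF]$ to match the mean, and use nondegeneracy to show that a sufficiently concentrated interval is dominated. If the interval is not concentrated enough, shrink it, and if needed replace uniform with a transform $\signal\mapsto\psi(\signal)$ of the signal labels.
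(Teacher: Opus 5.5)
Your proposal is correct and is essentially the paper's construction. The paper likewise uses i.i.d.\ uniform signals on a short interval positioned so that $\secmax(\signalprofile)$ has mean $\optwelfare$, and a martingale coupling of $\max(\valprofile)$ with $\secmax(\signalprofile)$; your $\winnerpriorCDF$ coincides with the paper's tie-weighted modified prior by exchangeability. It also draws the remaining coordinates from $\valprior(\cdot\mid\max)$ with the maximum placed at the top-signal bidder, which yields \construcrule{} and \zerorentcondi{} and hence the equilibrium argument of \Cref{lem:suff condi for full suplus extraction}.

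The worry in your last paragraph is unnecessary. Because of that placement, (a) holds automatically, so the convex-order condition is all you need. It already holds without any relabeling transform once the interval length is at most $\tfrac12\,\mathbb{E}\bigl[(\optwelfare-\max(\valprofile))^+\bigr]$, which is positive by nondegeneracy.
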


\wtresedit{It is useful to compare above results with \citet{BHMSW-22}, who also study seller-optimal information disclosure in a second-price auction. Their analysis focuses on an environment with independent-private values (IPV) and restricts each bidder's signal to depend only on her own valuation. Under these restrictions, bidders' signals are independent (and thus are private private).
However, in their setting, full surplus extraction under an efficient allocation is not generally attainable. 
\Cref{thm:pripri max rev} shows that the source of this limitation is not signal independence per se. In our setting, bidders' signals must be independent ex ante, but each signal may be coupled with the entire valuation profile. 
This additional flexibility is sufficient to align the efficient allocation with zero bidder surplus. 
Thus, signal independence alone does not guarantee positive bidder surplus; what matters is whether independence is combined with the stronger restriction that each bidder's information depends only on her own valuation.

In \citet{BBM-17} who considered the first-price auction, independent signals appear in a construction that minimizes winning bids and maximizes bidder surplus; the key force is that a bidder's own payment is her own bid, so information can be designed to make bidders willing to bid low while remaining indifferent to upward deviations. In the second-price auction, by contrast, the winner's payment is determined by rivals' bids. This changes the role of signal independence. The seller does not need to make the winner bid aggressively in order to extract surplus; instead, she needs to align the rival bid that sets the price with the winner's interim valuation. Our construction does precisely this through the \ref{cond:winner} condition defined shortly.}

The equilibrium induced by the information structure constructed in \Cref{thm:pripri max rev} may not be stable. 
We therefore show that there always exists a private private information that admits a strict equilibrium and guarantees seller revenue within $\eps$ of the maximal welfare.


\begin{theorem}
\label{prop:pripri approx max rev}
For any $\varepsilon > 0$, 
there exists an information structure $\inforstructure$ that admits a {\em strict} equilibrium $\bidstrategyprofile$ such that
(i) the information structure $\inforstructure$ is private private; 
(ii) the equilibrium is efficient;
(iii) the seller's revenue is at least $\optwelfare-\varepsilon$, that is, $\Rev(\inforstructure, \bidstrategyprofile) \ge \optwelfare-\varepsilon$.
\end{theorem}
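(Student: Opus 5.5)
We will build on the symmetric construction behind \Cref{thm:pripri max rev} and perturb it to remove indifferences. Recall its two design principles: (a) the holder of the top signal is the efficient winner in every value profile consistent with the signal profile; (b) conditional on winning, the winner's posterior expected value equals the second highest signal. Under (a)--(b), bidding one's signal is an equilibrium, but it is not strict. A bidder whose signal is not the top one is indifferent over a range of losing bids. A winner is indifferent among any bids exceeding the second highest signal. So we need slack in both directions.

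\textbf{Step 1: recalibrate the winner's posterior.} Fix $\delta>0$ small. We modify the target in (b) so that a winner holding signal $\signal_i$ has posterior expected value $\signal_i$ itself, i.e. the top signal. Let $\signalspace=[0,1]$ with uniform marginals, and take the winner's posterior mean to be a strictly increasing function $\phi$ of her own signal, with $\phi(\signal)\approx \signal$. The bid function is then $\bidstrategy(\signal)=\phi(\signal)$.

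\textbf{Step 2: verify strictness.} Fix the others playing $\bidstrategy=\phi$. A bidder with signal $\signal$ who bids $b$ faces the interim payoff
\[
U(b\mid \signal)=\mathbb{E}\big[(\val_i-\max_{j\ne i}\phi(\signal_j))\,\indicator{\max_{j\neq i}\phi(\signal_j)<b}\mid \signal_i=\signal\big].
\]
The threshold $\max_{j\ne i}\phi(\signal_j)$ has a continuous, strictly increasing CDF because signals are independent with atomless marginals; this is exactly where the private private structure helps. So $U$ is maximized uniquely at $b$ equal to the conditional expectation $\mathbb{E}[\val_i\mid \signal_i=\signal,\max_{j\ne i}\phi(\signal_j)=b]$, whenever that expectation is strictly increasing in $b$ through the point $\phi(\signal)$. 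We therefore design the information structure so that, conditional on $\signal_i=\signal$ and the highest rival signal equal to $t$, bidder $i$'s expected value is
\begin{itemize}
\item above $\phi(t)$ when $t<\signal$ (she is the winner),
\item below $\phi(t)$ when $t>\signal$ (she is a loser whose expected value is pushed down by a margin of order $\delta$).
\end{itemize}
This single-crossing at $t=\signal$ makes $b=\phi(\signal)$ the unique best response up to measure zero. Strictness in the ex ante sense of \Cref{defn:bne} then follows, since any deviating strategy differs on a positive-measure set.

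\textbf{Step 3: feasibility.} We must show that such posteriors arise from a joint $\signalscheme$ with independent uniform marginals and efficient allocation. We follow the paper's recipe: specify the winner's posterior distribution given the top signal, and the loser posteriors. Then check Bayes plausibility against the modified prior of the winner (the distribution of $\max\val$) and the loser's conditional value distribution. The slack $\delta$ appears as a mean-preserving adjustment: winners' posteriors are shifted up slightly above $\phi$ of the second signal, and losers' down. This is feasible provided $\max_i\val_i$ is nondegenerate; in the degenerate case we use the online-appendix approach.

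\textbf{Step 4: revenue.} Efficiency holds by principle (a). Revenue equals $\mathbb{E}[\phi(\text{second signal})]$. Bidder surplus equals the winner's expected gap between her posterior and this price, which is $O(\delta)$. Choosing $\delta\le\varepsilon$ gives $\Rev\ge\optwelfare-\varepsilon$.

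The main obstacle is Step 3: realizing the single-crossing posterior requirement simultaneously for winners and losers under a product marginal constraint. My first attempt would be to mix the exact construction of \Cref{thm:pripri max rev} with probability $1-\delta$ and a "tilted" component with probability $\delta$. That component has the same uniform marginals but correlates each loser's signal negatively with her own value within the efficient-winner certificate. I would then check that the mixture preserves independence of signals, which holds since both components have product uniform marginals only if the mixing is done per signal profile rather than by a public coin.
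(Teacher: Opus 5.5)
There is a genuine gap, and it is in Step 3, which you yourself flag as the main obstacle: it is never carried out, and the concrete choices you make elsewhere cannot be realized. Your Step 2 sufficient condition is fine in spirit; it is essentially what \Cref{lem:suff condi for approx full suplus extraction} encodes. Step 1, however, is incompatible both with Step 4 and with Bayes plausibility.

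Suppose the winner's posterior mean is $\phi$ of her own top signal, with signals uniform on $[0,1]$ and $\phi(s)\approx s$. Then her rent is about $\mathbb{E}[s_{(1)}-s_{(2)}]=1/(N+1)$, not $O(\delta)$. Moreover, under Winner-Dominance the winner's posterior mean is $\mathbb{E}[\max_i v_i\mid s]$. Its law must therefore be a mean-preserving contraction of the law of $\max_i v_i$, and in particular supported in the convex hull of that support. A law spread over most of $[0,1]$ already fails this when $\max_i v_i\in\{0.4,0.6\}$. So ``feasible provided $\max_i v_i$ is nondegenerate'' is false as stated.

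The idea you are missing is localization. The paper works on an interval $(\underline{s},\bar{s})$ of length $L\le\varepsilon$. It writes down an explicit target $y(s_i,\secmax(s))$, equal to $\secmax(s)$ plus an $O(\varepsilon)$ slack that vanishes as $\secmax(s)\to s_i$. It picks $\underline{s}$ so that the law of $y$ has mean $\hat v=\mathbb{E}[\max_i v_i]$. It then shrinks $L$ until that law, which tends to $\delta_{\hat v}$, is an MPC of the modified prior. Finally, it realizes this law by a martingale kernel $Q_t$ for $v_i$, draws $v_{-i}$ from $\hat\lambda^{(i)}(\cdot\mid v_i)$, and assembles the local pieces as in \Cref{lem:global-assembly}.

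Your fallback in Step 3 does not work either. Any component mixed into the exact construction with the same product signal marginal must have value marginal $\lambda$. For efficiency it must also satisfy Winner-Dominance, so its winner posterior also averages to $\optwelfare$. Hence a ``mean-preserving'' adjustment cannot move winners' posteriors up on average.

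Concretely, take bids equal to signals. The mixture is then efficient, with revenue $\mathbb{E}[\secmax(s)]=\optwelfare$ (the same as in the exact construction), hence zero rent. A strict equilibrium, by contrast, needs strictly positive interim rent, since truthful bidding must strictly beat a bid that never wins. What must move is the price: the interval (or bid map) must be re-centered. The paper does this by choosing $\underline{s}$ via $\underline{s}+K(L,\varepsilon)=\hat v$.

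Your diagnosis of the indifference is also off. In the exact construction, upward deviations are already strictly unprofitable, because a loser has $x_i\le x_j=\secmax(s)<\max(s)$ almost surely. The only flatness is downward, and it comes from zero winner rent. So tilting losers attacks the wrong side. In the paper the loser condition comes for free from Winner-Dominance ($x_i\le x_j<y$). That is exactly why the winner's slack must vanish at the diagonal, rather than losers being pushed down by a fixed margin.
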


We next introduce a structured subclass of private
private information structures that will serve as the basis for our constructions. Since bidders are
ex ante identical, it is natural to focus on information structures that treat bidders symmetrically.
In particular, we restrict attention to private private structures with a common signal space and
identical uniform marginals, together with invariance to relabeling bidders. 
We now formalize this notion.
\begin{definition}[Symmetric private private information structure]
\label{def: symmetric private private information}
We say a private private information structure $\inforstructure = ((\signalspace_i)_{i\in[\biddernum]}, \signalscheme)$ is {\em symmetric} if it satisfies the following properties: 
\begin{enumerate}
    \item {\em \textbf{Common signal space.}} For all $i, j\in[\biddernum]$, $\signalspace_i \equiv\signalspace_j$.
    \item {\em \textbf{Uniform marginals.}} For each bidder $i\in[\biddernum]$, the marginal distribution of bidder $i$'s signal $\signal_i$ under $\signalscheme$ is a uniform distribution.
    \item {\em \textbf{Permutation invariant.} }
    For every permutation $\permutation$ of $[\biddernum]$ and every $(\valprofile, \signalprofile) \in \valprofilespace \times \symsignalprofilespace$,\footnote{
    Slightly abusing the notation, when the bidders share a common signal space $\signalprofilespace$, we use  $\symsignalprofilespace=\signalprofilespace\times \cdots \times \signalprofilespace$ to denote the signal profile space.}
    $\signalscheme(\signalprofile, \valprofile) = \signalscheme(\permutation(\signalprofile), \permutation(\valprofile))$
    where $\permutation$ acts componentwise: $(\permutation(\valprofile))_i = \val_{\permutation(i)}$ and $(\permutation(\signalprofile))_i = \signal_{\permutation(i)}$.
\end{enumerate}
\end{definition}
This class of information structures greatly simplifies the design problem in a concrete way. Once the common signal space is fixed, identical uniform marginals together with ex ante independence pin down the distribution of signal profiles, and permutation invariance ensures that the information structure treats bidders symmetrically.\footnote{
\wtedit{Private private information with uniform marginals also appears in \citet{KFO-25}, who show that, for the purpose of studying Blackwell-Pareto optimality, one can reparameterize signals so that each bidder's signal is uniformly distributed (up to Blackwell equivalence). 
However, in our setting, they do not by themselves ensure efficiency or implement any particular welfare split. Achieving a desired outcome requires a carefully designed coupling between values and signals.}}
This focus is also natural in our environment, since bidders are ex ante identical and there is no reason for the seller to differentiate among them. Importantly, this parsimonious class remains rich enough for our purposes: as our subsequent construction shows, the seller can still achieve full surplus extraction within it.

With this symmetric private private class in place, we now introduce a structural condition that will serve as a common building block in our main constructions. 
The condition relates the signal ranking to the value ranking in the following way: whenever a bidder receives the highest signal, that bidder must be the highest value bidder in every valuation profile that remains possible under the corresponding posterior. 
This requirement will be imposed in both our full surplus extraction construction and our approximate full surplus extraction construction, as it guarantees efficiency of
the induced allocation and provides a disciplined starting point for controlling bidders' interim
values and surplus. 
We formalize this condition next.
\begin{definition}[{\construcrule} rule]
    \label{defn:construction rule}
    Let $\posteriorBeliefDist: \signalprofilespace \to \Delta(\valprofilespace)$ denote the posterior belief given a realized signal profile. 
    We define the following {\em \construcrule} rule:
    For any realized signal profile $\signalprofile$ and any bidder~$i$, if bidder~$i$ receives the highest signal (i.e., $\signal_i = \mymax(\signalprofile)$), then bidder~$i$ must hold the highest valuation in any state within the posterior support.
    That is,
    \begin{align}
        \label{eq:construction rule}
        \text{if } 
        \signal_i = \mymax(\signalprofile),
        \text{ then }
        \val_i = \mymax(\valprofile), 
        \quad \text{for all } 
        \valprofile \in \supp(\posteriorBeliefDist(\signalprofile)), \signalprofile\in\signalprofilespace, i\in[\biddernum]~.
        \tag{\construcrule}
    \end{align}
\end{definition}
When an information structure satisfies the condition \ref{eq:construction rule}, it implies that when the bidders adopt the pure strategy of bidding their signal, then the seller can guarantee the efficient allocation as the item will be always allocated to the bidder who has the highest value.

Our next requirement in our construction is the following \ref{cond:winner} rule. 
For any realized signal profile $\signalprofile$, let $\expectedVal_i(\signalprofile) \triangleq \expect{\val_i \mid \signalprofile}$ denote bidder~$i$'s expected valuation given the signal profile and $\expectedVal(\signalprofile) = (\expectedVal_i(\signalprofile))_{i\in[\biddernum]}$ denote the corresponding vector of expected valuations. 
\begin{definition}[{\zerorentcondi} rule]
We define the following {\em {\zerorentcondi}} rule:
For any realized signal profile $\signalprofile$ and any bidder~$i$, if bidder~$i$ receives the highest signal (i.e., $\signal_i = \mymax(\signalprofile)$), then bidder~$i$'s posterior expected value must equal the second-highest bid.
That is,
\begin{align}
    \label{cond:winner}
    \tag{\zerorentcondi}
    \text{if }
    \signal_i = \mymax(\signalprofile), 
    \text{ then }
    \expectedVal_i(\signalprofile) = \secmax(\signalprofile)~,
    \quad \text{for all } \signalprofile\in\signalprofilespace, i\in[\biddernum]~.
\end{align}
\end{definition}
In the following \Cref{subsec:an illustrative example}, we present a product-prior example and also intuitions to illustrate how we construct a symmetric private private information that satisfies \ref{eq:construction rule} and \ref{cond:winner}, so that the seller can extract the full surplus. 
We then formalize these intuitions in \Cref{subsec:maximum rev construction} by giving our construction for general prior distributions. 
Finally, in \Cref{subsec:approx maximum rev construction}, we generalize these intuitions to design a private private information that admits a \wtedit{strict} efficient equilibrium to allow the seller to extract almost the full surplus.

\subsection{An Illustrative Example}
\label{subsec:an illustrative example}

\begin{figure}[htbp]
\centering
\resizebox{0.75\textwidth}{!}{


\begin{subfigure}[t]{0.40\linewidth}
\centering

\begin{tikzpicture}[
    x=0.82cm,
    y=0.82cm,
    line join=round,
    panel/.style={draw=black!75,line width=0.9pt},
    diag/.style={draw=black!75,line width=1.0pt}
]

\def\L{5.0}

\def\yAxisX{-0.55}
\def\xAxisY{-0.48}

\fill[gray!15] (0,0)--(0,\L)--(\L,\L)--cycle;
\fill[gray!50] (0,0)--(\L,0)--(\L,\L)--cycle;

\draw[panel] (0,0) rectangle (\L,\L);
\draw[diag] (0,0)--(\L,\L);

\node at (\yAxisX,\L) {$1$};
\node at (\yAxisX,{\L/2}) {$\val_2$};

\node at (\yAxisX,\xAxisY) {$0$};
\node at ({\L/2},\xAxisY-0.1) {$\val_1$};
\node at (\L,\xAxisY) {$1$};

\node at (1.45,4.15) {$\val_2>\val_1$};
\node at (1.35,3.28) {local piece};
\node at (1.65,2.55) {$\pi^{(2)}$};

\node at (3.65,2.05) {$\val_1>\val_2$};
\node at (3.60,1.22) {local piece};
\node at (3.35,0.52) {$\pi^{(1)}$};

\end{tikzpicture}

\caption{Value space}
\label{fig:local-coupling-value}
\end{subfigure}
\hspace{4pt}
\begin{subfigure}[t]{0.40\linewidth}
\centering
\begin{tikzpicture}[
    x=0.82cm,
    y=0.82cm,
    line join=round,
    panel/.style={draw=black!75,line width=0.9pt},
    diag/.style={draw=black!75,line width=1.0pt}
]

\def\L{5.0}

\def\yAxisX{-0.55}
\def\xAxisY{-0.48}

\fill[gray!15] (0,0)--(0,\L)--(\L,\L)--cycle;
\fill[gray!50] (0,0)--(\L,0)--(\L,\L)--cycle;

\draw[panel] (0,0) rectangle (\L,\L);
\draw[diag] (0,0)--(\L,\L);

\node at (\yAxisX,\L) {$1$};
\node at (\yAxisX,{\L/2}) {$\signal_2$};

\node at (\yAxisX,\xAxisY) {$0.5$};
\node at ({\L/2},\xAxisY-0.1) {$\signal_1$};
\node at (\L,\xAxisY) {$1$};

\node at (1.75,4.20) {$\signal_2>\signal_1$};
\node[align=center] at (1.55,3.18)
  {bidder 2 has\\ top signal};

\node at (3.60,2.05) {$\signal_1>\signal_2$};
\node[align=center] at (3.45,1.05)
  {bidder 1 has\\ top signal};

\end{tikzpicture}
\caption{Signal space}
\label{fig:local-coupling-signal}
\end{subfigure}


}
\caption{
Local coupling by rank. Panel (a) partitions the value space according to which bidder has the higher value. Panel (b) gives the corresponding partition of the signal space. The local piece $\pi^{(1)}$ matches the region $\{\val_1>\val_2\}$ with $\{\signal_1>\signal_2\}$, while $\pi^{(2)}$ matches $\{\val_2>\val_1\}$ with $\{\signal_2>\signal_1\}$. Thus, the coupling satisfies \ref{eq:construction rule}.
}
\label{fig:local-coupling}
\end{figure}
\zsedit{Although our primary analysis focuses on a discrete valuation space, 
our results
extend naturally to continuous settings. 
We begin by presenting an illustrative example with continuous valuations. For the remainder of the discussion, however, we retain the discrete assumption for expositional simplicity.}

Consider two bidders whose valuations are drawn independently from a uniform distribution, i.e., $(\val_1, \val_2) \sim \uniform([0,1]^2)$. 
By definition, we know that $\optwelfare = \expect{\val_1 \vee \val_2} = \sfrac{2}{3}$.
Consider the following private private information structure $\inforstructure = ((\signalspace_i)_{i=1,2}, \signalscheme)$, where the signal support is $\signalspace_1 = \signalspace_2 = [1/2, 1]$. The joint probability distribution $\signalscheme(\valprofile, \signalprofile)$ is defined as follows:
\begin{align*}
    \signalscheme(\val_1, \val_2, \signal_1, \signal_2) = 
    \begin{cases} 
        \displaystyle \frac{2}{\val_1(1-\signal_2)} \cdot \indicator{\frac{1}{2} \le \signal_2 \le \signal_1 \le 1, \enskip \signal_2 \le \frac{\val_1+1}{2}} & \text{ if } \val_1 > \val_2~; \\[15pt]
        \displaystyle \frac{2}{\val_2(1-\signal_1)} \cdot \indicator{\frac{1}{2} \le \signal_1 \le \signal_2 \le 1, \enskip \signal_1 \le \frac{\val_2+1}{2}} & \text{ if } \val_2 > \val_1.
    \end{cases}
\end{align*}
The case where $\val_1 = \val_2$ occurs with probability zero and can be defined arbitrarily. 
Next, we verify that $\inforstructure$ is indeed a symmetric private private information structure. 
For any $\signal_1 > \signal_2$, 
we have:
\begin{align*}
    \int_{\valprofile}\signalscheme(\val_1, \val_2, \signal_1, \signal_2) \; \dd \valprofile
    & = \int_{2\signal_2 -1}^{1} \int_{0}^{\val_1} \frac{2}{\val_1(1-\signal_2)} \, \dd \val_2 \, \dd \val_1 
    = 4\\ 
    & = 
    \int_{\signal_1}\int_{\valprofile}\signalscheme(\val_1, \val_2, \signal_1, \signal_2) 
    \; \dd \valprofile \dd \signal_1
    \cdot
    \int_{\signal_2}\int_{\valprofile}\signalscheme(\val_1, \val_2, \signal_1, \signal_2) 
    \; \dd \valprofile \dd \signal_2~.
\end{align*}
By symmetry the same holds when $\signal_2 > \signal_1$. Thus, it is a private private information.
As both $\signal_1, \signal_2$ have uniform marginals, the constructed information structure is also symmetric.

\zsedit{As illustrated in \Cref{fig:local-coupling}, the construction defines the coupling locally on the two rank regions: $\pi^{(1)}$ couples $\{\val_1>\val_2\}$ with $\{\signal_1>\signal_2\}$, while $\pi^{(2)}$ couples $\{\val_2>\val_1\}$ with $\{\signal_2>\signal_1\}$. By construction, the information structure $\inforstructure$ satisfies \ref{eq:construction rule}. Observe that $\signal_1 > \signal_2$ implies $\val_1 > \val_2$; otherwise, $\signalscheme(\val_1, \val_2, \signal_1, \signal_2) = 0$.}
Also $\inforstructure$ satisfies \ref{cond:winner}. Given a signal profile where $\signal_1 > \signal_2$, we obtain:
\begin{align*}
    \expect[\signalscheme]{ \val_1 \mid (\signal_1, \signal_2)}  
    = \int_{2\signal_2 - 1}^{1} \val_1 \int_{0}^{\val_1} \frac{1}{4} \cdot \frac{2}{\val_1(1-\signal_2)} \, \dd \val_2 \, \dd \val_1 = \signal_2~, \quad 
    \expect[\signalscheme]{ \val_2 \mid (\signal_1, \signal_2)} 
    = \frac{\signal_2}{2}~.
\end{align*}
By symmetry, when $\signal_2 > \signal_1$, we obtain $\expect[\signalscheme]{ \val_1 \mid (\signal_1, \signal_2)} = \signal_1 / 2$ and $\expect[\signalscheme]{ \val_2 \mid (\signal_1, \signal_2)} = \signal_1$. Since the event $\signal_1 = \signal_2$ has measure zero, condition \ref{eq:construction rule} is satisfied almost surely.
In the next subsection, we demonstrate that for any information structure satisfying \ref{cond:winner} and \ref{eq:construction rule}, $\bidstrategy_i(\signal_i) = \signal_i$ is an equilibrium. In this equilibrium, bidders have zero surplus. Furthermore, because condition \ref{eq:construction rule} ensures that the bidder with the highest signal also holds the highest valuation, the allocation is efficient. Consequently, the seller extracts the full surplus, achieving $\Rev(\inforstructure, \bidstrategy) = \optwelfare$.

We briefly discuss the intuition derived from this example, which serves as a foundation for the subsequent analysis. 
First, observe that the information structure $\signalscheme$ can be decomposed into two ``local'' probability distributions, $\signalscheme^{(1)}$ and $\signalscheme^{(2)}$, corresponding to the disjoint events $\{\val_1 > \val_2\}$ and $\{\val_2 > \val_1\}$, respectively. 
On each event, the structure is designed so that the bidder with the higher valuation also receives the higher signal and therefore wins.
This piecewise construction is the key to satisfy condition \ref{eq:construction rule}, which guarantees allocation efficiency. 
Consequently, the design of the local distribution $\signalscheme^{(1)}$ 
amounts to specifying a joint distribution supported on the product set where bidder $1$ is the winner in both dimensions:
$\{\valprofile: \val_1 = \mymax(\valprofile)\} \times \{\signalprofile: \signal_1 = \mymax(\signalprofile)\}$ (and analogously for $\signalscheme^{(2)}$).
The central challenge, however, lies in designing these local distributions to ensure that the higher-valuation bidder consistently receives the higher signal, while simultaneously making the bidders' expected surplus to zero. 
To achieve this, we impose the binding constraint \ref{cond:winner}. This condition ensures that for any realized signal profile, the winner's expected valuation coincides exactly with the second-highest signal (the payment). 
Finally, our analysis relies on a fundamental observation: when a symmetric private private information structure satisfies \ref{cond:winner}, once its signal space is fixed, the distribution of bidder $i$'s expected valuation conditional on winning (i.e., receiving the highest signal) will be then determined.

\subsection{Towards Maximum Revenue}
\label{subsec:maximum rev construction}

In this subsection, we prove \Cref{thm:pripri max rev}.
In particular, we first argue that under any information structure satisfying \ref{eq:construction rule} and \ref{cond:winner}, there exists an efficient equilibrium in which bidders' surplus is $0$ (see \Cref{lem:suff condi for full suplus extraction}). 
We then show that there exists a symmetric private private information structure satisfying these two conditions (see \Cref{prop:sym pri-pri rev max}), and this existence result is established via two construction lemmas (\Cref{lem:local-coupling} and \Cref{lem:global-assembly}).

\begin{lemma}
\label{lem:suff condi for full suplus extraction}
\wtedit{Consider an information structure whose signal-profile marginal is a product of atomless marginal distributions. Suppose that \ref{eq:construction rule} and \ref{cond:winner} hold for almost every signal profile with a unique highest signal.}
Then the bidders following the pure strategy of bidding their signals ($\bidstrategy_i(\signal_i) = \signal_i$) is an efficient equilibrium and the bidders' surplus is zero.
\end{lemma}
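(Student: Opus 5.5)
Under truthful bidding $\bidstrategy_j(\signal_j)=\signal_j$, the plan is to establish three things in order: efficiency, zero surplus, and the equilibrium property.

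\emph{Efficiency and zero surplus.} Because the signal-profile marginal is a product of atomless marginals, ties among signals have probability zero. Hence almost surely there is a unique top signal, and both \ref{eq:construction rule} and \ref{cond:winner} hold at the realized profile. By \ref{eq:construction rule}, the top-signal bidder $i$ satisfies $\val_i=\mymax(\valprofile)$ on the whole posterior support. So the item always goes to a highest-value bidder, and welfare equals $\optwelfare$. Bidder $i$ wins exactly when $\signal_i=\mymax(\signalprofile)$ and pays $\secmax(\signalprofile)$. Conditioning on $\signalprofile$, her expected utility is $\expectedVal_i(\signalprofile)-\secmax(\signalprofile)=0$ by \ref{cond:winner}. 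Integrating over profiles gives $\biddersurplus_i=0$ for every $i$.

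\emph{Equilibrium.} I will check interim optimality for almost every $\signal_i$. Others bid their signals, so a deviation to a bid $b$ (possibly mixed) wins exactly when $\max_{j\ne i}\signal_j<b$; ties have probability zero. By independence of signals, conditioning on $\signal_i$ leaves $\signal_{-i}$ distributed by its product marginal. The interim payoff from bidding $b$ is therefore
\[
U_i(b\mid\signal_i)=\mathbb{E}\bigl[(\expectedVal_i(\signalprofile)-\max_{j\ne i}\signal_j)\,\indicator{\max_{j\ne i}\signal_j<b}\,\bigm|\,\signal_i\bigr].
\]
The idea is to show that the integrand $\expectedVal_i(\signalprofile)-\max_{j\neq i}\signal_j$ is $0$ where $\max_{j\ne i}\signal_j<\signal_i$ and $\le 0$ where $\max_{j\ne i}\signal_j>\signal_i$. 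Then $U_i(b\mid\signal_i)\le 0=U_i(\signal_i\mid\signal_i)$ for every $b$, and mixing over $b$ cannot help either.

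The first region is exactly \ref{cond:winner}. In the second region, some rival $k$ holds the top signal $\signal_k$, and $\max_{j\ne i}\signal_j=\signal_k$. By \ref{eq:construction rule}, $\val_k\ge\val_i$ on the posterior support, so $\expectedVal_i(\signalprofile)\le\expectedVal_k(\signalprofile)$. By \ref{cond:winner} applied to $k$, $\expectedVal_k(\signalprofile)=\secmax(\signalprofile)\le\signal_k$. Hence $\expectedVal_i(\signalprofile)-\signal_k\le 0$, as needed.

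\emph{Main obstacle.} The delicate part is measure-theoretic rather than conceptual. The hypotheses hold only for almost every profile, so I must argue that for almost every $\signal_i$ the conditional law of $\signal_{-i}$ still assigns full measure to the good profiles; Fubini together with the product structure gives this. I also need to confirm that truthful bids respect the paper's restriction of never bidding above the highest posterior value. On the event that $i$ wins, $\signal_i\ge\secmax(\signalprofile)=\expectedVal_i(\signalprofile)$, which by itself does not give the bound. I would first try to extract the bound from the construction (signals lie in $[0,\valUB]$) or note that it is needed only for the strategy class. If this fails, I would remark that the deviation argument above covers all bids in $\bidspace$ regardless, so the equilibrium conclusion is unaffected.
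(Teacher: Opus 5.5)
Your proof is correct and follows essentially the same route as the paper's. Both arguments obtain the loser bound $x_i(s)\le x_k(s)=\secmax(s)\le s_k$ from Winner-Dominance and No-Rent Winner, observe that the truthful winning region carries zero payoff, conclude that no upward or downward deviation is profitable, and get efficiency from Winner-Dominance. Your single pointwise sign condition on the interim integrand, together with your explicit handling of null sets, ties and mixed deviations, is a tidier and more careful packaging of the same argument rather than a different one.
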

The proof of the above lemma, as well as the remaining omitted proofs, is deferred to \wtedit{the appendix}.
Intuitively, \ref{cond:winner} binds the winner's expected valuation at every realized signal profile to exactly match the second-price payment.
So the winner earns zero surplus and the seller extracts the full surplus.
As we show in the proof, \ref{eq:construction rule} and  \ref{cond:winner} can jointly imply the following condition: 
\begin{align}
    \label{cond:loser}
    \text{if }
    \signal_i < \mymax(\signalprofile), 
    \text{ then }
    \expectedVal_i(\signalprofile) \le \secmax(\signalprofile)~,
    \quad \text{for all } \signalprofile\in\signalprofilespace, i\in[\biddernum]~.
\end{align}
\wtedit{\ref{cond:winner}, together with Eqn.~\eqref{cond:loser}, rules out profitable upward deviations by any losing bidder: to overtake the current winner, a bidder must bid above $\mymax(\signalprofile)$ and would then pay (at least) $\mymax(\signalprofile)$, while her expected valuation is bounded above by $\secmax(\signalprofile)$.}
Downward deviations are similarly unattractive: 
\wtedit{since the equilibrium payoff is already zero, shading one's bid can only reduce the chance of winning without generating profitable gain.}
Although \Cref{lem:suff condi for full suplus extraction} imposes strong pointwise constraints on the expected valuations of all bidders across the entire signal profile space, we later show that a private private information structure satisfying these conditions does exist and can be explicitly constructed.

With \Cref{lem:suff condi for full suplus extraction}, our design problem reduces to constructing an information structure that simultaneously satisfies the binding equalities in \ref{cond:winner} and \ref{eq:construction rule}. 

\begin{proposition}
\label{prop:sym pri-pri rev max}
There exists a symmetric private private information structure such that it satisfies the conditions \ref{eq:construction rule} and \ref{cond:winner}.
\end{proposition}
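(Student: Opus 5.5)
The plan is to generalize the two-bidder example of \Cref{subsec:an illustrative example}. I will \emph{first draw the signals}, i.i.d.\ uniform on a common interval, and then draw the value profile conditionally on the signal profile. Drawing signals first makes the private private, common-space and uniform-marginal properties automatic. The work then reduces to choosing the conditional law of $\valprofile$ given $\signalprofile$ so that three things hold: (a) the value marginal is exactly $\valprior$; (b) \ref{eq:construction rule} holds; (c) \ref{cond:winner} holds.

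\textbf{Step 1 (local coupling by rank).} Let $\signalspace=[c,c+\Length]$ with $c,\Length$ to be chosen, and draw $\signal_1,\dots,\signal_\biddernum$ i.i.d.\ uniform on $\signalspace$. Ties in signals have probability zero, so a.s.\ there is a unique top bidder $i$; by exchangeability, $\prob{i \text{ is top}}=1/\biddernum$. On this event, let $t=\secmax(\signalprofile)$. Its law $\winnerposteriorCDF$ is the same for every $i$, is atomless, and is supported on $\signalspace$.

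On the value side, let $\winnerpriorCDF$ be the law of $\max(\valprofile)$ under $\valprior$. Assign the ``value-winner'' index uniformly among the bidders who attain the maximum. By symmetry of $\valprior$, each bidder is the value-winner with probability $1/\biddernum$, and conditional on bidder $i$ being the value-winner, $\max(\valprofile)$ still has law $\winnerpriorCDF$.

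The coupling then works as follows. Given that $i$ has the top signal and $\secmax(\signalprofile)=t$, draw a value level $w$ from a kernel $K(\cdot\mid t)$. Then draw $\valprofile$ from $\valprior$ conditioned on bidder $i$ being the value-winner and $\max(\valprofile)=w$. This gives \ref{eq:construction rule} immediately: every profile in the posterior support has $\val_i=\max(\valprofile)$. Permutation invariance follows because the whole recipe is relabeling-equivariant.

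\textbf{Step 2 (reduction to a mean-preserving contraction).} Two properties of $K$ do the rest.
\begin{itemize}
\item Mixing $K(\cdot\mid t)$ over $t\sim \winnerposteriorCDF$ must reproduce $\winnerpriorCDF$. Since the value-winner index and the top-signal index are both uniform, this makes the value marginal exactly $\valprior$.
\item $\int w\,K(\dd w\mid t)=t$ for every $t$. This is precisely \ref{cond:winner}: the winner's expected value is $\expectedVal_i(\signalprofile)=t=\secmax(\signalprofile)$.
\end{itemize}
By Strassen's theorem, such a martingale kernel exists if and only if $\winnerposteriorCDF$ is a mean-preserving contraction of $\winnerpriorCDF$. (The loser inequality \eqref{cond:loser} then comes for free, since $\val_j\le\val_i$ on the support; it is not needed for the proposition anyway.)

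\textbf{Step 3 (choosing $c,\Length$; the main obstacle).} Let $m=\optwelfare$ be the mean of $\winnerpriorCDF$. By nondegeneracy of $\max_i\val_i$, $\min\supp\winnerpriorCDF<m<\max\supp\winnerpriorCDF$.
\begin{itemize}
\item For any fixed $\Length$, choose $c$ so that the mean of $\winnerposteriorCDF$ equals $m$. This is an affine shift of the distribution of the second order statistic given the top.
\item Then shrink $\Length$. The mean-preserving contraction condition is $\int_{-\infty}^x \winnerposteriorCDF\le \int_{-\infty}^x \winnerpriorCDF$ for all $x$, with equality at $+\infty$ (automatic from equal means). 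Because $\winnerpriorCDF$ is nondegenerate and discrete, $\int_{-\infty}^x\winnerpriorCDF$ is convex, equals $0$ only for $x\le\min\supp \winnerpriorCDF$, and dominates $(x-m)^+$ with strict inequality on a neighborhood of $m$.
\item Meanwhile $\int_{-\infty}^x\winnerposteriorCDF$ vanishes for $x\le c$, equals $x-m$ for $x\ge c+\Length$, and is at most $\Length$ in between.
\end{itemize}
So for $\Length$ small enough that $[c,c+\Length]$ lies in the interior of the support and $\Length$ is below the positive gap near $m$, the inequality holds everywhere.

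Verifying this dominance carefully is the step I expect to need the most care. It is exactly where nondegeneracy is used: if $\max_i\val_i$ were constant, $\winnerpriorCDF$ would be a point mass and no atomless $\winnerposteriorCDF$ could be a contraction of it. Once the dominance holds, Strassen gives $K$, and Steps 1–2 yield the desired symmetric private private structure.
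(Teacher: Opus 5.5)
Your proposal is correct and is essentially the paper's proof. Your signals-first recipe yields exactly the paper's mixture $\frac{1}{N}\sum_i \pi^{(i)}$ of local couplings, your law of $\max(v)$ coincides by exchangeability with the paper's tie-weighted modified prior of $v_i$, and your Strassen kernel followed by drawing $v_{-i}$ conditionally is the paper's martingale coupling $Q_t$ combined with its conditional $\hat\lambda^{(i)}(v_{-i}\mid v_i)$. Your convex-order step (bound $\int_{-\infty}^x G$ by $L$ on $[c,c+L]$ and use $\int_{-\infty}^x F \ge (x-m)^+$ elsewhere) is the same match-the-mean-then-shrink-$L$ argument as the paper's, made more explicit and uniform in $x$.
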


\Cref{prop:sym pri-pri rev max}, together with  \Cref{lem:suff condi for full suplus extraction}, completes the proof of \Cref{thm:pripri max rev}.
\begin{proof}[Proof of \Cref{thm:pripri max rev}]
Let the information structure be the one identified in \Cref{prop:sym pri-pri rev max}. 
Then we know that it is symmetric private private, and it satisfies the conditions \ref{eq:construction rule} and \ref{cond:winner}. 
By \Cref{lem:suff condi for full suplus extraction}, the condition in Eqn.~\eqref{cond:loser} also holds.
Thus, under this information structure, truth bidding is an equilibrium, and moreover, under this equilibrium the bidders' surplus is $0$.
We also know that this information structure achieves efficient allocation under the truthful bidding.
Thus, the seller achieves revenue that equals the maximum welfare. 
\end{proof}

In the sequel, we prove \Cref{prop:sym pri-pri rev max} by explicitly constructing a symmetric private private information that satisfies \ref{eq:construction rule} and \ref{cond:winner}.
Our argument is organized around two lemmas. We begin by restricting attention to symmetric private private information structures (see \Cref{def: symmetric private private information}), under which fixing a common signal space $\constructsignalspace$ (will be defined shortly) for each bidder  pins down the distribution of signal profiles.
To satisfy the condition \ref{cond:winner}, we must let the winning bidder's interim value equal the second-highest signal, so we know that fixing the signal space also pins down the distribution of the winner's interim value. 
We then enforce \ref{eq:construction rule}, by conditioning the
prior on the event that the winner indeed has the highest realized valuation, which leads to a modified (tie-weighted) prior. 
We achieve these two steps using the following two lemmas: 
\Cref{lem:local-coupling} chooses a construction of $\constructsignalspace$ and constructs, for each bidder $i$, a local coupling between value profiles and signal profiles on the event that $i$ is selected as the winner, matching the corresponding modified prior and satisfying $\expect{\val_i\mid \signalprofile} = \secmax(\signalprofile)$. 
This would guarantee the desired properties on signal profiles where the winner is uniquely identified by having a strictly highest signal. 
\Cref{lem:global-assembly} assembles these  local couplings into a global joint distribution over values and signals, verifies that the resulting signal marginals are uniform and independent across bidders, and concludes the existence of the desired symmetric private private information structure.

\xhdr{An interval signal space and its winner's interim value distribution}
We consider the following interval signal space, parameterized by $\lowerSignal, \upperSignal \in \wtedit{[0, \mymax_\val \valspace]}$ where $\upperSignal > \lowerSignal$ and derive the resulting distribution of the winner's interim value:
\begin{align*}
    \constructsignalspace
    \; \triangleq \;
    (\lowerSignal\;, \;  \upperSignal)~.
\end{align*}
An important implication of \ref{cond:winner} is that the distribution of the winner's expected valuation is fully determined by the distribution of the second-highest signal if the considered information structure satisfies the condition \ref{cond:winner}.
Indeed, conditional on bidder~$i$ winning the item, 
\ref{cond:winner} implies that $\expectedVal_i(\signalprofile) = \secmax(\signalprofile)$, so the law of $\expectedVal_i(\signalprofile)$  is exactly the law of $\secmax(\signalprofile)$ under the conditional distribution of signal profiles given that $i$ wins.
Thus, once the signal space is fixed, the distribution of $\secmax(\signalprofile)$ conditional on bidder $i$ winning is pinned down, and so is the distribution of the winning bidder's interim value.
By symmetry of both the prior and the information structure, this conditional distribution is the same for all bidders. We denote it by $\winnerposteriorCDF \in \Delta(\constructsignalspace)$ as its CDF. 
Note that since the signal space is continuous, ties occur with probability zero. 
We denote $\signalProfileDist$ as the distribution of signal profiles and $\signalProfileDist_i$ as the marginal distribution.
Then, for each signal $\expectedVal \in \constructsignalspace$,
\zsedit{
\begin{align}
    \label{eq:winner vale dist}
    \winnerposteriorCDF(\expectedVal) 
    \triangleq 
    \prob{\expectedVal_i(\signalprofile) \leq \expectedVal \mid  i \text{ wins}}   
    =
    \frac{\int_{\signalprofile \in \constructsignalspace^\biddernum} \indicator{\mymax(\signalprofile) = \signal_i \text{ and } \secmax(\signalprofile) \leq \expectedVal}\cdot \signalProfileDist(\signalprofile) \, \dd \signalprofile}{\int_{\signalprofile \in \constructsignalspace^\biddernum}\indicator{\mymax(\signalprofile) = \signal_i}\cdot \signalProfileDist(\signalprofile) \, \dd \signalprofile}
    ~.
\end{align}
We denote by $\winnerExpectedvalProb$ the density function of distribution $\winnerposteriorCDF$.
}
Here $\indicator{\cdot}$ is the indicator function. 
Moreover, when the marginal distribution over signal is uniform, we have $\signalProfileDist(\signalprofile) = \signalProfileDist(\signalprofile\primed)$ for all signal profiles $\signalprofile, \signalprofile\primed\in\constructsignalspace^\biddernum$, and thus it
cancels from the numerator and denominator. 
Intuitively, distribution $\winnerExpectedvalProb$ summarizes how often each
signal $\expectedVal$ appears as the winning bidder's expected valuation when the underlying information structure is symmetric private private with the signal space $\constructsignalspace$.

\xhdr{Winner-Dominance and the modified prior}
However, not every candidate signal space is feasible for constructing a symmetric private private information structure that satisfies both \ref{eq:construction rule} and \ref{cond:winner}. 
\ref{eq:construction rule} requires that whenever bidder $i$ wins, she must have the highest realized value.
This leads us to work under the prior conditional on ``$i$ is the (tie-broken) highest-value bidder.'' We write
$\winnerNum(x)$ for the number of coordinates attaining the maximum in a profile $x$ (either a value
profile $\valprofile$ or a signal profile $\signalprofile$). 
Under a uniform tie-breaking rule, we define bidder $i$’s ``modified'' prior $\modifiedpriorCDF_i$ as the
distribution of bidder $i$'s value $\val_i$ conditional on $i$ winning, and its probability mass function is given as follows:
\begin{align}
    \label{eq:modified prior dist}
    \modifiedpriorCDF_i(a)
    \triangleq
    \prob{\val_i = a \mid i \text{ wins}} 
    &= \frac{\sum_{\valprofile} \indicator{\val_i = \mymax(\valprofile) \text{ and } \val_i = a} \cdot \valprior(\valprofile) \cdot \frac{1}{\winnerNum(\valprofile)}} {\sum_{\valprofile} \indicator{\val_i = \mymax(\valprofile)} \cdot \valprior(\valprofile)\cdot\frac{1}{\winnerNum(\valprofile)}}~, \quad
    a\in\valspace~.
\end{align}
Since the original prior is symmetric (exchangeable across bidders), the modified prior is identical across bidders. We nevertheless write it as $\modifiedpriorCDF_i$ as the subsequent construction is carried out from bidder $i$'s perspective.

\xhdr{Winning profiles and tie weighted conditioning}
Fix the interval signal space $\constructsignalspace = (\lowerSignal, \upperSignal)$. For each bidder $i\in[\biddernum]$, we define the following quantities
\begin{align*}
    \valspace^{(i)} 
    & \triangleq 
    \left\{\valprofile\in \valspace^\biddernum: \val_i=\max(\valprofile)\right\},
    \\
    \constructsignalspace^{(i)} 
    & \triangleq 
    \left\{\signalprofile\in \constructsignalspace^\biddernum: \signal_i=\max(\signalprofile)\right\}~;
    \;\;
    \strictsignalspace^{(i)} 
    \triangleq \left\{\signalprofile\in \constructsignalspace^{\biddernum}: \signal_i>\max_{j\ne i}\signal_j\right\}~.
\end{align*}
By definition, the set $\strictsignalspace^{(i)}$ contains all signal profiles where bidder $i$ has the unique highest signal, whereas
$\constructsignalspace^{(i)}\setminus \strictsignalspace^{(i)}$ consists of signal profiles where bidder $i$ receives the highest signal and another bidder $j\neq i$ also receives the same highest signal.  Since the signal space is continuous, the set of ties $\constructsignalspace^{(i)} \setminus \strictsignalspace^{(i)}$ has measure zero. 
We thus restrict our attention to $\strictsignalspace^{(i)}$.

The role $\valspace^{(i)}$ in our construction is the following. 
For $\signalprofile\in\strictsignalspace^{(i)}$, bidder $i$ is the
\emph{only} bidder with the highest signal, so \ref{eq:construction rule} requires that bidder $i$
has the highest realized value in her posterior support. 
Hence, restricting value profiles to $\valspace^{(i)}$ is already sufficient to guarantee \ref{eq:construction rule} on $\strictsignalspace^{(i)}$. 

To better describe our construction, we introduce the following definition. 
Under a uniform tie breaking rule, we define the tie weighted distribution
$\modifiedpriorCDF^{(i)}\in\Delta(\valspace^{(i)})$ by
\begin{equation}\label{eq:hat-vartheta-profile}
    \modifiedpriorCDF^{(i)}(\valprofile)\ 
    \triangleq\
    \frac{1}{\winnerNum(\valprofile)}
    \cdot \frac{\valprior(\valprofile)}{\sum\nolimits_{\valprofile'\in \valspace^{(i)}} \valprior(\valprofile')\cdot \frac{1}{\winnerNum(\valprofile')}}\,,
    \quad \valprofile\in \valspace^{(i)}~.
\end{equation}
By definition, distribution $\modifiedpriorCDF_i$ that we defined in Eqn.~\eqref{eq:modified prior dist} is the marginal distribution of $\val_i$ in $\modifiedpriorCDF^{(i)}$, and we next define its mean value $\winnerMean \triangleq \expect[\modifiedpriorCDF_i]{\val_i}$.
Similarly, under uniform signals on $\constructsignalspace^\biddernum$, we define the distribution
$\signalProfileDist^{(i)}\in\Delta(\constructsignalspace^{(i)})$ as the probability measure on $\constructsignalspace^{(i)}$.
With these definitions, we now present our two construction lemmas: the first establishes a local coupling that designates bidder $i$ as the winner, and the second builds the global construction by combining these local couplings.
\begin{lemma}[Local coupling for a designated winner]
\label{lem:local-coupling}
There exists a choice of parameters $(\lowerSignal, \upperSignal)$ such that for every
$i\in[\biddernum]$ there exists a joint distribution $\signalscheme^{(i)}\in\Delta(\valspace^{(i)}\times \constructsignalspace^{(i)})$
satisfying: (1) the marginal of $\signalscheme^{(i)}$ on $\valspace^{(i)}$ is $\modifiedpriorCDF^{(i)}$; 
(2) the marginal of $\signalscheme^{(i)}$ on $\constructsignalspace^{(i)}$ is $\signalProfileDist^{(i)}$;
(3) for every $\signalprofile\in \constructsignalspace^{(i)}$, 
$\expect[\signalscheme^{(i)}]{\val_i \mid \signalprofile} = \secmax(\signalprofile)$.
In particular, for all $\signalprofile \in \strictsignalspace^{(i)}$, \ref{cond:winner} and \ref{eq:construction rule} hold.
\end{lemma}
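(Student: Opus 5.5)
The plan is to reduce condition (3) to a one-dimensional martingale coupling between bidder $i$'s modified prior $\modifiedpriorCDF_i$ and the law $\winnerposteriorCDF$ of $\secmax(\signalprofile)$ under $\signalProfileDist^{(i)}$. We then build $\signalscheme^{(i)}$ by gluing conditionals through the scalar variable $\expectedVal=\secmax(\signalprofile)$. Concretely, suppose $\winnerposteriorCDF$ is a mean-preserving contraction of $\modifiedpriorCDF_i$, i.e.\ smaller in the convex order. Strassen's theorem then gives a coupling $\conditionalDist$ of $(\val_i,\expectedVal)$ with marginals $\modifiedpriorCDF_i$ and $\winnerposteriorCDF$ and $\expect[\conditionalDist]{\val_i\mid \expectedVal}=\expectedVal$.

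Define $\signalscheme^{(i)}$ by the following sampling procedure:
\begin{itemize}
\item draw $\signalprofile\sim\signalProfileDist^{(i)}$ and set $\expectedVal=\secmax(\signalprofile)$;
\item draw $\val_i\sim\conditionalDist(\cdot\mid\expectedVal)$;
\item draw $\val_{-i}$ from the conditional of $\modifiedpriorCDF^{(i)}$ given $\val_i$.
\end{itemize}
The signal marginal is $\signalProfileDist^{(i)}$ by construction. The marginal of $\val_i$ is $\modifiedpriorCDF_i$ because $\expectedVal\sim\winnerposteriorCDF$, and hence the value-profile marginal is $\modifiedpriorCDF^{(i)}$. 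Given $\signalprofile$, the law of $\val_i$ depends only on $\secmax(\signalprofile)$, so $\expect{\val_i\mid\signalprofile}=\secmax(\signalprofile)$ for every $\signalprofile\in\constructsignalspace^{(i)}$; we define the kernel pointwise, so the measure-zero ties are covered too. Since the support lies in $\valspace^{(i)}\times\constructsignalspace^{(i)}$, \ref{eq:construction rule} and \ref{cond:winner} hold on $\strictsignalspace^{(i)}$. By symmetry, the same $(\lowerSignal,\upperSignal)$ works for all $i$.

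It remains to choose $(\lowerSignal,\upperSignal)$ so that the convex-order relation holds; I expect this to be the main step. Under uniform signals on $(\lowerSignal,\upperSignal)$, conditioning on $i$ having the top signal makes $\winnerposteriorCDF$ the law of the second-highest of $\biddernum$ i.i.d.\ uniforms on that interval. Hence $\winnerposteriorCDF$ is the affine image $\lowerSignal+\delta Z$, with $\delta=\upperSignal-\lowerSignal$, of a fixed atomless law $Z$ on $(0,1)$ with mean $\mu_0=(\biddernum-1)/(\biddernum+1)$. We impose $\lowerSignal+\delta\mu_0=\winnerMean$, so the means match, and then take $\delta$ small.

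For the convex order we compare the call functions $c_{\modifiedpriorCDF}(t)=\expect[\modifiedpriorCDF_i]{(t-\val_i)^+}$ and $c_{\winnerposteriorCDF}(t)=\expect[\winnerposteriorCDF]{(t-\expectedVal)^+}$. Both dominate $(t-\winnerMean)^+$, and $c_{\winnerposteriorCDF}$ coincides with it outside $[\lowerSignal,\upperSignal]$, while $c_{\winnerposteriorCDF}(t)\le (t-\winnerMean)^++\delta$ everywhere. The nondegeneracy assumption says $\max_i\val_i$ is not constant on $\supp(\valprior)$. Because $\modifiedpriorCDF_i$ is the tie-weighted law of $\max(\valprofile)$, it is therefore nondegenerate, and so $\winnerMean$ lies strictly between $\min\supp(\modifiedpriorCDF_i)$ and $\max\supp(\modifiedpriorCDF_i)$. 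The gap $c_{\modifiedpriorCDF}(t)-(t-\winnerMean)^+$ is continuous and strictly positive on a neighborhood of $\winnerMean$; let $\eta>0$ be its minimum on $[\winnerMean-\delta_0,\winnerMean+\delta_0]$. Choosing $\delta<\min(\delta_0,\eta)$ then gives $c_{\winnerposteriorCDF}\le c_{\modifiedpriorCDF}$ everywhere. Since the means are equal, this is exactly the convex order.

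Taking $\delta$ small also keeps $(\lowerSignal,\upperSignal)$ inside $[0,\mymax\valspace]$, since $\winnerMean$ is interior. The delicate points are this nondegeneracy-driven gap, which fails precisely in the deferred constant-maximum case, and the measurability of the disintegration of $\signalProfileDist^{(i)}$ along $\secmax$. The latter is standard, because $\secmax$ is continuous on a product of intervals.
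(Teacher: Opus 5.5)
Your proposal is correct and follows essentially the same route as the paper. Both arguments reduce condition (3) to a martingale coupling between the modified prior and the law of the second-highest signal under $\signalProfileDist^{(i)}$, glue the conditionals through $\secmax(\signalprofile)$, shift $\lowerSignal$ to match means, and shrink the interval length to obtain a mean-preserving contraction. Your call-function argument (exact agreement with $(t-\winnerMean)^+$ outside $[\lowerSignal,\upperSignal]$, an additive $\delta$ bound inside, and a uniform gap $\eta$ near $\winnerMean$) is a cleaner, uniform-in-$t$ version of the paper's pointwise-limit-plus-continuity step, and your identification of $\modifiedpriorCDF_i$ with the law of the maximum value makes the role of nondegeneracy explicit.
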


\begin{lemma}[Global construction from local couplings]
\label{lem:global-assembly}
Suppose for each bidder $i\in[\biddernum]$, we have a distribution
$\constructsignalscheme^{(i)}\in\Delta(\valspace^{(i)}\times \constructsignalspace^{(i)})$ satisfying the three properties in
\Cref{lem:local-coupling}. Define a joint distribution $\constructsignalscheme\in\Delta(\valspace^\biddernum\times \constructsignalspace^\biddernum)$ by
\begin{equation}\label{eq:global-mixture}
    \constructsignalscheme(\valprofile,\signalprofile)\ \triangleq\ \frac{1}{\biddernum}\sum\nolimits_{i\in[\biddernum]} \constructsignalscheme^{(i)}(\valprofile,\signalprofile)~.
\end{equation}
Then $\constructsignalscheme$ defines a symmetric private private information structure $\hat{\inforstructure} = (\constructsignalspace^\biddernum, \constructsignalscheme)$.
Moreover, under $\constructsignalscheme$, \ref{cond:winner} and \ref{eq:construction rule} hold for
all signal profiles $\signalprofile \in \cup_{i \in [\biddernum]}\strictsignalspace^{(i)}$.
\end{lemma}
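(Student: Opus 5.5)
The plan is to verify three things about the mixture $\constructsignalscheme$: its signal-profile marginal is the uniform product measure on $\constructsignalspace^\biddernum$, it is permutation invariant, and the two rules hold on the strict-winner regions.

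\textbf{Signal marginal.} By property (2) of \Cref{lem:local-coupling}, the signal marginal of $\constructsignalscheme^{(i)}$ is $\signalProfileDist^{(i)}$, the uniform measure on $\constructsignalspace^\biddernum$ restricted to $\constructsignalspace^{(i)}$ and renormalized. By symmetry of the uniform product measure, each $\constructsignalspace^{(i)}$ has uniform mass exactly $1/\biddernum$, since the ties have measure zero. Hence $\signalProfileDist^{(i)} = \biddernum\cdot \phi\cdot \mathbf 1_{\constructsignalspace^{(i)}}$, where $\phi$ is the uniform product density. The sets $\strictsignalspace^{(i)}$ are disjoint and cover $\constructsignalspace^\biddernum$ up to a null set. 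So the signal marginal of $\constructsignalscheme$ is
\[
\frac{1}{\biddernum}\sum_i \biddernum\,\phi\,\mathbf 1_{\constructsignalspace^{(i)}}=\phi \quad \text{a.e.},
\]
which is the uniform product measure. This gives private privacy, a common signal space, and uniform marginals.

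\textbf{Permutation invariance.} I would take the local couplings to be relabelings of a single one, $\constructsignalscheme^{(i)} = \constructsignalscheme^{(1)}\circ\tau_{1i}$ with $\tau_{1i}$ the transposition, and require $\constructsignalscheme^{(1)}$ to be invariant under permutations fixing bidder $1$. If the given couplings are not of this form, I would first symmetrize: replace $\constructsignalscheme^{(1)}$ by its average over permutations fixing $1$. This average still satisfies properties (1) to (3), because $\modifiedpriorCDF^{(1)}$ and $\signalProfileDist^{(1)}$ are invariant under such permutations and the conditional-mean condition is linear after multiplying by the signal density. I would then define the others by relabeling. With this, the mixture is permutation invariant, and I expect this symmetrization to be the only delicate point.

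\textbf{The two rules.} Fix $\signalprofile\in\strictsignalspace^{(i)}$. For $j\ne i$, the coupling $\constructsignalscheme^{(j)}$ is supported on $\constructsignalspace^{(j)}$, which excludes $\signalprofile$. So the posterior under $\constructsignalscheme$ given $\signalprofile$ coincides with the posterior under $\constructsignalscheme^{(i)}$ (for a.e.\ $\signalprofile$, via disintegration). That posterior is supported on $\valspace^{(i)}$, which gives \ref{eq:construction rule}. Its mean of $\val_i$ equals $\secmax(\signalprofile)$ by property (3), which gives \ref{cond:winner}.
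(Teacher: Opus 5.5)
You omit one verification: you never check that the value-profile marginal of $\constructsignalscheme$ equals the prior $\valprior$. Without this, $\constructsignalscheme$ is not an information structure for this environment. Values must be distributed according to $\valprior$, and the later claim $\Rev=\optwelfare$ with $\optwelfare=\mathbb{E}_{\valprofile\sim\valprior}[\max(\valprofile)]$ depends on it. A symptom of the omission is that property (1) of the local couplings never enters your conclusions about the mixture; you only use it to check that your symmetrization preserves the hypotheses.

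This is the step where the tie weights and exchangeability matter. By exchangeability, the normalizers $\sum_{\valprofile'\in\valspace^{(i)}}\valprior(\valprofile')/\winnerNum(\valprofile')$ are equal across $i$. They also sum over $i$ to $\sum_{\valprofile}\valprior(\valprofile)=1$, so each equals $1/\biddernum$. Hence $\modifiedpriorCDF^{(i)}(\valprofile)=\biddernum\,\valprior(\valprofile)/\winnerNum(\valprofile)$ on $\valspace^{(i)}$ and $0$ elsewhere. By property (1),
\[
\frac{1}{\biddernum}\sum_{i\in[\biddernum]}\modifiedpriorCDF^{(i)}(\valprofile)
=\frac{1}{\biddernum}\sum_{i\in\arg\max(\valprofile)}\biddernum\cdot\frac{\valprior(\valprofile)}{\winnerNum(\valprofile)}
=\valprior(\valprofile).
\]
Together, the $1/\winnerNum(\valprofile)$ factor and the equal mixture weights make the $\biddernum$ local pieces reassemble exactly to $\valprior$, even though they overlap on tied value profiles. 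Without the tie weighting, tied profiles would be overcounted. The paper's proof opens with exactly this computation, and adding it closes the gap.

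The rest of your argument is the same as the paper's. You get the signal marginal from $\signalProfileDist^{(i)}=\biddernum\,\signalProfileDist$ on $\strictsignalspace^{(i)}$. You read off both rules from the fact that only $\constructsignalscheme^{(i)}$ charges $\strictsignalspace^{(i)}$.

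On permutation invariance you are more careful than the paper, which only says to obtain $\constructsignalscheme^{(i)}$ by permuting $\constructsignalscheme^{(1)}$. As you note, one also needs $\constructsignalscheme^{(1)}$ to be invariant under permutations fixing bidder $1$. Otherwise, for $\biddernum\ge 3$, the choice of permutation sending $1$ to $i$ matters and the mixture is not invariant. So the lemma, read literally for arbitrary couplings, needs this symmetric choice. Your averaging argument supplies it: $\modifiedpriorCDF^{(1)}$ and $\signalProfileDist^{(1)}$ are stabilizer-invariant, and the mean condition is linear once the signal marginal is fixed, so (1)--(3) are preserved.
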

\wtresedit{Since, by construction, the product signal distribution is atomless, the set of signal profiles with ties has measure zero. Thus, \Cref{lem:global-assembly} immediately implies \Cref{prop:sym pri-pri rev max}, and we defer the formal verification to \Cref{apx:proofs in pri pri rev}.}


\subsection{Towards Approximate Maximum Revenue with Strict BNE}
\label{subsec:approx maximum rev construction}

To show the existence of a private private information that admits a \zsedit{strict} efficient equilibrium, we modify the \ref{cond:winner} condition as a strengthened set of sufficient conditions: whenever an information structure satisfies these conditions, it admits truthful bidding i.e., bidding one's signal $\bidstrategy_i(\signal_i) = \signal_i$, as a \zsedit{strict} equilibrium. 
Moreover, these conditions guarantee that upon winning, a bidder’s expected payment is within $\varepsilon$ of her expected valuation, so her expected surplus is strictly positive while remaining uniformly bounded by $\varepsilon$, i.e., $0 < \biddersurplus(\inforstructure, \bidstrategyprofile) \le \varepsilon$.


\begin{lemma}
\label{lem:suff condi for approx full suplus extraction}
Fix an arbitrarily small $\varepsilon > 0$. Consider a symmetric private private information structure $\inforstructure$ that satisfies the following conditions:
\begin{enumerate}
    \item 
    \em{\textbf{Signal space:}} 
    Every bidder shares a common signal space $\constructsignalspace = (\lowerSignal, \upperSignal)$ with sufficiently small length, i.e.,  $0 < \upperSignal - \lowerSignal \le \varepsilon$. 
    \item 
    \em{\textbf{Expected valuation:}} 
    For any signal profile $\signalprofile$ \zsedit{with a unique highest signal}, bidder $i$'s expected valuation 
    $\expectedVal_i(\signalprofile)$ satisfies:
    \begin{enumerate}
        \item 
        \label{it:highest-signal}
        If bidder $i$ holds the highest signal, that is $\signal_i = \mymax(\signalprofile)$, her expected valuation is given by:
        \begin{align*}
        \expectedVal_i(\signalprofile) = 
        \begin{cases} 
        \secmax(\signalprofile) - \displaystyle\frac{(\secmax(\signalprofile) -  \lowerSignal) \cdot \eps}{2\biddernum} & \text{if } \lowerSignal< \secmax(\signalprofile) < \displaystyle\frac{\signal_i + \lowerSignal}{2}, \\[10pt]
        \secmax(\signalprofile) + 
        \displaystyle \frac{(\signal_i - \secmax(\signalprofile))\cdot\eps}{\biddernum} & \text{if } 
        \displaystyle \frac{\signal_i + \lowerSignal}{2} \leq \secmax(\signalprofile) < \upperSignal.
        \end{cases}
        \end{align*}
        \item 
        \label{it:not-highest-signal}
        If bidder $i$ does not hold the highest signal, that is, $\signal_i < \mymax(\signalprofile)$, then for any $j$ such that $\signal_j = \mymax(\signalprofile)$, it holds that: $\expectedVal_i(\signalprofile) \leq \expectedVal_j(\signalprofile)$.
    \end{enumerate}
\end{enumerate}
Under these conditions, 
bidders following the pure strategy of bidding their signal ($\bidstrategy_i(\signal_i) = \signal_i$) is \zsedit{a strict} equilibrium.
Moreover, the aggregate bidders' surplus satisfies $0 < \biddersurplus(\inforstructure, \bidstrategyprofile) \leq \varepsilon$.
\end{lemma}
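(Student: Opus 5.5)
The plan is to verify best responses signal by signal, using independence and uniformity of signals. Fix bidder $i$ with signal $\signal_i\in\constructsignalspace$ and suppose all others bid their signals. Let $t\triangleq\mymax_{j\ne i}\signal_j$. Because the structure is private private with uniform marginals, $t$ is independent of $\signal_i$. Its density on $(\lowerSignal,\upperSignal)$ is proportional to $(t-\lowerSignal)^{\biddernum-2}$. Bidder $i$'s interim payoff from bid $\bid$ is
\[
U_i(\bid\mid \signal_i)=\expect{\indicator{t<\bid}\,\bigl(\expectedVal_i(\signalprofile)-t\bigr)\,\big|\,\signal_i},
\]
where $\expectedVal_i(\signalprofile)$ is averaged over the other signals. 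Ties have measure zero and are ignored. I will compare $U_i(\bid\mid\signal_i)$ with $U_i(\signal_i\mid\signal_i)$ separately for upward and downward deviations. Strict interim inequalities for a.e. $\signal_i$ then give a strict ex ante inequality for any deviating strategy that differs from truthful bidding on a positive-measure set of signals.

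\emph{Upward deviations} $\bid>\signal_i$. The extra wins occur on the event $\signal_i<t<\bid$, which has positive probability. On this event some $j$ holds the top signal $\signal_j=t$ and $\secmax(\signalprofile)<t$. In case 1 of condition (a) for $j$ we have $\expectedVal_j<\secmax(\signalprofile)<t$. In case 2 we have $\expectedVal_j=\secmax+(t-\secmax)\eps/\biddernum<t$, since $\eps/\biddernum<1$. By condition (b), $\expectedVal_i\le\expectedVal_j<t$. So every extra win has strictly negative value, and the upward deviation is strictly worse.

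\emph{Downward deviations} $\bid<\signal_i$. The forgone wins are on $\bid<t<\signal_i$. There $i$ is the unique top bidder and $\secmax=t$, so condition (a) gives $\expectedVal_i-t$ as an explicit function of $t$ and $\signal_i$. This function is negative for $t$ below the midpoint $(\signal_i+\lowerSignal)/2$ and positive above it. Hence the loss $\int_\bid^{\signal_i}(\expectedVal_i-t)(t-\lowerSignal)^{\biddernum-2}\,dt$ is positive for $\bid$ at or above the midpoint. As $\bid$ decreases below the midpoint, the loss decreases, so it suffices to check $\bid=\lowerSignal$. This is the main step. Write $u=t-\lowerSignal$ and $a=\signal_i-\lowerSignal$. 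The negative part equals $-\frac{\eps}{2\biddernum}\int_0^{a/2}u^{\biddernum-1}du=-\frac{\eps}{2\biddernum^2}(a/2)^{\biddernum}$. The positive part is at least $\frac{\eps}{\biddernum}(a/2)^{\biddernum-2}\int_{a/2}^a(a-u)\,du=\frac{\eps}{2\biddernum}(a/2)^{\biddernum}$. For $\biddernum\ge2$ the total is strictly positive, so every downward deviation strictly loses. With the upward case, truthful bidding is a strict equilibrium.

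\emph{Efficiency and surplus.} Truthful bidding allocates to the top-signal bidder. Her ex post surplus is $\expectedVal_i-\secmax\le(\signal_i-\secmax)\eps/\biddernum\le\eps^2/\biddernum$, and it may be negative in case 1. Integrating and summing over bidders gives $\biddersurplus\le\eps$ (assuming $\eps\le1$). For the lower bound, the ex ante surplus of bidder $i$ equals $\expect{U_i(\signal_i\mid\signal_i)}$, which is exactly the $\bid=\lowerSignal$ integral above. That integral was shown strictly positive, so $\biddersurplus>0$.
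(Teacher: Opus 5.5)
Your proof is correct and follows the paper's argument. The upward-deviation step via condition (b) and the two subcases of (a) is identical, and for downward deviations your reduction of bids below the midpoint to $\bid=\lowerSignal$ is equivalent to the paper's observation that $\biddersurplus_i(\bid\mid\signal_i)\le 0<\biddersurplus_i(\signal_i\mid\signal_i)$ there. The only cosmetic difference is how positivity of the truthful interim payoff is shown: you bound the power integrals directly (using $u^{\biddernum-2}\ge(a/2)^{\biddernum-2}$ on $[a/2,a]$), whereas the paper reflects $y\mapsto\lowerSignal+\signal_i-y$ and uses that the density of the top competing signal is nondecreasing.
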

We next give a brief intuition for how \Cref{lem:suff condi for approx full suplus extraction} helps to ensure the truthful bidding as a \zsedit{strict}  equilibrium. \zsedit{Consider a symmetric private private information satisfying the conditions in \Cref{lem:suff condi for approx full suplus extraction}.} 
Let us fix a bidder $i$ with her signal $\signal_i$, and let $y \triangleq \max_{k\neq i} \signal_k$ denote the highest competing signal, i.e., the price bidder $i$ pays if she wins under truthful bidding by all other bidders. \zsedit{Since the signal space is continuous, ties occur with zero probability. Thus, we focus only on signal profiles with a unique highest signal. For all signal profiles $\signalprofile$ where bidder $i$ has the unique highest signal $\signal_i$, under the construction in \Cref{lem:suff condi for approx full suplus extraction}, bidder $i$'s interim expected valuation depends on $\signal_i$ and $y$. With a slight abuse of notation, we use $\expectedVal_{i}(\signal_i,y)$ to denote bidder $i$'s expected value conditional on winning with a payment of $y$. Note that $\expectedVal_{i}(\signal_i,y)$ is a piecewise-linear function of $y$.}
\Cref{fig:val structure} plots this function (solid black curve)
together with the break-even line $x=y$ (dashed diagonal). The vertical distance between the
two curves equals bidder $i$'s ex-post net utility, $\expostbiddersurplus(\signal_i,y) \triangleq  \expectedVal_i(\signal_i,y)-y$.

The key feature in this plot is the ``kink'' at $\middlepoint(\signal_i) \triangleq \frac{\signal_i+\lowerSignal}{2}$
which, together with $\signal_i$, partitions the payment axis into three regions. 
In \emph{Region 1} ($y<\middlepoint(\signal_i)$), the
solid curve lies strictly below the break-even line, so winning at such a price yields a strict loss:
$\expostbiddersurplus(\signal_i,y)<0$. 
In \emph{Region 2} ($\middlepoint(\signal_i)\le y<\signal_i$), the solid curve lies strictly above the break-even line, so winning yields a strictly positive but small gain, $\expostbiddersurplus(\signal_i,y)>0$,
of magnitude $O(\varepsilon)$ (by our construction). 
Finally, \emph{Region 3} ($y>\signal_i$) corresponds to prices that bidder $i$ can face only if she overbids above her signal to overtake a higher-signal competing bidder; in this
region, condition \eqref{it:not-highest-signal} in \Cref{lem:suff condi for approx full suplus extraction} implies that her expected valuation is bounded by that of the true
highest-signal bidder, which lies below the break-even line, so winning would also generate a strict loss.

\begin{figure}
    \centering
    \resizebox{0.40\textwidth}{!}{%
        \begin{tikzpicture}[scale=1.5, >=Stealth]
    \tikzset{
        axis/.style={->, line width=1.2pt, >=Stealth},
        tick/.style={line width=1.2pt},
        proj/.style={dotted, line width=0.8pt, gray},
        refdiag/.style={dashed, line width=1.2pt, gray},
        midguide/.style={dotted, line width=1.2pt, black},
        boundary/.style={line width=1.2pt, black},
        dot/.style={circle, fill=black, inner sep=0pt, minimum size=5pt}, 
        axlabel/.style={font=\LARGE\bfseries}, 
        ticklabel/.style={font=\LARGE},             
        ptlabel/.style={font=\LARGE\bfseries},      
    }

    \def\lowS{1.0}    
    \def\highS{7.5}   
    \def\midS{4.25}   
    \def\overS{9.5}   
    \def\lossGap{1.8}
    \def\gainGap{2.8}
    \def\extY{8.5}
    
    \def\AxOff{-0.5}

    \draw[axis] (\AxOff, \AxOff) -- (9.8, \AxOff);
        
    \draw[axis] (\AxOff, \AxOff) -- (\AxOff, 9.2);

    \node[axlabel] at (5, \AxOff - 1.3) {Payment};

    \node[axlabel, rotate=90] at (\AxOff - 2.0, 5) {Bidder $i$'s Expected Value};

    \draw[refdiag] (\AxOff, \AxOff) -- (9.2, 9.2);
    
    \draw[midguide] (\midS, \AxOff) -- (\midS, 9.2);


    \draw[tick] (\lowS, \AxOff) -- (\lowS, \AxOff-0.15)
        node[below, ticklabel] {$\underline{s}$};
    \draw[proj] (\lowS, \AxOff) -- (\lowS, \lowS); 

    \draw[tick] (\midS, \AxOff) -- (\midS, \AxOff-0.15)
        node[below, ticklabel] {$\kappa(s_i)$};
        
    \draw[tick] (\highS, \AxOff) -- (\highS, \AxOff-0.15)
        node[below, ticklabel] {$s_i$};
    \draw[proj] (\highS, \AxOff) -- (\highS, \highS);

    \draw[tick] (\AxOff, \lowS) -- (\AxOff-0.15, \lowS)
        node[left, ticklabel] {$\underline{s}$};
    \draw[proj] (\AxOff, \lowS) -- (\lowS, \lowS);

    \draw[tick] (\AxOff, \midS) -- (\AxOff-0.15, \midS)
        node[left, ticklabel] {$\kappa(s_i)$};
    \draw[proj] (\AxOff, \midS) -- (\midS, \midS);

    \draw[tick] (\AxOff, \highS) -- (\AxOff-0.15, \highS)
        node[left, ticklabel] {$s_i$};
    \draw[proj] (\AxOff, \highS) -- (\highS, \highS);

    \fill[gray!20] (\lowS, \lowS) -- (\midS, \midS - \lossGap) -- (\midS, \midS) -- cycle;
    \draw[boundary] (\lowS, \lowS) -- (\midS, \midS - \lossGap);

    \fill[gray!60] (\midS, \midS) -- (\midS, \midS + \gainGap) -- (\highS, \highS) -- cycle;
    \draw[boundary] (\midS, \midS + \gainGap) -- (\highS, \highS);

    \fill[gray!20] (\highS, \highS) -- (\overS, \extY) -- (\overS, \overS) -- cycle;
    \draw[boundary] (\highS, \highS) -- (\overS, \extY);

    \node[ptlabel, rotate=45] at (3.2, 2.6) {Region 1};
    \node[ptlabel, rotate=45] at (5.2, 6.2) {Region 2};
    \node[ptlabel, rotate=35] at (9.0, 8.6) {Region 3};

    \node[dot] at (\midS, \midS) {};
    \node[dot] at (\midS, \midS - \lossGap) {};
    \node[dot] at (\midS, \midS + \gainGap) {};
    \node[dot] at (\highS, \highS) {};
    \node[dot] at (\lowS, \lowS) {};
\end{tikzpicture}
    }
    \caption{Geometric structure of incentives in \Cref{lem:suff condi for approx full suplus extraction}. The horizontal axis represents the payment price $y$, and the vertical axis denotes bidder~$i$'s interim expected valuation. The solid black lines depict bidder~$i$'s expected value given signal $s_i$, conditional on a price $y$. The dashed diagonal line represents the break-even point where bidder~$i$'s expected value equals payment. }
    \label{fig:val structure}
    \vspace{-10pt}
\end{figure}

This geometry makes truthful bidding the unique best response when all other bidders bid truthfully. 
If bidder $i$ underbids to some $\middlepoint(\signal_i) \leq \bid <\signal_i$, she can only shrink the set of $y$ for which she wins: she gives us some part of Region~2, and thus her expected payoff strictly decreases. Bidding $\signal_i$ yields a strictly positive expected payoff, which strictly exceeds the payoff from deviating to any $\bid < \middlepoint(\signal_i)$.
If instead she overbids to $b>\signal_i$, she expands the winning set into Region~3, where the winning is strictly unprofitable, so her expected payoff again strictly decreases. 
Since the signal distribution is continuous under the information structure considered in \Cref{lem:suff condi for approx full suplus extraction} (ties occur with
probability zero), these losses occur with positive probability whenever $\bid\neq \signal_i$, yielding
strict suboptimality of any deviation and hence uniqueness of truthful bidding.
At the same time, the positive wedge in Region~2 can be arbitrarily small by choosing $\upperSignal-\lowerSignal\le \varepsilon$, which ensures that bidders' surplus is strictly positive but uniformly bounded by~$\varepsilon$.\footnote{\wtedit{The proofs of \Cref{lem:suff condi for approx full suplus extraction} and \Cref{lem: existence of infor structure with unique equilibrium} are provided in the online appendix.}}

\begin{proposition}
\label{lem: existence of infor structure with unique equilibrium}
    There exists a symmetric private private information $\inforstructureUnique$ satisfying the conditions of \Cref{lem:suff condi for approx full suplus extraction} such that \ref{eq:construction rule} holds for all signal profiles with a unique highest signal.
\end{proposition}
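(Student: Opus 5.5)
We will mirror the two-lemma architecture used for \Cref{prop:sym pri-pri rev max}. That is, we build a local coupling for each designated winner $i$, and then assemble the pieces by \Cref{lem:global-assembly}. The only change is the target conditional mean. We replace $\expect{\val_i\mid\signalprofile}=\secmax(\signalprofile)$ by the piecewise-linear target $\expectedVal_i(\signal_i,y)$ prescribed in condition \eqref{it:highest-signal} of \Cref{lem:suff condi for approx full suplus extraction}, where $y=\secmax(\signalprofile)$.

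First, we fix the signal space $\constructsignalspace=(\lowerSignal,\upperSignal)$ with $\upperSignal-\lowerSignal\le\eps$. Under uniform independent signals conditioned on $\strictsignalspace^{(i)}$, we compute the law $\winnerposteriorCDF_\eps$ of the target $\expectedVal_i(\signal_i,\secmax(\signalprofile))$. Both $(\signal_i,y)$ have explicit densities (order statistics of uniforms), so $\winnerposteriorCDF_\eps$ is an explicit distribution supported in roughly $(\lowerSignal,\upperSignal+O(\eps^2))$; the two branches stay close to $y$. Next, we need the local coupling $\signalscheme^{(i)}$. It must have value marginal $\modifiedpriorCDF^{(i)}$, signal marginal $\signalProfileDist^{(i)}$, and $\expect{\val_i\mid\signalprofile}$ equal to the target. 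By Strassen/Blackwell, this exists if and only if $\winnerposteriorCDF_\eps$ is a mean-preserving contraction of the modified prior $\modifiedpriorCDF_i$. Given such a contraction, we realize it through a kernel from the target value to $\val_i$, and then draw $\val_{-i}$ from $\modifiedpriorCDF^{(i)}$ conditional on $\val_i$. This keeps \ref{eq:construction rule} on $\strictsignalspace^{(i)}$, since everything lives in $\valspace^{(i)}$. Importantly, the kernel depends on $\signalprofile$ only through the target value, exactly as in \Cref{lem:local-coupling}.

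The main obstacle is the mean-preserving contraction step, and I expect it to be the crux. The mean of $\winnerposteriorCDF_\eps$ must equal $\winnerMean$, and its support must lie inside the convex hull of $\supp\modifiedpriorCDF_i$ in the dominance (convex-order) sense. My first attempt is to choose $(\lowerSignal,\upperSignal)$ in a small window around $\winnerMean$, for instance centered so the mean matches. The mean of the target is a continuous, strictly increasing function of a translation parameter $t$ when $\constructsignalspace=(t,t+\eps')$, so the intermediate value theorem pins $t$ with mean exactly $\winnerMean$. Nondegeneracy of $\max_i\val_i$ makes $\modifiedpriorCDF_i$ nondegenerate, so $\winnerMean$ is strictly interior to its support. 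A distribution concentrated in a sufficiently small interval around the mean is then dominated in convex order; we check $\int(\cdot-c)^+$ inequalities, which hold because the prior's call function is strictly above $(\winnerMean-c)^+$ near $\winnerMean$. Thus, for small enough width, the contraction exists. If $\eps$ is not small, we shrink to $\min(\eps,\delta)$.

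Finally, we verify condition \eqref{it:not-highest-signal}. Under $\constructsignalscheme$, on $\strictsignalspace^{(j)}$ only the piece $\signalscheme^{(j)}$ carries mass, and it is supported on $\valspace^{(j)}$. Hence $\val_i\le\val_j$ pointwise, so $\expectedVal_i(\signalprofile)\le\expectedVal_j(\signalprofile)$. By \Cref{lem:global-assembly}, the mixture has uniform independent marginals and is symmetric. The remaining conditions of \Cref{lem:suff condi for approx full suplus extraction} hold by construction, and ties have measure zero.
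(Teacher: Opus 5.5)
Your proposal is correct and follows essentially the same route as the paper. You translate the signal interval to match the mean $\winnerMean$ of $\modifiedpriorCDF_i$, shrink its width so the target law becomes a mean-preserving contraction, realize it with a kernel and a conditionally independent draw of $\val_{-i}$, and then assemble via \Cref{lem:global-assembly}. Your explicit check of condition (b), that each local piece is supported on $\valspace^{(j)}$ so that $\val_i \le \val_j$ pointwise, fills in a step the paper leaves implicit.
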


\begin{proof}[Proof of \Cref{prop:pripri approx max rev}]
Let $\inforstructure = (\constructsignalspace^\biddernum, \signalscheme)$ be the symmetric private private information identified in \Cref{lem: existence of infor structure with unique equilibrium}.
Then we know it satisfies conditions in \Cref{lem:suff condi for approx full suplus extraction} and \ref{eq:construction rule} for all $\signalprofile \in \strictsignalspace$. Since the signal space is continuous, the set of ties $\constructsignalspace^{(i)} \setminus \strictsignalspace^{(i)}$ has measure zero. Consequently, conditions in \Cref{lem:suff condi for approx full suplus extraction} and \ref{eq:construction rule} hold with probability one. This implies that the constructed information structure achieves efficiency and extracts almost full surplus 
and it admits a \zsedit{strict} equilibrium. 
\end{proof}

\section{Maximum Bidder Surplus}
\label{sec:pri pri bidders surplus}

In this section, we study the theoretical limits of bidder surplus in the context of private private information. 
We begin by establishing a negative result: in second-price auctions, no private private information can admit an efficient equilibrium in which bidders' surplus equals the maximum welfare. 
We then characterize exactly when the gap can be made arbitrarily small. 
Specifically, we provide necessary and sufficient conditions on the prior value distribution under which there exists a private private information and an efficient equilibrium such that bidders' surplus is arbitrarily close to maximal welfare. 
These conditions hold only in special cases: for every valuation profile $\valprofile\in\supp(\valprior)$, either all bidders share a common value or there exists at most one bidder that has a positive value.

\begin{theorem}
\label{thm:pripri max bidder surplus}
There exists {\em no} private private information structure $\inforstructure$ that admits an efficient equilibrium 
such that the resulting bidders' surplus equals the maximal welfare, as long as there exists a value profile $\valprofile\in\supp(\valprior)$ 
where there are two bidders have different values. 
\end{theorem}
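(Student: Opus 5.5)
The plan is to reduce the claim to a statement about the ex ante distribution of bids, and then use independence together with the symmetry of $\valprior$. First, for any information structure and strategy profile, $\totalsurplus(\inforstructure,\bidstrategyprofile)=\sum_i \biddersurplus_i(\inforstructure,\bidstrategyprofile)+\Rev(\inforstructure,\bidstrategyprofile)$, and $\totalsurplus\le\optwelfare$. So if bidder surplus equals $\optwelfare$, two things must hold at once. The equilibrium must be efficient, and $\Rev=0$. Since $\secmax(\bidprofile)\ge 0$, $\Rev=0$ means $\secmax(\bidprofile)=0$ almost surely. Suppose, towards a contradiction, that some private private $\inforstructure$ and equilibrium $\bidstrategyprofile$ achieve this.

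The key step uses the private private structure. Signals are independent across bidders, and each bidder randomizes privately and independently (as the footnote to \Cref{defn:bne} stipulates). Hence the bids $\bid_1,\dots,\bid_\biddernum$ are mutually independent random variables ex ante. If two distinct bidders $j,k$ each had $\prob{\bid_j>0}>0$ and $\prob{\bid_k>0}>0$, then by independence $\prob{\bid_j>0,\bid_k>0}>0$. On that event $\secmax(\bidprofile)>0$, so $\Rev>0$, a contradiction. Therefore there is at most one bidder, call it $\maxbidder$, who ever bids positively with positive probability. Every other bidder bids $0$ almost surely. (If no bidder ever bids positively, pick $\maxbidder$ arbitrarily.)

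Next I would turn efficiency into a pointwise condition on value profiles. Almost surely one of two events occurs. On the event $\bid_{\maxbidder}>0$, bidder $\maxbidder$ wins alone. On the event $\bid_{\maxbidder}=0$, all bids are $0$ and uniform tie-breaking gives each bidder probability $1/\biddernum$. Efficiency means $\expect{\sum_i\winprob_i(\bidprofile)\val_i}=\expect{\max(\valprofile)}$. Because $\sum_i\winprob_i\val_i\le\max(\valprofile)$ pointwise, equality must hold almost surely in $(\valprofile,\signalprofile,\bidprofile)$. Thus almost surely either $\val_{\maxbidder}=\max(\valprofile)$, or $\frac1\biddernum\sum_i\val_i=\max(\valprofile)$, which means all values are equal. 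In both cases, almost surely $\val_{\maxbidder}=\max(\valprofile)$: when all values are equal this holds trivially. Integrating out signals and bids, this says that $\valprior$ puts zero mass on profiles with $\val_{\maxbidder}<\max(\valprofile)$.

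Finally, I would invoke symmetry of the prior. By hypothesis some $\valprofile\in\supp(\valprior)$ has two coordinates with different values, so $\min(\valprofile)<\max(\valprofile)$. Choose a permutation $\permutation$ that moves a coordinate attaining a value strictly below $\max(\valprofile)$ into position $\maxbidder$. Since $\valspace$ is discrete and $\valprior$ is exchangeable, $\valprior(\permutation(\valprofile))=\valprior(\valprofile)>0$. In this permuted profile $\val_{\maxbidder}<\max$, which contradicts the previous step. I expect the main point of care to be the independence step. There one must check that mixed strategies do not reintroduce correlation, which holds precisely because only independent private randomization is allowed. One must also check that the tie-breaking case (all bids zero) forces exactly common values. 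The no-overbidding restriction is not needed.
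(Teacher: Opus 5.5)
Your proposal is correct and follows essentially the same route as the paper: zero revenue plus ex ante independence forces at most one bidder to bid positively with positive probability, and efficiency then fails on a profile where, by exchangeability of the prior, that bidder does not hold the highest value. The differences are minor. You argue directly with bids, which are independent because randomization is private, so you do not need the paper's reduction to truthful bidding. Your pointwise efficiency argument also covers the paper's Cases~2 and~3 (one active bidder, or all bids zero with uniform tie-breaking) in a single step.
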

The key tension behind \Cref{thm:pripri max bidder surplus} is that making bidders capture the entire efficient welfare forces the seller's revenue to be exactly zero. 
In a second-price auction, zero revenue means the second-highest bid must be zero almost surely in the equilibrium. 
Under a private private information structure, however, bidders' signals are independent ex ante, 
so any equilibrium in which two or more bidders sometimes submit a strictly positive bid necessarily generates a positive probability that multiple bidders bid positively at the same time.
This immediately implies a strictly positive second highest bid and hence positive revenue, contradicting zero revenue.
The only way to avoid this ``overlap'' under independence is to ensure that at most one bidder can ever bid positively, or that all bidders always bid zero. 
But either way breaks efficiency whenever valuations are not purely common values, 
as efficiency requires that different bidders win in different states, which is incompatible with suppressing competition in this way.

\Cref{thm:pripri max bidder surplus iff} complements \Cref{thm:pripri max bidder surplus} by identifying exactly when this tension disappears, so that bidder surplus can be arbitrarily close to maximal welfare under private private information.


\begin{proposition}
\label{thm:pripri max bidder surplus iff}
For any $\eps > 0$, there exists a private private information $\inforstructure$ satisfying the following conditions simultaneously: 
(i) it admits an efficient equilibrium;
(ii) bidders' surplus satisfies $\biddersurplus(\inforstructure, \bidstrategyprofile) \ge \optwelfare-\varepsilon$;
if and only if the prior $\valprior$ satisfies: for every $\valprofile \in \supp(\valprior)$, either bidders share a common value (i.e., $\val_i = \val_j$ for all $i,j$), or there exists some bidder $i \in [\biddernum]$ such that $\val_i > 0, \val_{-i} \equiv 0$.
\end{proposition}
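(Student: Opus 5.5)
The proof has two directions. For sufficiency we give a construction for priors of the stated form. For necessity we use a quantitative version of the overlap argument behind \Cref{thm:pripri max bidder surplus}.

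\emph{Sufficiency.} Split $\supp(\valprior)$ into two kinds of profiles. The first kind are common-value profiles $(c,\dots,c)$. The second kind are single-positive profiles, where bidder $i$ has $\val_i>0$ and all other values are $0$.

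Let each bidder's signal be $\signal_i=(t_i,u_i)$. Here $t_i\in\{\cc{low},\cc{high}\}$ takes value $\cc{high}$ with a small probability $\delta$, independently across bidders, and $u_i$ is an independent uniform tiebreaker. The coupling with values is as follows. A profile in which exactly one bidder $i$ is $\cc{high}$ is matched with single-positive states in which $i$ is the positive bidder. The all-$\cc{low}$ profile is matched with common-value states. A profile with two or more $\cc{high}$ bidders has probability $O(\delta^2)$; it absorbs the remaining mass, including single-positive states whose positive bidder is not among the $\cc{high}$ bidders.

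Since signals are generated first as an independent product and values are then drawn conditionally, the structure is private private by construction. The only real design question is Bayes plausibility, i.e.\ that the value marginal equals $\valprior$. To handle this, draw the signal profile first and then draw values from a conditional law. We choose the conditional laws so that they mix back to $\valprior$. Because the events ``exactly one $\cc{high}$'' and ``all $\cc{low}$'' have probabilities $N\delta(1-\delta)^{N-1}$ and $(1-\delta)^N$, we cannot match the prior masses of the two kinds of states exactly. The slack, of order $\delta$ or of the order of the mismatch, is pushed into the multi-$\cc{high}$ event and into a small fraction of all-$\cc{low}$ profiles.

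The strategies are: bid $0$ on $\cc{low}$, and bid a fixed positive amount on $\cc{high}$. With common value $c$, all bidders bid $0$ and the winner pays $0$; every bidder has the same value, so the allocation is efficient. In a state where exactly one bidder is $\cc{high}$, that bidder wins at price $0$, and she is the efficient winner.

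Because the mismatch is absorbed as above, exact efficiency may fail on the mixed event. I would fix this in one of two ways. One option is to make a $\cc{high}$ signal a certificate that the bidder is the positive bidder (or that values are common), which lets two positive bids coexist only on null events. The other is to restrict the proportions so that a low bidder's posterior value is exactly $0$ except in common-value states. Equilibrium then has to be checked. Losers have expected value $0$, or are indifferent in the common-value case, so they do not want to bid up. A $\cc{high}$ bidder faces price $0$ except on the $O(\delta)$ multi-$\cc{high}$ event. Bidder surplus equals $\optwelfare$ minus the revenue on that event, which is at most $O(\delta)\le\varepsilon$.

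\emph{Necessity.} Suppose some $\valprofile\in\supp(\valprior)$ has $\val_i\ne\val_j$ and two coordinates positive. Then we need a uniform lower bound on revenue across all efficient equilibria of private private structures. Let $A_i$ be the event that bidder $i$ bids above a threshold $\eta$. By independence of signals and of the private randomizations, $\prob{A_i\cap A_j}=\prob{A_i}\prob{A_j}$. Revenue is at least $\eta\cdot\prob{\text{two bids}>\eta}$.

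Efficiency forces both bidders to win with positive probability in such states. The winner must bid above every loser's bid, so on the state with distinct positive values the efficient winner must outbid someone who has positive value. Combining this with the no-overbidding restriction and the losers' incentive to bid near their posterior value when they might win, this should force $\prob{A_i}$ and $\prob{A_j}$ to be bounded below uniformly by constants depending only on $\valprior$. That would give revenue at least a constant $c(\valprior)>0$, so surplus is at most $\optwelfare-c$.

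This step is the main obstacle. Equilibrium alone does not force losers to bid positively, so the lower bound on $\prob{A_i}$ has to come from efficiency. In particular, on states where two bidders both have positive but different values, the efficient winner must be pinned down by bids. I would argue by contradiction: if $\prob{A_j}\to0$ along a sequence of structures, then bidder $j$ wins almost only when all bids are near $0$. Under symmetric tie-breaking near zero, this causes a welfare loss proportional to the mass of such states, which contradicts efficiency.
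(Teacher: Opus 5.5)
\textbf{Necessity.} You resolve the step you flag as the main obstacle in the wrong direction. In a second-price auction efficiency is ordinal: scaling all bids by $\eta>0$ leaves the allocation unchanged, and small bids need not tie. Take values i.i.d.\ uniform on $\{1,2\}$ and the private private structure $s_i=\eta v_i$ with truthful bids. The allocation is efficient, since ties occur only between equal values, yet revenue is $O(\eta)$. So $\Pr[A_j]\to 0$ contradicts nothing about efficiency, and the ``welfare loss from tie-breaking near zero'' never materializes.

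What rules this example out is equilibrium, specifically the upward deviation of a bidder who has positive value but must lose. The paper first reduces to truthful bidding. It takes a profile $\hat v$ with $\hat v_k>\hat v_i>0$ and prior mass $\omega$, on which efficiency forces $i$ to lose, and defines $b^*_j=\inf\{b:\Pr[s_j>b]\le\omega/(2N)\}$. The deviation ``bid $\max_{j\neq i}b^*_j$ whenever your signal is below it'' wins on at least half of $\hat v$'s mass and costs at most $\max_{j\ne i}b^*_j$. Its unprofitability therefore gives $\max_{j\neq i}b^*_j\ge \hat v_i\omega/2$. Applying the same argument with that $j$ as the loser yields a second bidder $\ell$, and independence then gives revenue at least $\hat v_i\omega^3/(8N^2)$.

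You are right that losers need not bid high themselves. The missing idea is that the loser's incentive to deviate forces the \emph{competing} bids up. Profiles like $(a,a,0)$ need the same argument applied to the tied top bidder who wins least often in that state.

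\textbf{Sufficiency.} Your construction is not an efficient equilibrium, for two reasons.

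First, consider an all-low profile matched to a common-value state with $c>0$. Everyone bids $0$ and wins with probability $1/N$. A bidder who deviates to any bid strictly between $0$ and the high bid wins outright at price $0$ and gains $(1-1/N)c$. So these bidders are not indifferent: some bidder must bid at least $c$ in common-value states. To avoid overlap under independence, this role should rest with one designated bidder, which is why the paper gives it to bidder $1$ alone.

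Second, the mismatch is not $O(\delta)$. Single-positive states carry mass $Q_2=1-Q_1$, whereas exactly-one-high profiles have mass $N\delta(1-\delta)^{N-1}\le(1-1/N)^{N-1}$. So for small $\delta$ almost all of $Q_2$ lands on profiles without a unique top bid, where uniform tie-breaking is inefficient. No choice of $\delta$ repairs the proportions when $Q_2>(1-1/N)^{N-1}$ (e.g.\ $Q_1=0$), and a ``certificate'' on high signals does not address states that sit on all-low profiles.

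The paper breaks symmetry with a hierarchy:
\begin{itemize}
\item Bidder $1$ draws from $\{\bar v,0\}$ with $\Pr[\bar v]=Q_1+Q_2/N$.
\item Bidder $i\in[2{:}N-1]$ draws from $\{t_i,0\}$ with $\Pr[t_i]=1/(N+1-i)$ and $t_i=\frac{N+1-i}{N}\varepsilon$.
\item Bidder $N$ always receives $t_N$.
\end{itemize}
In a single-positive state with active bidder $i$, the paper sets $s_j=0$ for $j<i$, makes $s_i$ high, and draws $s_j$ for $j>i$ from the marginals; common-value states get $s_1=\bar v$. These marginals make ``the lowest-indexed high bidder is $k$'' have probability exactly $Q_2/N$ for $k\ge 2$ and $Q_1+Q_2/N$ for $k=1$, so the signal law is an exact product. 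The active bidder always holds the strict top signal, bidder $1$'s bid $\bar v$ deters outbidding in common-value states, and the price is at most $t_2<\varepsilon$. For $N=2$ the paper instead uses $s_1=v_1$ and $s_2\sim\mathrm{Unif}(0,\varepsilon)$.
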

To push bidders' surplus close to maximal welfare requires the seller's expected revenue to be arbitrarily close to zero, which means that competitive bidding must occur with vanishing probability. 
With independent signals, however, any scheme that reliably selects the highest value bidder typically creates overlap events in which more than one bidder submits a positive bid, which pushes the second highest bid away from zero.

The exceptional priors in \Cref{thm:pripri max bidder surplus iff} are those that avoid meaningful competition in the prior support.
First, if all bidders always have the same value,  then efficiency does not require distinguishing among bidders and competition is irrelevant.
Second, if at most one bidder has a positive value in the realized value profile, an efficient allocation can be implemented while keeping all other bidders' bids at zero and the second highest bid near zero.
In all other priors, the presence of at least two bidders with distinct positive values on an event of positive probability forces overlap to occur under independent signaling, which makes revenue bounded away from zero and prevents bidders' surplus from approaching maximal welfare.

\section{Private Private Efficient Frontier}
\label{sec:general}
\label{subsec:pri pri efficiency}
\newcommand{\revselleropt}{\Rev_{\textsc{S}}}
\newcommand{\BSselleropt}{\biddersurplus_{\textsc{S}}}
\newcommand{\revfull}{\Rev_{\textsc{F}}}
\newcommand{\BSfull}{\biddersurplus_{\textsc{F}}}

In \Cref{sec:pri pri rev}, we showed that there exists a private private information that admits an efficient equilibrium in which the seller extracts full surplus.
In \Cref{sec:pri pri bidders surplus}, we also proved that, under an efficient equilibrium, no private private information can make bidders' surplus equal to maximal welfare unless the conditions in \Cref{thm:pripri max bidder surplus iff} hold.
One natural next question is: 
what other efficient outcomes are achievable under private private information?
To describe our results, we denote by $\worstwelfare \triangleq \expect[\valprofile\sim\valprior]{\min\nolimits_{i\in[\biddernum]} \val_i}$ the maximally inefficient welfare the minimum welfare achievable under any information structures.

We first revisit the prior conditions identified in \Cref{thm:pripri max bidder surplus iff} under which bidders' maximal surplus can approach maximal welfare arbitrarily closely, and we ask whether, under these priors, every efficient outcome can be achieved by some private private information structure.
We then study the efficient frontier under private private information in another well studied setting, the independent private values setting (i.e., the prior $\valprior$ is a product measure).\footnote{\wtresedit{In online appendix, we fully characterize the set of welfare outcomes that can be implemented by general information structures. In particular, we show that  some simple general information structures exist to achieve the extreme welfare outcomes presented in \Cref{fig:feasible_region}.}}

Because the set of private private information is nonconvex, one cannot simply mix two private private information to obtain a welfare outcome that is a convex combination of the two. 
Indeed, such mixing typically induces correlation across bidders' signals, violating the mutual independence requirement in private private information.
As a result, it is not a priori clear whether the set of achievable efficient outcomes is path connected.
Nonetheless, we show that a full path-connected segment of the efficient frontier is attainable,
both for priors that satisfy the conditions in \Cref{thm:pripri max bidder surplus iff}  and for the standard independent private values setting (see \Cref{fig:feasible_region} for an illustration).

We first show that when the prior satisfies the conditions identified in \Cref{thm:pripri max bidder surplus iff} (including the pure-common-value prior distributions), the private private information can achieve any efficient welfare outcome.

\begin{figure}[t]
    \centering
    \resizebox{0.45\textwidth}{!}{
        \begin{tikzpicture}[scale=1.2, >=Stealth]
    \tikzset{
        axis/.style={->, line width=1.2pt, >=Stealth}, 
        tick/.style={line width=1.2pt},
        proj/.style={dotted, line width=0.8pt, gray},
        frontier/.style={line width=1.2pt, dashed},
        dot/.style={circle, fill=black, inner sep=0pt, minimum size=5pt}, 
        axlabel/.style={font=\normalsize\bfseries}, 
        ticklabel/.style={font=\normalsize},             
        ptlabel/.style={font=\normalsize\bfseries},      
    }

    \def\WmaxVal{6}
    \def\WminVal{2}
    \def\AxOff{-0.5}

    \coordinate (A) at (\WmaxVal, 0);
    \coordinate (B) at (0, \WmaxVal);
    \coordinate (C) at (0, \WminVal);
    \coordinate (D) at (\WminVal, 0);

    \coordinate (E) at (3.5, 2.5);

    \coordinate (F) at (A);

    \fill[gray!20] (D) -- (A) -- (B) -- (C) -- cycle;

    \draw[axis] (\AxOff, \AxOff) -- (7, \AxOff);
    \draw[axis] (\AxOff, \AxOff) -- (\AxOff, 7);

    \node[axlabel] at (3.8, \AxOff - 0.67) {Bidder surplus};

    \node[axlabel, rotate=90] at (\AxOff - 0.7, 3.5) {Revenue};

    \begin{scope}[proj]
        \draw (A) -- (\WmaxVal, \AxOff);
        \draw (D) -- (\WminVal, \AxOff);
        \draw (B) -- (\AxOff, \WmaxVal);
        \draw (C) -- (\AxOff, \WminVal);
        \draw (A) -- (\AxOff, 0);
        \draw (B) -- (0, \AxOff);
    \end{scope}

    \draw[frontier, color=efficiencyColor] (B) -- (A);
    \draw[frontier, color=inefficiencyColor] (C) -- (D);
    \draw[frontier, color=efficiencyColor] (B) -- (C);
    \draw[frontier, color=efficiencyColor] (D) -- (A);

    
    \draw[blue, line width=5pt, opacity=0.4] (B) -- (F);

    \draw[red, line width=1.5pt] (B) -- (E);

    
    \node[dot] at (A) {}; 
    \node[ptlabel, above right] at (A) {A(\textcolor{blue}{F})};

    \node[dot] at (B) {}; \node[ptlabel, above right] at (B) {B};
    \node[dot] at (C) {}; \node[ptlabel, left]        at (C) {C};
    \node[dot] at (D) {}; \node[ptlabel, below]       at (D) {D};

    \node[dot, red] at (E) {}; 
    \node[ptlabel, red, above right] at (E) {E};

    \draw[tick] (0, \AxOff) -- (0, \AxOff-0.15) node[below, ticklabel] {0}; 
    \draw[tick] (\WminVal, \AxOff) -- (\WminVal, \AxOff-0.15) node[below, ticklabel] {$\underline{\textsc{Wel}}$};
    \draw[tick] (\WmaxVal, \AxOff) -- (\WmaxVal, \AxOff-0.15) node[below, ticklabel] {$\overline{\textsc{Wel}}$};

    \draw[tick] (\AxOff, 0) -- (\AxOff-0.15, 0) node[left, ticklabel] {0};
    \draw[tick] (\AxOff, \WminVal) -- (\AxOff-0.15, \WminVal) node[left, ticklabel] {$\underline{\textsc{Wel}}$};
    \draw[tick] (\AxOff, \WmaxVal) -- (\AxOff-0.15, \WmaxVal) node[left, ticklabel] {$\overline{\textsc{Wel}}$};

\end{tikzpicture}
    }
    \caption{
    The feasible surplus region under general information structures (see \wtedit{online appendix}) is a trapezoid with vertices $A: (\optwelfare, 0)$, $B: (0, \optwelfare)$, $C: (0, \worstwelfare)$, and $D: (\worstwelfare, 0)$.
    Under the condition in \Cref{thm:pripri max bidder surplus iff}, 
    \Cref{thm:iff pri pri efficient} characterizes an efficient frontier segment $BF$ with $F$ arbitrarily close to $A$.
    Under IPV,
    \Cref{thm:ipv pri pri efficient} characterizes an efficient frontier segment $BE$ connecting $B$ to the full information outcome $E$.}
    \label{fig:feasible_region}
\end{figure}

\begin{theorem}
\label{thm:iff pri pri efficient}
Suppose the prior distribution $\valprior$ satisfies the necessary and sufficient conditions in \Cref{thm:pripri max bidder surplus iff}. Then given any $\varepsilon > 0$, for any pair $(B, R)\in\reals_+^2$ such that $B + R = \optwelfare$ and $B \le \optwelfare - \varepsilon$, there exists a private private information $\inforstructure$ and an efficient equilibrium $\bidstrategyprofile$ such that $\Rev(\inforstructure, \bidstrategyprofile) = R$ and $\biddersurplus(\inforstructure, \bidstrategyprofile) = B$.
\end{theorem}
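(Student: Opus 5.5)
The plan is to reduce the statement to an intermediate value argument along a single continuous one-parameter family of private private structures, all sharing an efficient equilibrium. At one end of the family sits the full-extraction outcome of \Cref{thm:pripri max rev}, with bidder surplus $0$. At the other end sits an outcome with bidder surplus at least $\optwelfare-\varepsilon$, as supplied by the sufficiency direction of \Cref{thm:pripri max bidder surplus iff}. Since every efficient outcome has $B+R=\optwelfare$, it suffices to show that bidder surplus along the family is continuous in the parameter. The intermediate value theorem then produces every $B\in[0,\optwelfare-\varepsilon]$ exactly, with $R=\optwelfare-B$.

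The family should not be built by mixing the two endpoint structures, because mixing breaks ex ante independence. I would instead exploit the special form of the prior. Write $\valprior=p\,\valprior^{\cc{c}}+(1-p)\,\valprior^{\cc{s}}$, where $\valprior^{\cc{c}}$ lives on common-value profiles $(c,\dots,c)$ and $\valprior^{\cc{s}}$ lives on profiles with exactly one positive coordinate (symmetric across bidders).

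On the common-value part, efficiency is automatic whoever wins. The price equals the second-highest bid, and I can tune it freely: reveal $c$ to each bidder independently with probability $q$ and reveal nothing otherwise. The revelation coins must be drawn independently of $c$, so that signals stay independent. Losing bidders who learn nothing bid something at most the posterior mean, so the payment moves continuously in $q$.

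On the single-positive part, I would use the symmetric local-coupling machinery of \Cref{lem:local-coupling} and \Cref{lem:global-assembly}, with \ref{cond:winner} relaxed to
\[
\expectedVal_i(\signalprofile)=\secmax(\signalprofile)+t\bigl(\mymax(\signalprofile)-\secmax(\signalprofile)\bigr),\qquad t\in[0,\bar t],
\]
while keeping \ref{eq:construction rule}. The signal interval $(\lowerSignal,\upperSignal)$ is chosen as a continuous function of $t$ so that the resulting winner-posterior law $\winnerposteriorCDF_t$ is a mean-preserving contraction of the modified prior $\modifiedpriorCDF_i$. At $t=0$ this is exactly the existing construction. For $t>0$, the argument of \Cref{lem:suff condi for full suplus extraction} carries over: losers still satisfy Eqn.~\eqref{cond:loser}, and the winner's surplus is nonnegative at every price below her own signal, so bidding one's signal remains an equilibrium. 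Bidder surplus on this part is $t\cdot\expect{\mymax(\signalprofile)-\secmax(\signalprofile)}$, which is continuous in $t$.

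To obtain the high-surplus end, I would concatenate this path with the parametrized version of the construction from \Cref{thm:pripri max bidder surplus iff}. In that construction the probability that a non-winner bids positively is a free parameter $\eta$. The revenue it produces, $O(\eta)$, tends continuously to $0$ as $\eta\to 0$, and the endpoint $\eta_\varepsilon$ is picked so that bidder surplus is at least $\optwelfare-\varepsilon$. The concatenated path starts at $B=0$ and ends at $B\ge\optwelfare-\varepsilon$. One remaining gap must be closed: the top of the $t$-path and the start of the $\eta$-path must meet. I would try to close it with a third continuous deformation, shrinking the signal interval of the $t=\bar t$ structure toward the degenerate low-price structure.

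I expect the main obstacle to be this joint continuity-plus-equilibrium requirement across the whole path. Each deformation must keep all of the following at every parameter value: signals independent, \ref{eq:construction rule} intact (for efficiency), and bidding-one's-signal an equilibrium. It must also make bidder surplus continuous, which needs a dominated-convergence argument on the posterior means and the choice of $(\lowerSignal,\upperSignal)$ as continuous functions of the parameter. The delicate verification is Bayes plausibility of $\winnerposteriorCDF_t$ against $\modifiedpriorCDF_i$ uniformly along the path. I would handle it first, by checking the convex-order inequality $\int_x^{\infty}\bigl(1-\winnerposteriorCDF_t\bigr)\le\int_x^\infty\bigl(1-\modifiedpriorCDF_i\bigr)$ as a continuous function of $t$.
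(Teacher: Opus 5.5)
There is a genuine gap, and it is the step you flag as remaining: nothing in your plan connects the two ends. In addition, two of the building blocks are not private private as described.

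\textbf{The building blocks.} The common-value gadget fails independence. If bidders $i\neq j$ both see $c$, then for distinct $a,b$ in the support of $c$ you get $\prob{\signal_i=a,\signal_j=b}=0<\prob{\signal_i=a}\prob{\signal_j=b}$, however the coins are drawn. Running separate designs on the common-value and single-positive parts of $\valprior$ has a second problem. The signal law becomes a mixture $p\,\phi^{\cc{c}}+(1-p)\,\phi^{\cc{s}}$ of two different laws, which is generally not a product. For instance, if the two parts use disjoint signal sets, ``my signal lies in the interval'' is the same event for every bidder. This is the very objection you raise against mixing endpoints.

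\textbf{The $t$-path is capped.} Your $t$-path does not actually need the split, since \Cref{lem:local-coupling} handles any symmetric prior. Bidding one's signal also stays an equilibrium for $t\le 1$, although what survives for losers is $\expectedVal_i(\signalprofile)\le\mymax(\signalprofile)$, not Eqn.~\eqref{cond:loser}. But the path cannot go far. With i.i.d.\ uniform signals on an interval of length $L$, you get $B=t\,\expect{\mymax(\signalprofile)-\secmax(\signalprofile)}=tL/(\biddernum+1)$ and $R=\expect{\secmax(\signalprofile)}=\lowerSignal+\frac{\biddernum-1}{\biddernum+1}L$. Moreover $\lowerSignal\ge 0$, because posterior means arbitrarily close to $\lowerSignal$ occur. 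Hence $R\ge(\biddernum-1)B$, so $B\le\optwelfare/\biddernum$. Shrinking the interval sends $B$ to $0$, not toward the low-price structure. So on $(\optwelfare/\biddernum,\optwelfare-\varepsilon]$ your ``third deformation'' carries the entire theorem, and you do not supply it.

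\textbf{The $\eta$-end.} In the construction behind \Cref{thm:pripri max bidder surplus iff}, bidder $\biddernum$ always bids $t_\biddernum>0$. The free parameter there is the size of the low signals, not a probability. More importantly, that family cannot be pushed to high revenue by raising its signals alone. Let $Q_1,Q_2$ be the prior masses of common-value and single-positive profiles, and $\winnerMean,\val^\dagger$ the corresponding conditional means of $\mymax(\valprofile)$. With every common-value state awarded to bidder 1, a winner $\ell\ge 2$ has posterior mean $\val^\dagger$, while bidder 1 has $\frac{Q_1\winnerMean+(Q_2/\biddernum)\val^\dagger}{Q_1+Q_2/\biddernum}$. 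The price cannot exceed the smaller of the two, which is strictly below $\optwelfare$ unless $\winnerMean=\val^\dagger$.

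\textbf{What the paper does instead.} The paper's proof adds a second parameter inside that single hierarchical family. With probability $\alpha$, bidder 1 receives the low signal $\lsignal_\alpha$ in a common-value state, and the win passes to bidders $2,\dots,\biddernum$. These bidders draw from the same marginals $\phi_j$ in both regimes, so the signal law remains a product. Increasing $\alpha$ moves $\val^{\cc{win}}_{\ge 2}$ and $\val^{\cc{win}}_1$ toward each other until both equal $\optwelfare$ at $\alpha=\frac{\biddernum-1}{\biddernum}$. Meanwhile the losers' mean $\val^{\cc{inact}}=\theta(\alpha)\winnerMean$ stays below both. Placing the low signal between $\val^{\cc{inact}}$ and $\min\{\val^{\cc{win}}_{\ge2},\val^{\cc{win}}_1\}$ then sweeps revenue across $(0,\optwelfare)$, with no appeal to the full-extraction construction.
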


Another well studied setting in the literature is the independent private values (IPV) model. In this setting, revealing full information is itself a private private information.
Because bidders' values are independent, the truthful bidding is an efficient equilibrium. 
Let $(\BSfull, \revfull)$ be the welfare outcome under full information revelation, then $\BSfull = \expect[\valprofile\sim\valprior]{\mymax(\valprofile) - \secmax(\valprofile)}$. 
We show that every efficient welfare outcome between the seller optimal outcome and the outcome induced by full information can be implemented by some private private information structure.

\begin{theorem}
\label{thm:ipv pri pri efficient}
For the IPV setting (i.e., the prior $\valprior$ is a product measure), given any pair $(B, R)\in\reals_+^2$ where $B+R = \optwelfare$ and $B \le \BSfull$, there exists a private private information $\inforstructure$ and an equilibrium $\bidstrategyprofile$ such that $\Rev(\inforstructure, \bidstrategyprofile) = R$ and $\biddersurplus(\inforstructure, \bidstrategyprofile) = B$.
\end{theorem}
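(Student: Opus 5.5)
The plan is to build a one-parameter family of private private structures that runs continuously from full revelation (point $E$) to the full-extraction structure of \Cref{thm:pripri max rev} (point $B$). Every member of the family should admit an efficient equilibrium, so that $B+R=\optwelfare$ holds throughout. The intermediate value theorem then yields every $B\in[0,\BSfull]$. Because the private private class is not convex, I will not mix structures. Instead I will use a bidder-by-bidder threshold: since values are independent under IPV, any event of the form $\{\val_i\in A\}$ is independent across bidders, and each bidder can be told privately which side of a threshold she is on without creating correlation across signals.

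\textbf{Construction.} Fix a threshold $c$ and a tie parameter $q\in[0,1]$; the randomization at the atom $c$ is what makes the path continuous when $\valspace$ is discrete. Bidder $i$ is called ``high'' if $\val_i>c$, or if $\val_i=c$ and an independent private coin with bias $q$ comes up heads; otherwise she is ``low''. A high bidder's signal reveals $\val_i$ exactly, and she bids $\val_i$. The low bidders form a set $L$. Conditional on $L$, their values are i.i.d.\ from the prior truncated to the low region, which is a symmetric product prior on $|L|$ bidders. To these bidders I apply the construction of \Cref{lem:local-coupling} and \Cref{lem:global-assembly}, with signals in an interval $(\lowerSignal,\upperSignal)\subseteq[0,c]$, and they bid their signals.

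\textbf{Equilibrium and efficiency.} High-bidder signals are functions of own value and own coin, so they are independent of each other. Conditional on $L$, low-bidder signals are i.i.d.\ uniform and independent of the high bidders. So the joint signal law is a product, provided the low-signal law does not depend on $|L|$; I discuss this below. For efficiency: high bids exceed $c$, which is at least every low bid, and among low bidders \ref{eq:construction rule} selects the highest value. For incentives, I will follow the proof of \Cref{lem:suff condi for full suplus extraction}. A low bidder earns zero when she wins, and by Eqn.~\eqref{cond:loser} she cannot profitably overtake, since a high bidder's bid exceeds her conditional expected value. A high bidder faces a standard private-value second-price auction against bids at most her own value, so truthful bidding is a best response. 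The resulting bidder surplus is
\[
\mathbb{E}\big[(\max(\valprofile)-\secmax(\valprofile))\,\mathbf{1}\{\text{winner is high}\}\big].
\]
Rents are positive only when at least one bidder is high. This quantity equals $\BSfull$ when every bidder is high ($c$ below $\min\valspace$, or $q=1$ at the bottom atom), and equals $0$ when $c=\max\valspace$ and $q=0$. It is continuous and monotone along the lexicographic path in $(c,q)$, because $q$ interpolates linearly across each atom. The intermediate value theorem then gives every $B\in[0,\BSfull]$, with $R=\optwelfare-B$.

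\textbf{Main obstacle.} The delicate step is independence of the low bidders' signals from $|L|$. \Cref{lem:local-coupling} picks the interval $(\lowerSignal,\upperSignal)$ depending on the number of participants, but a bidder's signal marginal must not vary with how many rivals are low. I would first try to show that a single interval works simultaneously for all $n\le\biddernum$. This requires checking that, for every $n$, the winner-posterior law $\winnerposteriorCDF$ induced by the second-highest of $n$ uniform signals is a mean-preserving contraction of the modified prior $\modifiedpriorCDF_i$, by shrinking the interval toward the modified-prior mean. If that fails, the fallback is to allow non-uniform but $n$-independent signal marginals and redo the Bayes-plausibility argument of \Cref{lem:local-coupling} for that fixed marginal. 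A further technical point is making sure that \ref{eq:construction rule} and \ref{cond:winner} are evaluated conditional on the full signal profile, which reveals $L$. That is what allows the local couplings to depend on $|L|$ even though the signal marginals cannot.
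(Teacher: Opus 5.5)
Your architecture (a threshold $c$ with an atom-splitting coin, full revelation for high bidders, full extraction among low bidders, and the intermediate value theorem along the lexicographic path) matches the paper's. However, the step you flag as the main obstacle is a genuine gap, and your first fix cannot work.

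Running \Cref{lem:local-coupling} on the low set for every size $n=|L|$ with one common interval requires exact mean matching
\[
\lowerSignal+\Length\,\frac{n-1}{n+1}\;=\;\mu_n\;\triangleq\;\mathbb{E}\bigl[\max\text{ of } n \text{ i.i.d.\ draws from the truncated prior}\bigr],\qquad n=2,\dots,\biddernum .
\]
Here $\Length\frac{n-1}{n+1}$ is the mean of the second-highest of $n$ uniform signals, and $\mu_n$ is the mean of the modified prior $\modifiedpriorCDF_i$ for $n$ IPV bidders. That gives $\biddernum-1$ equations in two unknowns.
\begin{itemize}
\item The $\mu_n$ are strictly increasing in $n$, with gaps fixed by the prior, while the achievable gaps vanish as $\Length\to 0$. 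So ``shrinking toward the modified-prior mean'' is not well defined: there are $\biddernum-1$ different means.
\item The system is generically infeasible for $\biddernum\ge 4$.
\item Even for $\biddernum=3$, the forced length $\Length=6(\mu_3-\mu_2)$ can violate Bayes plausibility. A low prior uniform on $\{0,1\}$ forces the interval $(1/2,5/4)$, which exceeds the value support.
\end{itemize}
Your fallback, a common non-uniform marginal, faces the same family of moment and contraction constraints, and you give no argument for it. Without one of these fixes, a low bidder's signal carries information about $|L|$, hence about rivals' signals, and the structure is not private private.

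The missing idea is that full extraction among low bidders is needed only on the event that \emph{all} bidders are low. Whenever some bidder is high, the low bidders lose anyway. The paper therefore uses the full-extraction structure only on the all-low event, with a single interval tuned to $n=\biddernum$. Otherwise it gives each low bidder $j$ an independently garbled signal $\mapping_j(\val_j,r_j)\sim\uniform(\lowerSignal,\upperSignal)$ with the same marginal. The signal law then factors as
\[
\prod_i\bigl[\Pr(i\text{ high})\,\mathrm{law}(\val_i\mid i\text{ high})+\Pr(i\text{ low})\,\uniform(\lowerSignal,\upperSignal)\bigr].
\]
Your incentive argument then goes through unchanged, because the signal profile reveals whether anyone is high.

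One further correction concerns your surplus formula. When exactly one bidder is high, the price is the highest low signal, not $\secmax(\valprofile)$. In the paper's construction that price lies near $\mathbb{E}[\max_j\val_j\mid\text{all bidders low}]$. So your displayed expression and the linearity-in-$q$ claim are not exact. This is harmless: the intermediate value theorem only needs continuity of the true surplus along the path (with the interval chosen continuously in $(c,q)$) and the endpoint values $\BSfull$ and $0$.
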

\begin{figure}[t]
    \centering
    \resizebox{0.40\textwidth}{!}{%

\begin{tikzpicture}[scale=1.2, >=Stealth]
    \tikzset{
        axis/.style={->, line width=2.2pt, >=Stealth},
        tick/.style={line width=2.2pt},
        dashedline/.style={dashed, line width=2.2pt, black}, 
        boundary/.style={line width=2.2pt, black},
        axlabel/.style={font=\Huge},
        ticklabel/.style={font=\huge},
        ptlabel/.style={font=\large, align=center},
    }

    \def\Vmax{9}   
    \def\Tval{5}

    \coordinate (O) at (0,0);
    \coordinate (V1Max) at (\Vmax, 0);
    \coordinate (V2Max) at (0, \Vmax);
    \coordinate (TopRight) at (\Vmax, \Vmax);
    
    \coordinate (Center) at (\Tval, \Tval);

    
    \fill[gray!10] (O) rectangle (\Tval, \Tval);

    \fill[gray!25] (\Tval, 0) rectangle (\Vmax, \Tval);

    \fill[gray!25] (0, \Tval) rectangle (\Tval, \Vmax);

    \fill[gray!50] (\Tval, \Tval) rectangle (\Vmax, \Vmax);

    \draw[axis] (0, 0) -- ({\Vmax + 0.8}, 0) 
        node[right, axlabel] {$v_1$};
        
    \draw[axis] (0, 0) -- (0, {\Vmax + 0.8}) 
        node[above, axlabel] {$v_2$};

    \draw[dashedline] (\Tval, 0) -- (\Tval, \Vmax);
    
    \draw[dashedline] (0, \Tval) -- (\Vmax, \Tval);

    \draw[boundary] (O) rectangle (\Vmax, \Vmax);

    
    \draw[tick] (0, 0) -- (0, -0.15) 
        node[below, ticklabel] {0};
        
    \draw[tick] (\Tval, 0) -- (\Tval, -0.15)
        node[below, ticklabel] {$t$};
        
    \draw[tick] (\Vmax, 0) -- (\Vmax, -0.15)
        node[below, ticklabel] {$\bar{v}$};

    \draw[tick] (0, 0) -- (-0.15, 0) 
        node[left, ticklabel] {0};

    \draw[tick] (0, \Tval) -- (-0.15, \Tval)
        node[left, ticklabel] {$t$};
        
    \draw[tick] (0, \Vmax) -- (-0.15, \Vmax)
        node[left, ticklabel] {$\bar{v}$};

    
    \node[ptlabel] at ({0.5*\Tval}, {0.5*\Tval}) {Low-Low Region:\\Rev-Optimal Coupling\\ $x_i(s) = \textsc{secmax}(s)$ \\ (Zero Surplus)};
    
    \node[ptlabel] at ({0.5*(\Tval+\Vmax)}, {0.5*\Tval}) {High-Low Region:\\
    $s_1  = v_1$; $s_2 \sim \mathrm{Unif}(S)$\\(High Rent)};
    
    \node[ptlabel] at ({0.5*\Tval}, {0.5*(\Tval+\Vmax)}) {Low-High Region:\\
    $s_1 \sim \mathrm{Unif}(S)$; $s_2 = v_2$\\(High Rent)};
    
    \node[ptlabel] at ({0.5*(\Tval+\Vmax)}, {0.5*(\Tval+\Vmax)}) {High-High Region:\\
Full Info Coupling\\$s_i = v_i$\\(High Surplus)};

\end{tikzpicture}
}
    \caption{Characterizing a continuum of efficient outcomes via a threshold based hybrid information structure (2 bidders). 
    For valuations below the threshold $t$ (Low-Low Region), the mechanism defaults to the seller-optimal private private structure with zero bidder surplus. Conversely, in the High-High Region ($v > t$), it converges to full information coupling. The intermediate regions (Low-High/High-Low) provide partial information rents by revealing the true value to only one bidder.}
    \label{fig:illustration of mix bidder optimal and seller optimal}
    \vspace{-10pt}
\end{figure}
As implied by \Cref{thm:ipv pri pri efficient}, 
for the IPV setting, we are able to characterize the continuum of efficient outcomes connecting the seller optimal outcome, denoted by $(\BSselleropt, \revselleropt)$, and the full information outcome$(\BSfull, \revfull)$.
To do so, we construct a family of private private information structures indexed by a threshold $t$  that partitions the valuation space into a ``low region'' and a ``high region.'' 
The key feature is that the signaling rule depends on the realized value profile.
If all bidders' valuations lie in the low region, we use the full-surplus extraction information structure constructed in  \Cref{subsec:maximum rev construction}. 

In contrast, when the realized profile contains at least one bidder in the high region, we reveal the values of those high region bidders so the efficient winner is identified,
and the main subtlety is how to define signals for the remaining bidders whose valuations lie in the low region.
We cannot simply apply the low region full-surplus extraction information structure conditional on ``everyone is low,'' because the signal generation rule in that structure is defined exclusively for valuation profiles where all components are low. Consequently, a profile containing a high-value bidder falls outside the domain of this specific information structure.
Instead, for each low region bidder we add an independent private randomization and remap their low value into a signal that has the same target marginal distribution as in the low only construction. 
This remapping is feasibility driven: it restores product structure across bidders' signals even though the overall signaling rule depends on the full profile. 
At the same time, it preserves incentives, because these low region signals always stay below the threshold and satisfy the same local indifference properties as before, so a low bidder cannot profit by mimicking a high bid, and truthful bidding remains an equilibrium while efficiency is maintained.
By continuously varying the threshold $t$, we obtain a continuous path of the efficient welfare outcomes induced by the private private information, which forms the efficient frontier between $(\BSselleropt, \revselleropt)$ and $(\BSfull, \revfull)$. \Cref{fig:illustration of mix bidder optimal and seller optimal} illustrates this threshold-based construction and the resulting partition of the valuation space for 2 bidders.

\section{Conclusions}
\label{sec:conclusions}

We study how private private information shapes welfare outcomes in second-price auctions. We show that ex ante signal independence does not prevent seller-optimal efficient outcomes: the seller can fully extract surplus under a private private information structure, and approximately so with a strict equilibrium. In contrast, private private information sharply limits bidder-optimal outcomes. Bidder surplus generally cannot equal maximal welfare and can be made arbitrarily close to maximal welfare only under the special prior distributions characterized in the paper. 
We also characterize the efficient surplus splits attainable under private private information. 
A natural direction for future research is to endogenize the private private requirement. In particular, it would be useful to understand under which market environments, privacy concerns, platform objectives, or downstream mechanism-design constraints private private information arises as an optimal disclosure policy rather than as an exogenous restriction.

\appendix
\section{Proofs in \texorpdfstring{\Cref{sec:pri pri rev}}{rev}}
\label{apx:proofs in pri pri rev}
\label{apx:subsec maximum rev construction}


\begin{proof}[Proof of \Cref{lem:suff condi for full suplus extraction}]
We first prove that Eqn.~\eqref{cond:loser} can be implied jointly by \ref{eq:construction rule} and \ref{cond:winner}. We then establish the efficient equilibrium.  

\xhdr{Eqn.~\eqref{cond:loser} holds due to \ref{eq:construction rule} and \ref{cond:winner}} Fix an arbitrary signal profile $\signalprofile$. Let $i$ denote a bidder with the highest signal, i.e., $\signal_i = \mymax(\signalprofile)$. (If multiple bidders hold the maximal signal, the argument applies symmetrically to any such bidder.) By \ref{eq:construction rule}, bidder $i$ holds the highest valuation in every state $\valprofile$ in the support of the posterior belief generated by $\signalprofile$. That is, $\val_i \geq \val_j$  for all $j \in [\biddernum]$.
Taking expectations over valuations conditional on $\signalprofile$ preserves this inequality 
$\expectedVal_i(\signalprofile) \geq \expectedVal_j(\signalprofile)$ for all $j \neq i$.
By \ref{cond:winner}, we have $\expectedVal_i(\signalprofile) = \secmax(\signalprofile)$. Substituting this into the inequality above yields $\expectedVal_j(\signalprofile) \leq \secmax(\signalprofile)$ for all $j$ with $\signal_j < \mymax(\signalprofile)$, which establishes Eqn.~\eqref{cond:loser}.

\xhdr{Truthful bidding is an efficient equilibrium}
Let $\biddersurplus_i(\bid_i \mid \signal_i)$ denote the interim expected payoff of bidder~$i$ when observing signal $\signal_i$ and submitting a bid $\bid_i$, assuming all other bidders bid truthfully (i.e., $\bid_j = \signal_j$ for all $j \neq i$). For any signal profile, let $y_i \equiv \max_{j \neq i} \signal_j$ denote the highest competing bid.

Consider the bidder $i$'s interim payoff $\biddersurplus_i(\bid_i \mid \signal_i)$ under the truthful bidding:
\begin{itemize}
    \item Case (i): $\signal_i \ge y_i$. Bidder $i$ wins and pays $y_i$. In this case, $\signal_i = \mymax(\signalprofile)$ and $y_i = \secmax(\signalprofile)$. 
    By \ref{cond:winner}, $\expectedVal_i(\signalprofile) = \secmax(\signalprofile)$. Thus, the bidder $i$'s realized payoff under this signal profile $\signalprofile$ is:
    \begin{align*}
    \expectedVal_i(\signalprofile) - y_i = \secmax(\signalprofile) - \secmax(\signalprofile) = 0~.
    \end{align*}
    Thus, taking the expectation of all such signaling profiles will give bidder~$i$'s interim payoff of $0$. 
    \item Case (ii): $\signal_i < y_i$. Bidder $i$ loses, resulting in a payoff of $0$.
\end{itemize}
Thus, the expected payoff from truthful bidding is identically zero: $\biddersurplus_i(\signal_i \mid \signal_i) = 0$.

Next, we show that any deviation is not profitable. Let $\deviationbid_i \neq \signal_i$ be the deviation bid.
\begin{itemize}
    \item 
    Upward deviations: 
    Consider a deviation to a higher bid $\deviationbid_i > \signal_i$. This deviation affects the allocation outcome only when $\signal_i < y_i < \deviationbid_i$.
    In this scenario, bidder $i$ turns a loss into a win. However, note that since $\signal_i < y_i$, bidder $i$ does not hold the highest signal, so $\mymax(\signalprofile) = y_i$. By Eqn.~\eqref{cond:loser}, we have $\expectedVal_i(\signalprofile) \leq \secmax(\signalprofile)$.
    The price paid is $y_i = \mymax(\signalprofile)$. Since $\secmax(\signalprofile) \leq \mymax(\signalprofile)$, the payoff is:
    $\expectedVal_i(\signalprofile) - y_i \leq \secmax(\signalprofile) - \mymax(\signalprofile) \leq 0$.
    Thus, the deviation yields a non-positive payoff, making it not profitable compared to the zero payoff from truthful bidding.
    \item 
    Downward deviations: 
    Consider a deviation to a lower bid $\deviationbid_i < \signal_i$. This affects the allocation outcome only when $\deviationbid_i < y_i < \signal_i$.
    In this scenario, bidder $i$ turns a win (with zero payoff) into a loss (with zero payoff). The bidder is indifferent, so the deviation is not strictly profitable.
\end{itemize}
Since no deviation yields a strictly positive payoff, truthful bidding $\bid_i = \signal_i$  constitutes an equilibrium. Furthermore, since $\biddersurplus_i(\signal_i \mid \signal_i) = 0$, the bidders' surplus in this equilibrium is zero.

We next show that the truthful bidding is an efficient equilibrium. 
Under truthful bidding (i.e., $\bid_j(\signal_j) = \signal_j$), the auction allocates the item to the bidder $i$ with the highest signal, i.e., $\signal_i = \mymax(\signalprofile)$. By condition \ref{eq:construction rule}, the event $\signal_i = \mymax(\signalprofile)$ implies that $\val_i = \mymax(\valprofile)$ for all $\valprofile$ in the support of the posterior. 
Consequently, the winning bidder is invariably the highest-valuation bidder, ensuring ex post efficiency.
\end{proof}

\begin{proof}[Proof of \Cref{lem:local-coupling}]
Fix a bidder $i\in[\biddernum]$. We fix a common signal space $\constructsignalspace$ and work within the class of
symmetric private private information structures. In this class, once $\constructsignalspace$ is fixed, the conditional distribution of signal profiles given that bidder $i$ is selected as the winner, denoted
by $\signalProfileDist^{(i)}$, is uniquely determined by uniform marginals and permutation invariance.
Given $\signalProfileDist^{(i)}$, the second-highest signal $\secmax(\signalprofile)$ induces a distribution
$\winnerExpectedvalProb\in\Delta(\constructsignalspace)$ via
$\winnerExpectedvalProb(\signal_\ell)\ \triangleq\ \prob[\signalprofile\sim \signalProfileDist^{(i)}]{\secmax(\signalprofile)=\signal_\ell}$.
For the signal space $\constructsignalspace$, the probability measure function of $\winnerExpectedvalProb$ is given by \eqref{eq:winner vale dist}.
Our goal is to construct, conditional on bidder $i$ winning, a joint distribution over values and
signals such that bidder $i$'s posterior mean equals the second-highest signal at every realized
signal profile, namely
\[
    \expect{\val_i\mid \signalprofile}=\secmax(\signalprofile)\quad\text{for all } s \text{ in the winning signal event.}
\]
This is exactly the requirement needed to satisfy \ref{cond:winner} for bidder $i$ on that event.
Equivalently, it suffices to show that the induced distribution $g$ is Bayes-plausible under bidder
$i$'s modified prior $\modifiedpriorCDF_i$ (the distribution of $\val_i$ conditional on $i$ winning under uniform
tie-breaking). 
We establish this Bayes-plausibility by choosing the signal space parameters $(\lowerSignal,\upperSignal)$ so that $\winnerExpectedvalProb$ and $\modifiedpriorCDF_i$ have the same mean and $\modifiedpriorCDF_i$  is dominated by  $\winnerExpectedvalProb$ in the second-order stochastic dominance.
Let $\winnerpriorCDF$ and $\winnerposteriorCDF$ denote the CDFs of $\modifiedpriorCDF_i$ and $\winnerExpectedvalProb$, respectively. 
Let $\winnerMean = \expect[\modifiedpriorCDF_{i}]{\val_i}$. We first show that there exist parameters $(\lowerSignal, \upperSignal) \in \mathbb{R}^+ \times \mathbb{R}^+$ such that 
$\expect[\winnerExpectedvalProb]{\signal_\ell} = \winnerMean$.
We have 
$\expect[\winnerExpectedvalProb]{\signal_\ell} = \int_{\lowerSignal}^{\upperSignal} 
\signal_\ell \cdot \winnerExpectedvalProb(\signal_\ell)
\, \dd \signal_\ell$.
Since the density $\winnerExpectedvalProb$ is continuous in $(\lowerSignal, \upperSignal)$, $\expect[\winnerExpectedvalProb]{\signal_\ell}$ is a continuous function of the parameters $(\lowerSignal, \upperSignal)$. 
Fix the interval length $\Length = \upperSignal - \lowerSignal$ to be a constant such that $0 < \Length$. We perform a change of variables with $\delta_\ell = \signal_\ell - \lowerSignal$, where $\delta_\ell \in (0, \Length)$. 
The probability density is defined as follows
$\gamma(\delta_\ell; \Length)  
\triangleq \biddernum(\biddernum-1)\left(\frac{\delta_\ell}{\Length}\right)^{\biddernum-2}\cdot \frac{\Length-\delta_\ell}{\Length^2}$.
The expectation then becomes
$\expect[\winnerExpectedvalProb]{\signal_\ell} 
= \lowerSignal + \int_{0}^{\Length} \delta_\ell \cdot \gamma(\delta_\ell; \Length) \, \dd \delta_\ell 
= \lowerSignal + K(\Length)$
where $K(\Length) \triangleq \int_{0}^{\Length} \delta_\ell \cdot \gamma(\delta_\ell; \Length) \, \dd \delta_\ell$ is a constant determined strictly by the interval length $\Length$.
This yields the linear form $\expect[\winnerExpectedvalProb]{\signal_\ell} = \lowerSignal + K(\Length)$, where $0 < K(\Length) < \Length$ is a constant independent of $\lowerSignal$ and is positively correlated with $\Length$. As $\lowerSignal$ varies from $\minVal$ to $\maxVal$ and $\Length \to 0$, the continuous function $\expect[\winnerExpectedvalProb]{\signal_\ell}$ spans the entire interval of $(\minVal, \maxVal)$. Since $\winnerMean \in (\minVal, \maxVal)$, by the Intermediate Value Theorem, there exists a $\lowerSignal$ such that $\expect[\winnerExpectedvalProb]{\signal_\ell} = \winnerMean$.

Next, we verify the second-order stochastic dominance condition. We require
$\MPCfunction(x) \triangleq \int_{\minVal}^{x} \bigl(\winnerpriorCDF(t) - \winnerposteriorCDF(t)\bigr) \,\dd t \ge 0$ for all $x \in [\minVal, \maxVal]$.
Note that for any $x \in [\minVal, \maxVal]$, the term $\int_{\minVal}^{x} \winnerpriorCDF(t) \,\dd t = \expect[\modifiedpriorCDF_i]{\max\{0, x-\val_i\}}$ is a constant determined solely by the prior $\modifiedpriorCDF_i$.

Now, consider the second term. Having fixed $\lowerSignal = \winnerMean - K(\Length)$ to match the mean, the integral of the posterior CDF can be expressed as
$\int_{\minVal}^{x} \winnerposteriorCDF(t) \,\dd t = \expect[\winnerExpectedvalProb]{\max\{0, x-\signal_\ell\}}$.
This expression is a continuous function of the interval length $\Length$. As $\Length \to 0$, the support of $\winnerExpectedvalProb$ shrinks to the singleton $\{\winnerMean\}$, meaning $\winnerExpectedvalProb$ converges weakly to the Dirac measure $\delta_{\winnerMean}$. Consequently, for any $x < \winnerMean$:
\begin{align*}
    \lim_{\Length \to 0} \expect[\winnerExpectedvalProb]{\max\{0, x-\signal_\ell\}} = \max\{0, x-\winnerMean\} = 0~.
\end{align*}
Since $\winnerMean$ is the mean of the prior $\modifiedpriorCDF_i$, the point mass $\delta_{\winnerMean}$ constitutes a Mean-Preserving Contraction (MPC) of $\modifiedpriorCDF_i$. By the strict convexity of the max function, for any non-degenerate prior, the inequality holds strictly for small $\Length$. Therefore, by continuity, there exists a sufficiently small $\Length > 0$ such that the SOSD condition is satisfied, implying that $\winnerExpectedvalProb$ can be induced by $\modifiedpriorCDF_i$.

We are now ready to construct the local information structure $\signalscheme^{(i)}$.
As the distribution $\winnerExpectedvalProb$ is Bayes-plausible under the modified prior $\modifiedpriorCDF_i$, there exists a martingale coupling between $\val_i\sim\modifiedpriorCDF_i$ and $\expectedVal_i \sim \winnerExpectedvalProb$:
equivalently, there are conditional distributions $\{\conditionalDist_{t}\}_{t\in\constructsignalspace} \in \Delta(\valspace)$ such that
\begin{align*}
    \sum\nolimits_{t\in\constructsignalspace} \winnerExpectedvalProb(t)\,\conditionalDist_{t}(\val_i) = \modifiedpriorCDF_i(\val_i)
    \quad\text{and}\quad
    \expect[\conditionalDist_{t}]{\val_i} =t\ \ \text{for all }t\in \constructsignalspace~.
\end{align*}

We construct the joint distribution $\signalscheme^{(i)}$ as follows:
We first sample $\signalprofile\sim \signalProfileDist^{(i)}$ and set $t=\secmax(\signalprofile)$, then we sample $\val_i\sim \conditionalDist_t$, and finally, we sample $\val_{-i}$ from the conditional distribution of $\modifiedpriorCDF^{(i)}(\cdot\mid \val_i)$ independently of $\signalprofile$ given $v_i$. 
That is, for any $\valprofile \in \valspace^{(i)}$ and $\signalprofile \in \constructsignalspace^{(i)}$:
\begin{align*}
    \signalscheme^{(i)}(\valprofile, \signalprofile) & = \prob{\valprofile \mid \signalprofile} \cdot \signalProfileDist^{(i)}(\signalprofile) = \prob{(\val_i, \val_{-i}) \mid \signalprofile} \cdot \signalProfileDist^{(i)}(\signalprofile) 
    = \prob{\val_i \mid \signalprofile} \cdot \prob{\val_{-i} \mid \signalprofile, \val_i}\cdot\signalProfileDist^{(i)}(\signalprofile)~,
\end{align*}
By the construction we have that $\val_{-i}$ is independent of $\signalprofile$ given $\val_i$. Thus,
\begin{align*}
    \signalscheme^{(i)}(\valprofile, \signalprofile) 
    = \prob{\val_i \mid \signalprofile} \cdot \prob{\val_{-i} \mid \signalprofile, \val_i}\cdot\signalProfileDist^{(i)}(\signalprofile) 
    & = \prob{\val_i \mid \signalprofile} \cdot \prob{\val_{-i} \mid \val_i}\cdot\signalProfileDist^{(i)}(\signalprofile)\\
    & = \conditionalDist_{\secmax(\signalprofile)}(\val_i) \cdot \modifiedpriorCDF^{(i)}(\val_{-i} \mid \val_i) \cdot \signalProfileDist^{(i)}(s)~.
\end{align*}
By construction, the $\signalprofile$ marginal is $\signalProfileDist^{(i)}$, the $\valprofile$ marginal is $\modifiedpriorCDF^{(i)}$, and for all $\signalprofile$, 
\begin{align*}
    \expect[\signalscheme^{(i)}]{\val_i \mid \signalprofile} &= \sum\nolimits_{\val_i} \val_i \cdot \frac{\sum\nolimits_{\val_{-i}} \signalscheme^{(i)}((\val_i, \val_{-i}), \signalprofile)}{\signalProfileDist^{(i)}(\signalprofile)}\\
    & = \sum\nolimits_{\val_i} \val_i \cdot \frac{\sum\nolimits_{\val_{-i}}\conditionalDist_{\secmax(\signalprofile)}(\val_i) \cdot \modifiedpriorCDF^{(i)}(\val_{-i} \mid \val_i) \cdot \signalProfileDist^{(i)}(s)}{\signalProfileDist^{(i)}(\signalprofile)}\\
    & = \sum\nolimits_{\val_i} \val_{i} \cdot  \conditionalDist_{\secmax(\signalprofile)}(\val_i)  = \expect[\conditionalDist_{\secmax(\signalprofile)}]{\val_i} =\secmax(\signalprofile)~.
\end{align*}
which satisfies condition \wtresedit{(3)}.
Moreover, for all $\signalprofile \in \strictsignalspace^{(i)}$, since $\valprofile \in \valspace^{(i)}$ and only bidder~$i$ has the highest signal and 
\wtresedit{condition (3)}
holds, we have that the signal profile $\signalprofile$ satisfies conditions \ref{cond:winner} and \ref{eq:construction rule}.
\end{proof}

\begin{proof}[Proof of \Cref{lem:global-assembly}]
We first verify the marginals.
For any value profile $\valprofile\in \valspace^\biddernum$,
\begin{align*}
    \int_{\signalprofile\in \constructsignalspace^\biddernum}\constructsignalscheme(\valprofile,\signalprofile)\, \dd \signalprofile
    =\frac{1}{\biddernum}\sum\nolimits_{i\in[\biddernum]} \int_{\signalprofile\in\constructsignalspace^{(i)}} \constructsignalscheme^{(i)}(\valprofile,\signalprofile) \, \dd \signalprofile
    =\frac{1}{\biddernum}\sum\nolimits_{i\in[\biddernum]} \modifiedpriorCDF^{(i)}(\valprofile)~.
\end{align*}
By definition of $\modifiedpriorCDF^{(i)}$, we have $\modifiedpriorCDF^{(i)}(\valprofile)=0$ unless $\valprofile\in \valspace^{(i)}$,
and when $\valprofile\in \valspace^{(i)}$ the tie weighted normalization implies
$\modifiedpriorCDF^{(i)}(\valprofile)=\biddernum\cdot \valprior(\valprofile)/\winnerNum(\valprofile)$. Therefore
\begin{align*}
    \frac{1}{\biddernum}\sum\nolimits_{i\in[\biddernum]} \modifiedpriorCDF^{(i)}(\valprofile)
    =\frac{1}{\biddernum}\sum\nolimits_{i\in\arg\max(v)} \biddernum\cdot \frac{\valprior(\valprofile)}{\winnerNum(\valprofile)}
    =\valprior(\valprofile)~,
\end{align*}
so the $\valprofile$ marginal of $\constructsignalscheme$ is exactly $\valprior$.
Similarly, for any signal profile $\signalprofile\in\constructsignalspace^\biddernum$,
\begin{align*}
    \sum\nolimits_{\valprofile\in \valspace^\biddernum}\constructsignalscheme(\valprofile,\signalprofile)
    =\frac{1}{\biddernum}\sum\nolimits_{i\in[\biddernum]} \sum\nolimits_{\valprofile\in \valspace^{(i)}} \constructsignalscheme^{(i)}(\valprofile,\signalprofile)
    =\frac{1}{\biddernum}\sum\nolimits_{i\in[\biddernum]} \signalProfileDist^{(i)}(\signalprofile)~.
\end{align*}
$\signalProfileDist^{(i)}(\signalprofile) = 0$ for all $\signalprofile \notin \strictsignalspace^{(i)}$, and for
$\signalprofile\in\strictsignalspace^{(i)}$ we have $\signalProfileDist^{(i)}(\signalprofile)=\biddernum\cdot \signalProfileDist(\signalprofile)$. 
Thus, we have 
$\frac{1}{\biddernum}\sum\nolimits_{i\in[\biddernum]} \signalProfileDist^{(i)}(\signalprofile)
=\frac{1}{\biddernum} \cdot \biddernum \signalProfileDist(\signalprofile)
=\signalProfileDist(\signalprofile)$.
Thus, the $\signalprofile$ marginal is uniform on $\constructsignalspace^\biddernum$, which implies the signals are independent and
uniform across bidders, so the information structure is private private. Permutation invariance
follows from symmetry (construct $\constructsignalscheme^{(i)}$ by permuting $\constructsignalscheme^{(1)}$). Thus, $\hat{\inforstructure}$ is a symmetric private private information structure.

We define $\strictsignalspace = \cup_{i} \strictsignalspace^{(i)} $. We verify the conditions \ref{cond:winner} and \ref{eq:construction rule} for all $\signalprofile \in \strictsignalspace$.
Fix $\signalprofile\in\strictsignalspace$ and a bidder $j$ with $\signal_j=\mymax(\signalprofile)$. By Eqn.~\eqref{eq:global-mixture},
$\constructsignalscheme(\cdot\mid \signalprofile)$ is a convex combination of $\constructsignalscheme^{(i)}(\cdot\mid \signalprofile)$ over indices $i$ with
$\signalprofile\in\strictsignalspace^{(i)}$. For each such $i$, $\valspace^{(i)}$ implies that under
$\constructsignalscheme^{(i)}(\cdot\mid \signalprofile)$ we have $\val_i =\max(\valprofile)$. Hence the same holds under the
mixture $\constructsignalscheme(\cdot\mid \signalprofile)$, we thus prove \ref{eq:construction rule}.
Moreover, 
\wtresedit{condition (3)}
gives $\expect[\constructsignalscheme^{(i)}]{\val_i\mid \signalprofile}=\secmax(\signalprofile)$ for any $\signalprofile \in \strictsignalspace^{(i)}$, we thus prove \ref{cond:winner}.
\end{proof}

\begin{proof}[Proof of \Cref{prop:sym pri-pri rev max}]
In \Cref{lem:global-assembly}, we construct the symmetric private private information
structure $\hat{\inforstructure}
=
(\constructsignalspace^\biddernum,\constructsignalscheme)$
satisfying conditions \ref{cond:winner} and \ref{eq:construction rule} for all signal
profiles $\signalprofile \in \bigcup_{i\in[\biddernum]}\strictsignalspace^{(i)}$.
Moreover, by construction, the marginal distribution of the signal profile is the uniform
product distribution on the continuous signal space \(\constructsignalspace^\biddernum\).
Thus, the signal distribution is atomless, and the set of signal profiles with ties, i.e.,
$\constructsignalspace^\biddernum
\setminus
\bigcup\nolimits_{i\in[\biddernum]}\strictsignalspace^{(i)}$
has measure zero. Consequently, conditions \ref{cond:winner} and
\ref{eq:construction rule} hold with probability one, and with probability one there is a unique highest signal.

Therefore, \Cref{lem:truthful bidding} (in online appendix) applies: under \(\hat{\inforstructure}\), truthful bidding is a Bayes Nash equilibrium, the induced allocation is efficient, and bidders' surplus is zero. 
Thus the constructed information structure achieves efficiency and extracts full surplus. This completes the proof.
\end{proof}

\section{Proofs in \texorpdfstring{\Cref{sec:pri pri bidders surplus}}{biddersurplus}}
\label{apx:proofs in pri pri biddersurplus}

\begin{proof}[Proof of \Cref{thm:pripri max bidder surplus}]
\wtedit{With \Cref{lem:truthful bidding}, it suffices to consider truthful bidding as the equilibrium for all information structures. }
    We proceed by contradiction. Suppose there exists a private private information structure $\inforstructure$ that admits an equilibrium $\bidstrategyprofile$ (where $\bidstrategyprofile_i(\signal_i) = \signal_i$) such that the bidders' aggregate surplus equals the efficient surplus $\optwelfare$. Since the total social surplus is the sum of bidder surplus and seller revenue, the assumption that bidders capture the entire efficient surplus implies that the seller's expected revenue must be zero.
    Let $\signalspace_i$ denote the support of the marginal signal distribution for bidder $i$. We consider three exhaustive cases regarding the signal structures:
    \begin{itemize}
    \item 
    
    \textbf{Case 1: Two or more bidders have positive signal support.}
    Suppose there exist distinct bidders $i, j$ whose marginal distributions assign positive mass to strictly positive signals. Let $P_k \triangleq \prob{\signal_k > 0} > 0$ for $k \in \{i, j\}$.
    By the independence, the joint distribution is the product of marginals. Consequently, the event where both bidders simultaneously receive positive signals, $\event \triangleq \{ \signalprofile : \signal_i > 0 \text{ and } \signal_j > 0 \}$, occurs with strictly positive probability, namely,
    $\prob{\event} = P_i \cdot P_j > 0$
    For any realization $\signalprofile \in \event$, we have $\secmax(\signalprofile) \ge \min\{\signal_i, \signal_j\} > 0$.
    The seller's expected revenue is thus strictly positive, namely, 
    $\Rev \ge \int_{\event} \secmax(\signalprofile) \, \dd \signalProfileDist(\signalprofile) > 0$.
    This contradicts the requirement for $\biddersurplus(\inforstructure, \bidstrategyprofile) = \optwelfare$.


   \item 
    \xhdr{Case 2: Exactly one bidder possesses non-zero signals}
Suppose there exists exactly one bidder $i$ such that $\supp(\signalspace_i) \not\subseteq \{0\}$, while for all other bidders $j \neq i$, the signal distribution is degenerate at zero (i.e., $\prob{\signal_j = 0} = 1$). 
Since valuations are not purely common values and the prior is symmetric, there exists a set of valuation profiles with positive measure where bidder $i$ does not hold the highest valuation. Specifically, let $\mathcal{E} = \{ \valprofile : \val_i < \max_{k \neq i} \val_k \}$. By symmetry and the non-common value assumption, $\prob{\mathcal{E}} > 0$.
Consider any profile $\valprofile \in \mathcal{E}$. For all bidders $j \neq i$, the signal is deterministically $\signal_j = 0$. Bidder $i$, however, receives a signal $\signal_i \ge 0$. Consequently, bidder $i$ always receives a signal that is at least as high as any other bidder's signal ($\signal_i \ge \max_{j \neq i} \signal_j = 0$).
Thus, bidder $i$ wins the item with strictly positive probability, which violates efficiency. 

    \item 
    \textbf{Case 3: All bidders possess only the zero signal.}
    Suppose $\signalspace_i = \{0\}$ for all $i$. The allocation is necessarily state-independent. Since the efficient winner varies with $\valprofile$ (due to non-common values), efficiency is trivially violated.
    \end{itemize}
    In summary, any private private information structure either yields positive seller revenue (Case 1) or fails to implement the efficient allocation(Cases 2 and 3). Thus, no such private private information structure exists. 
\end{proof}

\section{Proofs in \texorpdfstring{\Cref{sec:general}}{general}}
\label{apx:proofs in general}


\begin{proof}[Proof of \Cref{thm:iff pri pri efficient}]

We explicitly construct a parametric class of private private information structures, denoted by $\inforstructure^{\alpha}$. We demonstrate that by varying the parameter $\alpha$ (and the associated signal space), the seller's revenue $\Rev(\inforstructure^{\alpha}, \bidstrategyprofile)$ continuously spans the entire interval $(0, \optwelfare)$. Crucially, this construction strictly preserves allocative efficiency throughout the process, thereby fully characterizing the efficient frontier.

For any value profile $\valprofile$, we refer to the bidder with a positive valuation under this $\valprofile$ as the active bidder. Let $\maxVal = \max\{\mymax(\valprofile) \text{ for all } \valprofile \}$. We construct the information structure as follows. Define $Q_1 = \sum_{\valprofile: \val_1 = \cdots = \val_\biddernum} \valprior(\valprofile)$ and $Q_2 = \sum_{\valprofile: \valprofile \text{ has a single active bidder}} \valprior(\valprofile) = 1-Q_1$. 
We further define 
\begin{align*}
    \winnerMean 
    \triangleq \frac{1}{Q_1} \cdot \sum_{\valprofile: \val_1 = \cdots = \val_\biddernum} \max_i v_i \cdot \valprior(\valprofile)~,\;\;
    \val^\dagger 
    \triangleq \frac{1}{Q_2} \cdot \sum_{\valprofile: \valprofile \text{ has a single active bidder}} \max_i \val_i \valprior(\valprofile)~.
\end{align*}

For bidder 1, the signal space is $\signalspace_1 = \{\maxVal, \lsignal_{\alpha}\}$ with marginal signal distribution $\signalProfileDist_1 = ((1-\alpha)\cdot Q_1 + \frac{1}{\biddernum}Q_2)\delta_{(\maxVal)} + (\alpha\cdot Q_1 + \frac{(\biddernum-1)Q_2}{\biddernum}) \delta_{(\lsignal_{\alpha})}$.
For each bidder $i \in [2:\biddernum-1]$, the signal space is $\signalspace_i = \{\hsignal_{\alpha, i}, \lsignal_{\alpha}\}$ with distribution $\signalProfileDist_i = \frac{1}{\biddernum+1-i} \cdot \delta_{(\hsignal_{\alpha, i})} + \frac{\biddernum-i}{\biddernum+1-i} \cdot \delta_{(\lsignal_{\alpha})}$. For bidder $\biddernum$, let $\signalspace_\biddernum = \{\hsignal_{\alpha, \biddernum}\}$. We impose the ordering $\maxVal > \hsignal_{\alpha, 2} > \hsignal_{\alpha, 3} > \cdots > \hsignal_{\alpha, \biddernum} > \lsignal_{\alpha}$, and define $\hsignal_{\alpha, 1} = \maxVal$ for all $\alpha$.

Given any value profile $\valprofile \in \supp(\valprior)$, signals are generated as follows:
\begin{itemize}
    \item If $\valprofile$ features a single active bidder $i \in [1: \biddernum]$ (i.e., $\val_i > 0$): We set $\signal_j = \lsignal_{\alpha}$ for all $j < i$ and $\signal_i = \hsignal_{\alpha, i}$. For $j > i$, the signal $\signal_j$ is drawn independently from $\signalProfileDist_j$.
    \item If $\valprofile$ corresponds to the perfectly correlated case: We sample $\signal_1$ from $(1-\alpha) \cdot \delta_{(\hsignal_{\alpha, 1})} + \alpha\cdot \delta_{(\lsignal_{\alpha})}$ and draw each $\signal_i$ (for $i > 1$) from $\signalProfileDist_i$.
\end{itemize}
By construction, the signals are mutually independent, ensuring that the structure is a private private information structure.
Next, we analyze the incentives for truthful bidding. For any $\ell > 1$, define:
\begin{align*}
    \theta(\alpha) 
    \triangleq \prob{ \valprofile \text{ is perfectly correlated } \mid \signal_l = \mymax(\signalprofile)}
    =
    \frac{Q_1\cdot \frac{\alpha}{\biddernum-1}}{Q_1\cdot \frac{\alpha}{\biddernum-1} + Q_2 \cdot \frac{1}{\biddernum}}~.
\end{align*}
Given $i, \ell > 1$, let $\val^{\cc{inact}} \triangleq \expect{\val_i \mid \signal_\ell = \mymax(\signalprofile), i \neq \ell} = \winnerMean \cdot \theta(\alpha)$ denote bidder~$i$'s expected valuation when $\signal_\ell = \mymax(\signalprofile)$ (where $i \neq \ell$),
For any $\ell > 1$, let $\val^{\cc{win}}_{\ge 2} \triangleq \expect{\val_\ell \mid \signal_\ell = \mymax(\signalprofile)} = (1-\theta(\alpha)) \cdot \val^\dagger + \theta(\alpha) \cdot \winnerMean$ denote bidder~$\ell$'s expected valuation when $\signal_\ell = \mymax(\signalprofile)$.
Similarly, let $\val^{\cc{win}}_1 \triangleq \frac{Q_2/\biddernum}{Q_1(1-\alpha) + Q_2/\biddernum} \cdot \val^\dagger + \frac{Q_1(1-\alpha)}{Q_1(1-\alpha) + Q_2/\biddernum} \cdot \winnerMean$ denote bidder~$1$'s expected valuation when $\signal_1 = \maxVal$.
Let $\biddersurplus_i(\bid \mid \signal_i)$ be bidder~$i$'s expected payoff given signal $\signal_i$ and bid $b$, assuming other bidders bid truthfully. For each bidder $i > 1$:
\begin{align*}
    \biddersurplus_i(\hsignal_{\alpha, i} \mid \hsignal_{\alpha, i}) \ge \prob{\hsignal_{\alpha, i} = \mymax(\signalprofile)} (\val^{\cc{win}}_{\ge 2} - \hsignal_{\alpha, i+1})~,
\end{align*}
and $\biddersurplus_i(\lsignal_{\alpha} \mid \lsignal_{\alpha}) = 0$.
Consider a deviating bid $\bid^\dagger$. For any bidder $i>1$, the gain from deviation is bounded by:
\begin{align*}
 \biddersurplus_i(\bid^\dagger \mid \signal_i) - \biddersurplus_i(\signal_i \mid \signal_i) \leq \max \{ \prob{b^\dagger > \mymax(\signalprofile)} \cdot (\val^{\cc{inact}} - \lsignal_{\alpha}), 0 \}~.
\end{align*}
Therefore, if we ensure $\val^{\cc{win}}_{\ge 2} - \hsignal_{\alpha, 2} \ge 0$ and $\val^{\cc{inact}} - \lsignal_{\alpha} \leq 0$, no bidder $i > 1$ has an incentive to deviate.
For bidder 1, we have $\biddersurplus_1(\lsignal_\alpha \mid \lsignal_\alpha) = 0$, and 
$\biddersurplus_1(\maxVal \mid \maxVal) \ge  \val^{\cc{win}}_1 - \hsignal_{\alpha, 2}$.
We restrict attention to $\val^{\cc{win}}_1 - \hsignal_{\alpha, 2} \ge 0$. Under this condition, for any deviation $\bid^\dagger$, we have $\biddersurplus_1(\signal_1 \mid \signal_1) - \biddersurplus_1(\bid^\dagger \mid \signal_1) \leq 0$.
Thus, given any $\signal_{\alpha, 2} \leq \min(\val^{\cc{win}}_{\ge 2}, \val^{\cc{win}}_{1})$ and $\lsignal_\alpha \ge \val^{\cc{inact}}$, truthful bidding constitutes an equilibrium.

Observe that $\val^{\cc{win}}_{\ge 2}$ is strictly increasing and $\val^{\cc{win}}_{1}$ is strictly decreasing in $\alpha$. Equality holds at $\alpha = \frac{\biddernum-1}{\biddernum}$. Since $\val^{\cc{win}}_{\ge 2} > \val^{\cc{inact}}$ for all $\alpha$, it follows that for any $\alpha \in [0, \frac{\biddernum-1}{\biddernum}]$, there exist signal values $\hsignal_{\alpha, 2}$ and $\lsignal_{\alpha}$ such that $\val^{\cc{win}}_{\ge 2} \ge \signal_{\alpha, 2} > \lsignal_\alpha \ge  \val^{\cc{inact}}$. These signals ensure that truthful bidding forms an equilibrium satisfying allocative efficiency.

We now proceed to analyze the seller's revenue as a function of $\alpha$ and the associated signal space. Given any $\alpha$ and an arbitrarily small $\eps > 0$, we set $\hsignal_{\alpha, 2} = \lsignal_\alpha + \eps$ and maintain the order $\maxVal > \hsignal_{\alpha, 2} > \hsignal_{\alpha, 3} > \cdots > \hsignal_{\alpha, \biddernum} > \lsignal_{\alpha}$. Consequently, for any $\alpha \in [0, \frac{\biddernum-1}{\biddernum}]$, the revenue $\Rev(\inforstructure^\alpha, \bidstrategyprofile)$ lies within $(\val^{\cc{inact}}(\alpha), \val^{\cc{win}}_{\ge 2}(\alpha))$. Moreover, since $\eps$ can be chosen arbitrarily small, any value strictly within this interval is attainable.
When $\alpha = 0$, $\val^{\cc{inact}}(\alpha) = 0$. When $\alpha = \frac{\biddernum-1}{\biddernum}$, we have $\val^{\cc{win}}_{\ge 2}(\alpha) = Q_1 \cdot \winnerMean + Q_2 \cdot \val^\dagger = \optwelfare$.
Since $\val^{\cc{win}}_{\ge 2}(\alpha) > \val^{\cc{inact}}(\alpha)$ for all $\alpha \in [0, \frac{\biddernum-1}{\biddernum}]$ and both $\val^{\cc{win}}_{\ge 2}(\alpha)$ and $\val^{\cc{inact}}(\alpha)$ are increasing continuously in $\alpha$, we can construct a structure $\inforstructure^\alpha$ to achieve any revenue $\Rev \in (0, \optwelfare)$ while maintaining efficiency. This fully characterizes the efficient frontier.
\qedhere
\end{proof}

\begin{proof}[Proof of \Cref{thm:ipv pri pri efficient}]
    We construct the desired outcome by a carefully designed mixing between the full information structure and the full surplus extraction information. 
    To handle potential atoms in the distribution and ensure continuity, we parameterize this mixture by a threshold $t \in [0, \valUB]$ and a mass allocation parameter $q \in [0, 1]$ for the boundary value $t$.
    Specifically, we partition the value space into a ``high region'' and a ``low region.'' The high region includes all values strictly greater than $t$, plus a fraction $q$ of the probability mass at $t$. The low region includes all values strictly less than $t$, plus the remaining fraction $1-q$ of the probability mass at $t$. 
    

Let $\valprior_{low}$ denote the distribution of value profiles conditional on all components $\val_i$ falling within the low region. Let $\winnerMean(t, q) \triangleq \expect[\valprofile \sim \valprior_{low}]{\max_i \val_i}$ represent the expected efficient welfare within this region. Invoking \Cref{prop:sym pri-pri rev max} and the local coupling construction from \Cref{lem:local-coupling}, there exists a symmetric private private information structure restricted to the low region with a signal space $\signalspace_{low} \subset \reals_+$ that achieves full surplus extraction for the conditional prior $\valprior_{low}$.
Importantly, as demonstrated in the proof of \Cref{lem:local-coupling}, we can construct the signal space $\signalspace_{low}$ as an interval $[\underline{\signal}, \bar{\signal}]$ of length $\eps$ around $\winnerMean(t, q)$. We choose $\varepsilon$ sufficiently small such that $\bar{\signal} < t$.
Define the restricted valuation space as $\valspace_i^{(t)} = \{\val_i \in \valspace_i: \val_i \leq t\}$. For a fixed threshold $t$, let $\valprior_i^{(t)}$ denote the prior distribution of bidder $i$'s valuation conditional on $\val_i \le t$. We define a mapping $\mapping_i: \valspace_i^{(t)} \times [0, 1] \to \signalspace_{low}$. Let $r_i \sim \uniform([0,1])$ be an auxiliary random variable drawn independently for each bidder. The mapping $\mapping_i$ is constructed such that the random variable $\mapping_i(\val_i, r_i)$, induced by $\val_i \sim \valprior_i^{(t)}$ and $r_i \sim \uniform([0,1])$, is uniformly distributed on the interval $\signalspace_{low} = (\lowerSignal, \upperSignal)$.

Let $\signalscheme^{t}_{low}$ denote the joint signal distribution constructed in \Cref{prop:sym pri-pri rev max}. Given a valuation profile $\valprofile$, we construct the signal profile $\signalprofile$ as follows. To handle the potential probability mass at the threshold $t$, we employ a randomized classification rule parameterized by $q \in [0,1]$. Specifically, any realization $\val_i = t$ is treated as belonging to the ``high'' regime (conceptually $\val_i > t$) with probability $q$, and to the ``low'' regime (conceptually $\val_i < t$) with probability $1-q$. Consequently, for the remainder of this construction, we consider every valuation as falling strictly into either the high or low regime:
\begin{enumerate}
    \item If $\val_i < t$ for all bidders $i$, the signal profile is drawn according to the local construction: $\signalprofile \sim \signalscheme^{t}_{low}(\cdot \mid \valprofile)$. 
    \item If there exists at least one bidder with $\val_i > t$, the signals are determined deterministically (or independently) as follows:
    \begin{itemize}
        \item For every bidder $i$ with $\val_i > t$, we set $\signal_i = \val_i$.
        \item For every bidder $j$ with $\val_j < t$, we set $\signal_j = \mapping_j(\val_j, r_j)$, where $r_j \sim \uniform[0,1]$ is an auxiliary random variable drawn independently across bidders.
    \end{itemize}
\end{enumerate}
Since $\mapping_i(\val_i, r_i) \sim \uniform((\lowerSignal, \upperSignal))$, it is obvious that the signals are independent. We verify that the truthful strategy profile $\bidstrategy_i(\signal_i) = \signal_i$ for all $i$ is an equilibrium. 
Consider a bidder $i$ with signal $\signal_i$.
\begin{enumerate}
    \item \textbf{Case $\signal_i > t$ (or $\signal_i = t$ and with probability $q$)}: By construction, $\signal_i = \val_i$. Since truthful bidding is a weakly dominant strategy in second-price auctions, $\bidstrategyprofile(\signal_i) = \signal_i$ is a best response.
    \item \textbf{Case $\signal_i \in (\lowerSignal, \upperSignal)$}: Let $\Delta(b')$ denote the gain in expected payoff from deviating to a bid $b'$ compared to bidding truthfully.
    \begin{itemize}
        \item For any deviation $b' \le t$, $\Delta(b') = 0$ due to the local indifference property constructed in the low region.
        \item For any deviation $b' > t$, $\Delta(b') \leq \upperSignal - t < 0$.
    \end{itemize}
\end{enumerate}
In all cases, deviating yields no profit. Thus, truthful bidding is an equilibrium. We demonstrate that the equilibrium allocation is efficient. Let $i$ be the bidder with the highest signal, i.e., $\signal_i = \mymax(\signalprofile)$. If $\signal_i < t$, the entire profile falls within the low region, $\signalscheme_{low}^t$ (from \Cref{lem:local-coupling}) guarantees that bidder $i$ holds the highest valuation. If $\signal_i > t$, we have $\val_i = \signal_i$. Since any other bidder $j$ satisfies $\val_j \le \max\{\signal_j, t\} < \signal_i$, it follows immediately that $\val_i =  \mymax(\valprofile)$. Thus, the item is always allocated to the efficient winner.

We now calculate the total bidder surplus $\biddersurplus(t, q)$ generated by this equilibrium. The surplus depends on the realization of the highest valuation $\val_{(1)}$ and the second highest valuation $\val_{(2)}$ and their assignment to the high or low region.
    \begin{enumerate}
        \item Case 1: $\val_{(2)}$ is in the high region. Both the winner (with value $\val_{(1)}$) and the second highest bidder (with value $\val_{(2)}$) are in the high region (since $\val_{(1)} \ge \val_{(2)}$). Thus, their signals reveal their true values: $\signal_{(1)} = \val_{(1)}$ and $\signal_{(2)} = \val_{(2)}$. The winner pays the second highest bid $\signal_{(2)} = \val_{(2)}$. The realized surplus is $\val_{(1)} - \val_{(2)}$.
        \item Case 2: $\val_{(1)}$ is in the high region, and $\val_{(2)}$ is in the low region. The winner has signal $\signal_{(1)} = \val_{(1)}$. All other bidders are in the low region and have signals in $\signalspace_{low}$. The payment is determined by the highest competing signal $\max_{j \ne \text{winner}} \signal_j$, which lies in $\signalspace_{low} = [\omega(t, q)-\varepsilon, \omega(t, q)+\varepsilon]$. As we take $\varepsilon \to 0$, this payment converges to $\omega(t, q)$. The realized surplus converges to $\val_{(1)} - \omega(t, q)$.
        \item Case 3: $\val_{(1)}$ is in the low region. All bidders are in the low region. The mechanism behaves as a full surplus extraction mechanism for these types, so the bidder surplus is 0.
    \end{enumerate}
    Thus, taking expectations over valuations, the total bidder surplus is given by:
    \begin{equation}
        B(t, q) = \expect {(\val_{(1)} - \val_{(2)}) \cdot \indicator{\val_{(2)} \in \text{High}}}  + \expect {(\val_{(1)} - \omega(t, q)) \cdot \indicator{\val_{(1)} \in \text{High}, \val_{(2)} \in \text{Low}}}.
    \end{equation}
    Since the value distribution is an i.i.d. product prior with CDF $\valCDF$, the joint density measure of the order statistics on the region $x > y$ is $\biddernum(\biddernum-1) \valCDF(y)^{\biddernum-2} \, \dd \valCDF(x) \, \dd \valCDF(y)$.
    Let $\chi_{t,q}(v) \triangleq \indicator{v > t} + q\indicator{v=t}$ be the indicator (or weight) that value $v$ is in the high region. Similarly, $1-\chi_{t,q}(v) = \indicator{v < t} + (1-q)\indicator{v=t}$ is the weight for the low region.
    The expression becomes:
    \begin{align}
        B(t, q) &= \iint_{x > y} (x - y) \biddernum(\biddernum-1) \valCDF(y)^{\biddernum-2} \chi_{t,q}(y) \, \dd \valCDF(x) \, \dd \valCDF(y) \notag \\
        &+ \iint_{x > y} (x - \omega(t, q)) \biddernum(\biddernum-1) \valCDF(y)^{\biddernum-2} (1-\chi_{t,q}(y)) \chi_{t,q}(x) \, \dd \valCDF(x) \, \dd \valCDF(y).
    \end{align}

    The term $\omega(t, q)$ is the conditional expectation of maximum values in the low region and is continuous in $(t, q)$. Consequently, the bidder surplus $B(t, q)$ is continuous in $(t, q)$.
    Consider the trajectory where we vary $t$ from $\valUB$ to $\valLB$. At any point $t$ with positive probability mass (an atom), we vary $q$ from $0$ to $1$.
    When $t=\valUB$ (and $q=0$), the high region is empty, so the mechanism extracts full surplus and $B(\valUB, 0) = 0$.
    When $t=\valLB$ (and $q=1$), the low region is empty, so the mechanism is the standard second-price auction.
    By the Intermediate Value Theorem, for any feasible bidder surplus level, there exists a pair $(t, q)$ achieving it. This completes the proof that this family of mechanisms traces the efficient frontier.
\end{proof}

\newpage
\clearpage

\begin{center}
    {\Large Online Appendix for\par}
    \vspace{0.5em}
    {\Large ``Private Private Information in Second-Price Auction''\par}
\end{center}

\vspace{2em}

\section{Truthful Bidding}
\begin{lemma}
\label{lem:truthful bidding}
Fix any information structure $\inforstructure$ and a Bayes Nash equilibrium $\bidstrategyprofile$. There exists another information structure $\inforstructure\primed$
that admits truthful bidding, denoted by $\bidstrategyprofile\primed$ (i.e., $\bidstrategy_i\primed(\signal_i\primed)=\signal_i\primed$), as a Bayes Nash equilibrium such that
\begin{align*}
    \Rev(\inforstructure\primed,\bidstrategyprofile\primed)
    =
    \Rev(\inforstructure,\bidstrategyprofile),
    \quad
    \biddersurplus(\inforstructure\primed,\bidstrategyprofile\primed)
    =
    \biddersurplus(\inforstructure,\bidstrategyprofile)~.
\end{align*}
Moreover, if $\inforstructure$ is private private, then $\inforstructure\primed$ can be chosen to be private private.
\end{lemma}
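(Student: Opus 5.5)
The plan is a revelation-principle style garbling: replace each bidder's signal by the bid she would have submitted. Given $\inforstructure=((\signalspace_i),\signalscheme)$ and an equilibrium $\bidstrategyprofile$, I will build $\inforstructure\primed$ with signal spaces $\signalspace_i\primed=\bidspace$. To generate a signal profile, first draw $(\valprofile,\signalprofile)\sim\signalscheme$. Then, independently across bidders, draw $\bid_i\sim\bidstrategy_i(\cdot\mid\signal_i)$ and set $\signal_i\primed=\bid_i$. By construction, the joint law of $(\valprofile,\signalprofile\primed)$ under $\inforstructure\primed$ equals the joint law of $(\valprofile,\bidprofile)$ under $(\inforstructure,\bidstrategyprofile)$. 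Hence, if everyone bids truthfully under $\inforstructure\primed$, allocation, payments, revenue and bidder surplus coincide pathwise in distribution with the original equilibrium. This gives the two equalities immediately.

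Next I check that truthful bidding is a Bayes Nash equilibrium under $\inforstructure\primed$. Suppose bidder $i$ deviates with a measurable (possibly mixed) map $\tau_i:\signalspace_i\primed\to\Delta(\bidspace)$ while others bid truthfully. Consider the strategy $\tau_i\circ\bidstrategy_i$ in the original game: observe $\signal_i$, draw $\bid_i\sim\bidstrategy_i(\cdot\mid\signal_i)$, then play $\tau_i(\bid_i)$. Its payoff against $\bidstrategyprofile_{-i}$ under $\inforstructure$ equals the deviation payoff under $\inforstructure\primed$. The reason is that others' bids, values and $i$'s intermediate bid have the same joint law, and $i$'s private randomization is independent of everything else, consistent with the footnote on independent randomizations. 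By the equilibrium property, this composite payoff is at most $\biddersurplus_i(\inforstructure,\bidstrategyprofile)$, which equals the truthful payoff under $\inforstructure\primed$. So no profitable deviation exists. I will also verify that truthful bidding respects the restriction of never bidding above the highest value in the posterior support. Since $\bidstrategy_i$ bids in $[0,\max\supp(\text{posterior given } \signal_i)]$ and $\signal_i\primed$ is a garbling of $\signal_i$, the posterior support given $\signal_i\primed=\bid$ is the union of supports over the $\signal_i$ that could produce $\bid$. The bid $\bid$ is therefore at most the maximum over that union, up to null sets.

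For private privateness, note that $\signal_i\primed$ is a function of $(\signal_i,r_i)$, where the $r_i$ are independent randomizations, independent of $(\valprofile,\signalprofile)$. If the $\signal_i$ are mutually independent, then the pairs $(\signal_i,r_i)$ are mutually independent, and so are their measurable images $\signal_i\primed$. Hence $\prob{\signalprofile\primed}=\prod_i\prob{\signal_i\primed}$.

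The main obstacle is measure-theoretic rigor with mixed strategies on general spaces. I need to define the kernel $\signalscheme\primed$ properly, argue that the composite deviation $\tau_i\circ\bidstrategy_i$ is a measurable strategy, and handle null-set issues in the support restriction. For this I will use the disintegration of $\signalscheme$ and standard composition of Markov kernels. Equality of the winning probabilities under uniform tie-breaking follows because the bid profiles are identically distributed.
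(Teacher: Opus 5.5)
Your proposal is correct and is essentially the paper's proof. You fold the equilibrium's independently randomized bids into the signals so that truthful bidding under $\inforstructure\primed$ reproduces the joint law of values and bids; you rule out any deviation $\tau_i$ by composing it with $\bidstrategy_i$ in the original game; and you get private privateness because each new signal depends only on the bidder's own original signal plus independent private randomization. Your extra check that truthful bids respect the no-overbidding restriction is a small addition the paper does not spell out, and it does not change the argument.
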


\begin{proof}
Fix an information structure $\inforstructure$ and a Bayes Nash equilibrium $\bidstrategyprofile$. 
We construct a new information structure by folding the equilibrium mixed bids into the signals.

Let $\bidspace$ denote the original bid space. Define the new signal space of each bidder by $\signalspace_i\primed \equiv \bidspace$.
Under the new information structure $\inforstructure\primed$, conditional on a realized value
profile $\valprofile$, first draw the original signal profile $\signalprofile$ according to
$\signalscheme(\signalprofile\mid \valprofile)$, and then draw bids independently across
bidders according to the original equilibrium mixed strategy profile:
\begin{align*}
    \prob{\bidprofile\mid \signalprofile}
    =
    \prod_{i\in[\biddernum]}\bidstrategy_i(\bid_i\mid \signal_i)~.
\end{align*}
The new signal sent to bidder $i$ is $\signal_i\primed \triangleq \bid_i$.
Equivalently, the density/mass of the new information structure is
\begin{align*}
    \signalscheme\primed(\valprofile,\signalprofile\primed)
    =
    \valprior(\valprofile)
    \int_{\signalprofile}
    \signalscheme(\signalprofile\mid \valprofile)
    \prod_{i\in[\biddernum]}
    \bidstrategy_i(\signal_i\primed\mid \signal_i)
    \, \dd \signalprofile ~.
\end{align*}
When the original signal space is discrete, the integral above should be read as the
corresponding sum. Here the product form reflects the standard Bayes Nash interpretation
of mixed strategies: conditional on the signal profile, bidders' randomizations are independent
across bidders.

Under truthful bidding in $\inforstructure\primed$, i.e.,
$\bidstrategy_i\primed(\signal_i\primed)=\signal_i\primed$, the induced joint law of
$(\valprofile,\bidprofile)$ is exactly the same as the joint law of $(\valprofile,\bidprofile)$
under $(\inforstructure,\bidstrategyprofile)$. Since seller revenue and bidders' surplus depend
only on $(\valprofile,\bidprofile)$, it follows that
\begin{align*}
    \Rev(\inforstructure\primed,\bidstrategyprofile\primed)
    =
    \Rev(\inforstructure,\bidstrategyprofile)~,
    \quad
    \biddersurplus(\inforstructure\primed,\bidstrategyprofile\primed)
    =
    \biddersurplus(\inforstructure,\bidstrategyprofile)~.
\end{align*}

It remains to show that truthful bidding is a Bayes Nash equilibrium under $\inforstructure\primed$. 
Fix a bidder $i$ and consider an arbitrary deviation $\tau_i:\signalspace_i\primed\to\Delta(\bidspace)$ in the new game. We construct a corresponding deviation $\widehat{\bidstrategy}_i:\signalspace_i\to\Delta(\bidspace)$ in
the original game as follows:
\begin{align*}
    \widehat{\bidstrategy}_i(a\mid \signal_i)
    =
    \int_{t\in\bidspace}
    \tau_i(a\mid t)\,
    \bidstrategy_i(t\mid \signal_i)
    \, \dd t~.
\end{align*}
When the bid space is discrete, the integral is read as the corresponding sum. That is, after
observing the original signal $\signal_i$, bidder $i$ first draws $t$ according to the original
equilibrium mixed strategy $\bidstrategy_i(\cdot\mid \signal_i)$ and then applies the deviation
$\tau_i(\cdot\mid t)$. This defines a valid mixed strategy in the original game.

By construction, the joint distribution of values and final bids under $(\inforstructure\primed,(\tau_i,\bidstrategy_{-i}\primed))$ is identical to the joint distribution of values and final bids under $(\inforstructure,(\widehat{\bidstrategy}_i,\bidstrategy_{-i}))$.
Therefore,
\begin{align*}
    \biddersurplus_i(\inforstructure\primed,(\tau_i,\bidstrategy_{-i}\primed))
    =
    \biddersurplus_i(\inforstructure,(\widehat{\bidstrategy}_i,\bidstrategy_{-i}))~.
\end{align*}
Since $\bidstrategyprofile$ is a Bayes Nash equilibrium under $\inforstructure$, we have
\begin{align*}
    \biddersurplus_i(\inforstructure,\bidstrategyprofile)
    \ge
    \biddersurplus_i(\inforstructure,(\widehat{\bidstrategy}_i,\bidstrategy_{-i}))~.
\end{align*}
Using again the equality of the induced joint law under
$(\inforstructure\primed,\bidstrategyprofile\primed)$ and
$(\inforstructure,\bidstrategyprofile)$, we obtain
\begin{align*}
    \biddersurplus_i(\inforstructure\primed,\bidstrategyprofile\primed)
    \ge
    \biddersurplus_i(\inforstructure\primed,(\tau_i,\bidstrategy_{-i}\primed))~.
\end{align*}
Since bidder $i$ and the deviation $\tau_i$ were arbitrary, truthful bidding is a Bayes Nash equilibrium under $\inforstructure\primed$.

Finally, suppose that $\inforstructure$ is private private. Let $\varphi$ denote the marginal
density/mass of the original signal profile $\signalprofile$. Since $\inforstructure$ is private private, we have $\varphi(\signalprofile)
=
\prod_{i\in[\biddernum]}\varphi_i(\signal_i)$
where $\varphi_i$ is bidder $i$'s marginal signal density/mass. Let $\varphi\primed$ denote the
marginal density/mass of the new signal profile $\signalprofile\primed$. By construction,
\begin{align*}
    \varphi\primed(\signalprofile\primed)
    =
    \int_{\signalprofile}
    \prod_{i\in[\biddernum]}
    \bidstrategy_i(\signal_i\primed\mid \signal_i)
    \,
    \varphi(\signalprofile)
    \,\dd\signalprofile~.
\end{align*}
Using the product form of $\varphi$, we get
\begin{align*}
    \varphi\primed(\signalprofile\primed)
    =
    \int_{\signalprofile}
    \prod_{i\in[\biddernum]}
    \bidstrategy_i(\signal_i\primed\mid \signal_i)
    \varphi_i(\signal_i)
    \,\dd \signalprofile
    = 
    \prod_{i\in[\biddernum]}
    \int_{\signal_i}
    \bidstrategy_i(\signal_i\primed\mid \signal_i)
    \varphi_i(\signal_i)
    \,\dd\signal_i
    = 
    \prod_{i\in[\biddernum]}\varphi_i\primed(\signal_i\primed)~,
\end{align*}
where $\varphi_i\primed(\signal_i\primed) \equiv \int_{\signal_i} \bidstrategy_i(\signal_i\primed\mid \signal_i)
\varphi_i(\signal_i) \,\dd\signal_i$.
Thus the new signals are unconditionally independent. 
Thus, $\inforstructure\primed$ is private private whenever $\inforstructure$ is private private.
\end{proof}

\section{The Degenerate Priors}
\label{remark: prior remark}
Consider the class of priors where $\max_i \val_i \equiv \maxVal$ for all value profiles $\valprofile\sim \valprior$ in the support
(excluding the trivial case of identical valuations). 
Note that it is without loss to consider truthful bidding (i.e., bidding the received signal) as an equilibrium.\footnote{\wtedit{Throughout this work, for presentation simplicity, we refer to the pure strategy of bidding one's received signal as the truthful bidding. See \Cref{lem:truthful bidding} for the without loss of generality justification.}} 
Exact full surplus extraction requires the second-highest signal to equal the maximum valuation $\maxVal$ almost surely. 
Any private private information necessitates that at least $\biddernum-1$ bidders must deterministically signal $\maxVal$. Consequently, ties occur with probability one, and uniform tie-breaking inevitably results in inefficiency by assigning the item to a bidder with $\val_i < \maxVal$. 

Nevertheless, we can approximate full surplus extraction arbitrarily closely. We restrict bidder $i$'s signal space to a binary set $\{\signal_{i, l}, \signal_{i, h}\}$ (degenerate to a singleton $\{\signal_{\biddernum, h}\}$ for bidder $\biddernum$) strictly within the interval $(\maxVal - \varepsilon, \maxVal]$. 
We then impose a hierarchical structure on the high signals: $\signal_{1, h} > \dots > \signal_{\biddernum, h} > \max_{j} \signal_{j, l}$. 
For the lowest-indexed winner $i$, we assign $\signal_i = \signal_{i, h}$ and set $\signal_j = \signal_{j, l}$ for all predecessors $j < i$, while signals for successors $j > i$ are drawn independently according to specific marginals. 
This construction ensures that the winner holds the strictly highest signal and that the second-highest bid exceeds $\maxVal - \varepsilon$. Thus, as $\varepsilon \to 0$, revenue converges to the full surplus.

\section{Proofs in \texorpdfstring{\Cref{sec:pri pri bidders surplus}}{}}
\label{apx:proof pri pri bidders surplus}
\begin{proof}[Proof of \Cref{thm:pripri max bidder surplus iff}]
Note that it is without loss to focus on truthful bidding as the equilibrium. 
We first prove the sufficiency direction, and the necessity direction.

\xhdr{The $\Leftarrow$ direction}
    We first prove the sufficiency of the conditions by constructing explicit private private information structures for each case. We show that these structures achieve allocation efficiency while giving seller revenue arbitrarily close to zero (and thus bidders' surplus arbitrarily close to $\optwelfare$).

\textbf{Case 1: $\biddernum=2$ and there is no $\valprofile \in \supp(\valprior)$ such that $\val_i > \val_j > 0$.}
Fix an arbitrarily small $\varepsilon > 0$. Specifically, we require $\varepsilon < \min \{ \val_k : \valprofile \in \supp(\valprior), \val_k > 0 \}$, ensuring that $\varepsilon$ is below any positive valuation realization. We construct a private private information structure $\inforstructure$ defined by: given any value profile $\valprofile$
\[
    \signal_1 = \val_1 \quad \text{and} \quad \signal_2 \sim \uniform(0, \varepsilon) \quad (\text{independent of } \valprofile).
\]

Since $\signal_2$ is independent of $\val_1$ and $\signal_1 = \val_1$, $\signal_1$ and $\signal_2$ are independent.
We show that $\bidstrategy_i(\signal_i) = \signal_i$ is an equilibrium.
For bidder 1, since $\signal_1$ perfectly reveals $\val_1$, it is obvious that truthful bidding is weakly dominant.
For bidder 2, let $\biddersurplus_2(\bid \mid \signal_2)$ denote the ex-post payoff from bidding $\bid$ given signal $\signal_2$. Facing $\bid_1 = \val_1 \ge 0$, any deviation $b'$ is unprofitable compared to the truthful bid $\bid_2 = \signal_2$:
\begin{align*}
    \bid_1 = 0 \implies & \quad \biddersurplus_2(\signal_2 \mid \signal_2) = \val_2 \ge \biddersurplus_2(b' \mid \signal_2) = \begin{cases} \val_2 & \text{if } b' > 0 \\ 0 & \text{otherwise} \end{cases}; \\
    \bid_1 > 0 \implies & \quad \biddersurplus_2(\signal_2 \mid \signal_2) = 0 \ge \biddersurplus_2(b' \mid \signal_2) = \begin{cases} \val_2 - \val_1 \leq 0 & \text{if } b' > \val_1 \\ \frac{1}{2}(\val_2 - \val_1) \leq 0 & \bid' = \val_1  \\ 0 & \text{otherwise} \end{cases}.
\end{align*}
Thus $\bidstrategyprofile_i(\signal_i) = \signal_i$ constitutes an equilibrium.
Under truthful bidding, the allocation is ex-post efficient. Bidder 1 wins if and only if $\val_1 > 0$, in which case $\val_1 > \varepsilon > \bid_2$. Since the condition $\val_1 > 0$ implies $\val_1 \ge \val_2$, assigning the item to bidder 1 is efficient. Conversely, when $\val_1 = 0$, bidder 2 wins (since $0 < \bid_2$), which is also efficient as $\val_2 \ge 0$. Thus, the item is always awarded to the bidder with the highest valuation.
And the seller's revenue is bounded by $O(\eps)$. The payment is $\bid_2 = \signal_2$ when bidder 1 wins, and $\bid_1 = 0$ when bidder 2 wins. The expected revenue is bounded by:
\[
    \Rev = \prob{\val_1 >0} \cdot \expect[\uniform(0, \eps)]{\signal_2} + \prob{\val_1 = 0} \cdot 0 \le \prob{\val_1 = v} \cdot \varepsilon.
\]
By choosing an arbitrarily small $\varepsilon > 0$, seller's revenue approaches zero. Thus, $\biddersurplus(\inforstructure, \bidstrategy) \ge \optwelfare - O(\eps)$.

\textbf{Case 2: $\biddernum \ge 3$.}
\wtedit{Given a value profile $\valprofile$, we refer to the bidder under this $\valprofile$ that has a positive value as the active bidder.}
For any sufficiently small $\eps > 0$,  we define $t_\ell = \frac{\biddernum + 1 - \ell}{\biddernum} \cdot \eps$. Let $\maxVal = \max\{\mymax(\valprofile) \text{ for all } \valprofile \}$, we construct a private private information structure as follows. Let $Q_1 = \sum\nolimits_{\valprofile: \val_1 = \cdots = \val_\biddernum} \valprior(\valprofile)$, $Q_2 = \sum\nolimits_{\valprofile: \valprofile \text{ has a single active bidder}} \valprior(\valprofile) = 1-Q_1$. For bidder 1, the signal space $\signalspace_1 = \{\maxVal, 0\}$ and the marginal signal distribution is $\signalProfileDist_1 = (Q_1 + \frac{1}{\biddernum}Q_2)\delta_{(\maxVal)} + \frac{(\biddernum-1)Q_2}{\biddernum} \delta_{(0)}$; For bidder $i \in [2:\biddernum-1]$, $\signalspace_i = \{t_{i}, 0\}$ with the signal distribution $\signalProfileDist_i = \frac{1}{\biddernum+1-i} \cdot \delta_{(t_{\ell})} + \frac{\biddernum-i}{\biddernum+1-i} \cdot \delta_{(0)}$; For bidder $\biddernum$, $\signalspace_\biddernum = \{t_\biddernum\}$.  Given any signal profile $\valprofile \in \supp(\valprior)$:
\begin{itemize}
    \item If $\valprofile$ has a single active bidder:
    \begin{itemize}
        \item If $\val_1 > 0$. We fix $\signal_1 = \maxVal$ and $\signal_\biddernum = t_\biddernum$. We construct the distribution of signal profiles such that the  signals are independent, that is, $\signalscheme(\signalprofile \mid \valprofile) = \times_{i \in [2:\biddernum-1]} \signalProfileDist_i(\signal_i)$ for all $\signalprofile \in \{\maxVal\} \times \signalspace_2 \times \cdots \signalspace_{\biddernum-1} \times  \{t_\biddernum\}$.
        \item If $\val_i > 0$ where $i \in [2: \biddernum]$. We fix for all $j \in [1: i-1]$ such that $\signal_j = 0$ and $\signal_i = t_{i}$ and $\signal_\biddernum = t_\biddernum$. We construct signal profiles such that the signals are independent, that is, $\signalscheme(\signalprofile \mid \valprofile) = \times_{j \in [i+1:\biddernum-1]} \signalProfileDist_j(\signal_j)$. 
    \end{itemize}
    \item If $\valprofile$ is perfectly correlated: We fix $\signal_1 = \maxVal$ and $\signal_\biddernum = t_\biddernum$. We construct signal profiles such that the signals are independent, that is, $\signalscheme(\signalprofile \mid \valprofile) = \times_{i \in [2:\biddernum]} \signalProfileDist_i(\signal_i)$ for all $\signalprofile \in \{\maxVal\} \times \signalspace_2 \times \cdots \signalspace_{\biddernum-1} \times \{t_\biddernum\}$.
\end{itemize}
By construction, the signals are mutually independent, ensuring the structure is private private. We verify that the truthful strategy profile $\bidstrategy(\signal) = \signal$ constitutes an equilibrium and implements the efficient allocation. Furthermore, for any given $\eps > 0$, the construction guarantees that the second-highest signal satisfies $\secmax(\signalprofile) \le \eps$ for all $\signalprofile \in \supp(\signalProfileDist)$. Consequently, the seller's expected revenue is bounded above by $\eps$. That is $\biddersurplus(\inforstructure, \bidstrategyprofile) \ge \optwelfare - \eps$.

\xhdr{The $\Rightarrow$ direction}
We prove necessity in two cases. 

\textbf{Case 1: $\biddernum \ge 3$.}  Fix any $\biddernum \ge 3$, we show that if there exists $\valprofile \in \supp(\valprior)$ such that $\val_j \neq \val_j$, then  no private private information structure $\inforstructure$ simultaneously satisfies: (1)  truthful bidding $\bidstrategy_i(\signal_i) = \signal_i$ is an equilibrium; (2) ex-post efficiency; (3) approximate full surplus extraction: for any $\varepsilon > 0$, $\biddersurplus(\inforstructure, \bidstrategyprofile) \ge \optwelfare - \varepsilon$. 
\begin{enumerate}
    \item \textbf{Case 1a.} We first show that if there exists $\valprofile \in \supp(\valprior)$ such that $\val_j > \val_i > 0$ for distinct $i,j$, then  no private private information structure $\inforstructure$ simultaneously satisfies conditions (1)(2)(3).
We proceed by showing that conditions (1) and (2) impose a strict upper bound on bidder surplus. Specifically, we establish that for any such $\inforstructure$, there exists a constant $\constant > 0$ (dependent only on $\valprior$) such that
\(
    \biddersurplus(\inforstructure, \bidstrategyprofile) \leq \optwelfare - \constant.
\)
This implies that condition (3) cannot be satisfied. 

Suppose there exists a private private information structure $\inforstructure = ((\signalspace_i)_{i \in [\biddernum]}, \signalscheme)$ satisfying conditions (1) and (2). Let $\signalProfileDist$ denote the distribution of signal profiles. Since $\inforstructure$ is a private private information structure, the signals are  independent, thus $\signalProfileDist = \times_{i \in [\biddernum]} \signalProfileDist_i$. Fix an arbitrary bidder $i$, we define $\selectvalprofile \in  \supp(\valprofile)$ as any value profile satisfying there exists $k \neq i$ such that $\selectvalprofile_k = \mymax(\selectvalprofile) > \selectvalprofile_i > 0$. We define $\positiveProb \triangleq \valprior(\selectvalprofile)$.
Given the existence of $\valprofile \in \supp(\valprior)$ where $\val_j > \val_i > 0$, we have $\positiveProb > 0$. Note that $\positiveProb$ is a constant depending only on the prior $\valprior$. Then, for any $j\neq i$, we define 
$\quantileBid_j \triangleq \inf\; 
\left\{\bid: \prob{\signal_j > \bid} 
\leq \frac{\positiveProb}{2\biddernum}\right\} $.
We establish that $\max_{j \neq i} \quantileBid_j$ is bounded below by a positive constant. This follows from the incentive compatibility constraint: for  $\bidstrategy_i(\signal_i) = \signal_i$ to be a best response against truthful opponents, the competing bids cannot be arbitrarily low.
Suppose all bidders $j \neq i$ bid truthfully. Consider the following deviation strategy $\Devbidstrategyprofile_i$ for bidder~$i$: 
\begin{align*}
    \Devbidstrategyprofile_i(\signal_i) \triangleq \begin{cases}
        \max_{j \neq i} \quantileBid_j~, & \signal_i < \max_{j \neq i} \quantileBid_j~;\\
        \signal_i~, & \text{ otherwise}~.
    \end{cases}
\end{align*}
Let $\biddersurplus_i(\bidstrategyprofile_i)$ and $\biddersurplus_i(\Devbidstrategyprofile_i)$ denote bidder~$i$'s expected payoff under $\bidstrategyprofile_i$ and $\Devbidstrategyprofile_i$, respectively. The difference in payoffs is given by:
\begin{align*} \biddersurplus_i(\Devbidstrategyprofile_i) - \biddersurplus_i(\bidstrategyprofile_i)  = & \int_{\signalprofile: \signal_i < \mymax(\signalprofile) \leq \max_{j \neq i} \quantileBid_j} \bigg(\expectedVal_i(\signalprofile) - \mymax(\signalprofile)\bigg) \cdot \signalProfileDist(\signalprofile) \, \dd \signalprofile \\
& + \int_{\signalprofile: \signal_i = \mymax(\signalprofile) \leq \max_{j \neq i} \quantileBid_j} \frac{\winnerNum(\signalprofile) - 1}{\winnerNum(\signalprofile)} \cdot \bigg(\expectedVal_i(\signalprofile) - \mymax(\signalprofile)\bigg) \cdot \signalProfileDist(\signalprofile) \, \dd \signalprofile~,
\end{align*}
where $\expectedVal_i(\signalprofile)$ is bidder~$i$'s expected value given signal profile $\signalprofile$, and $\winnerNum(\signalprofile) = |\{\ell: \signal_{\ell} = \mymax(\signalprofile), \forall \ell \in [\biddernum]\}|$. Note that there exists $\signalprofile \in \supp(\signalProfileDist)$ such that $\signal_i < \mymax(\signalprofile) \leq \max_{j \neq i} \quantileBid_j$; Otherwise,  it contradicts the efficiency. Since truthful bidding $\bidstrategyprofile_i$ is a best response, the deviation cannot be profitable:
\begin{align*}
    0 
   & \ge \biddersurplus_i(\Devbidstrategyprofile_i) - \biddersurplus_i(\bidstrategyprofile_i) \\
   & \ge \int_{\signalprofile: \signal_i < \mymax(\signalprofile) \leq \max_{j \neq i} \quantileBid_j} 
    \expectedVal_i(\signalprofile)\cdot\signalProfileDist(\signalprofile)\,\dd \signalprofile  - \int_{\signalprofile: \signal_i 
    \leq \mymax(\signalprofile) \leq \max_{j \neq i} \quantileBid_j} \mymax(\signalprofile)\cdot\signalProfileDist(\signalprofile) \, \dd \signalprofile \\
    & \ge \int_{\signalprofile: \signal_i < \mymax(\signalprofile) \leq \max_{j \neq i} \quantileBid_j} \expectedVal_i(\signalprofile)\cdot\signalProfileDist(\signalprofile)\,\dd \signalprofile
    - \int_{\signalprofile: \signal_i \leq \mymax(\signalprofile) \leq \max_{j \neq i} \quantileBid_j} \max_{j\neq i}\quantileBid_j \cdot\signalProfileDist(\signalprofile) \, \dd \signalprofile \\
    & \ge \int_{\signalprofile: \signal_i < \mymax(\signalprofile) \leq \max_{j \neq i} \quantileBid_j} \expectedVal_i(\signalprofile)\cdot\signalProfileDist(\signalprofile)\,\dd \signalprofile - \max_{j\neq i}\quantileBid_j~.
\end{align*}
Substituting the definition of the expected value $\expectedVal_i(\signalprofile) = \frac{\int_{\valprofile} \val_i \cdot \signalscheme(\valprofile, \signalprofile) \, \dd \valprofile}{\signalProfileDist(\signalprofile)}$
we obtain
\begin{align*}
    0 
    & \ge \int_{\signalprofile: \signal_i < \mymax(\signalprofile) \leq \max_{j \neq i} \quantileBid_j} \expectedVal_i(\signalprofile)\cdot\signalProfileDist(\signalprofile)\,\dd \signalprofile - \max_{j\neq i}\quantileBid_j \\
    & = \int_{\signalprofile: \signal_i < \mymax(\signalprofile) \leq \max_{j \neq i} \quantileBid_j} \int_{\valprofile} \val_i \cdot \signalscheme(\valprofile, \signalprofile) \, \dd \valprofile\,\dd \signalprofile - \max_{j\neq i}\quantileBid_j\\
    & \ge \int_{\signalprofile: \signal_i < \mymax(\signalprofile) \leq \max_{j \neq i} \quantileBid_j}  \selectvalprofile_i \cdot \signalscheme(\selectvalprofile, \signalprofile) \,\dd \signalprofile - \max_{j\neq i}\quantileBid_j
\end{align*}
Since $\inforstructure$ achieves efficiency, for any $\signalprofile$ such that  $\signalscheme(\selectvalprofile, \signalprofile) > 0$, it must be that $\signal_i < \mymax(\signalprofile)$. Therefore, we have
$\int_{\signalprofile: \signal_i < \mymax(\signalprofile)}  \signalscheme(\selectvalprofile, \signalprofile) \,\dd \signalprofile = \positiveProb$.
Furthermore, we can decompose the integral as:
\begin{align*}
    \int_{\signalprofile: \signal_i < \mymax(\signalprofile) \leq \max_{j \neq i} \quantileBid_j}  \signalscheme(\selectvalprofile, \signalprofile) \,\dd \signalprofile  &=  \int_{\signalprofile: \signal_i < \mymax(\signalprofile)} \signalscheme(\selectvalprofile, \signalprofile) \,\dd \signalprofile  - \int_{\signalprofile: \signal_i < \mymax(\signalprofile) \text{ and } \mymax(\signalprofile) > \max_{j \neq i} \quantileBid_j} \signalscheme(\selectvalprofile, \signalprofile)\dd \signalprofile \\
    & \ge  \positiveProb - \int_{\signalprofile: \signal_i < \mymax(\signalprofile) \text{ and } \mymax(\signalprofile) > \max_{j \neq i} \quantileBid_j} \signalProfileDist(\signalprofile) \,\dd \signalprofile
\end{align*}
By the definition of $\quantileBid_j$,
\begin{align*}
    \int_{\signalprofile: \signal_i < \mymax(\signalprofile) \text{ and } \mymax(\signalprofile) > \max_{j \neq i} \quantileBid_j} \signalProfileDist(\signalprofile) \,\dd \signalprofile \leq \sum\nolimits_{j \neq i} \int_{\signal_j > \quantileBid_j} \signalProfileDist_j(\signal_j) \, \dd \signal_j \leq \frac{\positiveProb}{2} ~.
\end{align*}
This implies
$\int_{\signalprofile: \signal_i < \mymax(\signalprofile) \leq \max_{j \neq i} \quantileBid_j} \signalscheme(\selectvalprofile, \signalprofile) \, \dd \signalprofile  \ge \frac{\positiveProb}{2}$.
Combining these inequalities, we have
\begin{align*}
    0 \ge \selectvalprofile_i \cdot \int_{\signalprofile: \signal_i < \mymax(\signalprofile) \leq \max_{j \neq i} \quantileBid_j} \signalscheme(\selectvalprofile, \signalprofile)\,\dd \signalprofile - \max_{j\neq i}\quantileBid_j \ge \frac{\selectvalprofile \cdot \positiveProb}{2} - \max_{j\neq i}\quantileBid_j~.
\end{align*}
Consequently, there exists some $j \neq i$ such that $\quantileBid_j \ge \frac{\selectvalprofile \cdot \positiveProb}{2}$. 
Fix bidder $j$. By the symmetry of the prior $\valprior$, there exists $\valprofile \in \supp(\valprior)$ such that $\val_\ell > \val_j$. Applying the analogous derivation to bidder $j$ on this event, we establish that:
$\max_{\ell \neq j} \quantileBid_\ell \ge  \frac{\selectvalprofile \cdot \positiveProb}{2}$.
Thus, there exists bidder $\ell \neq j$ such that $\quantileBid_\ell \ge \frac{\selectvalprofile \cdot \positiveProb}{2}$.
Considering these bidders $j$ and $\ell$, the seller's expected revenue is bounded below by:
\begin{align*}
    \Rev(\inforstructure, \bidstrategyprofile) & \ge \min\{\quantileBid_j, \quantileBid_\ell\} \cdot \int_{\signal_j \ge \quantileBid_j} \signalProfileDist_j(\signal_j) \, \dd \signal_j \cdot \int_{\signal_\ell \ge \quantileBid_\ell} \signalProfileDist_\ell(\signal_\ell) \, \dd \signal_\ell \ge \frac{\selectvalprofile \cdot \positiveProb^3}{8 \biddernum^2}~.
\end{align*}
Thus, $\biddersurplus(\inforstructure, \bidstrategyprofile) \leq \optwelfare - \frac{\selectvalprofile \cdot \positiveProb^3}{8 \biddernum^2}$. Since $\frac{\selectvalprofile \cdot \positiveProb^3}{8 \biddernum^2}$ is a constant depending only on the prior $\valprior$, this contradicts condition (3). We note that the result holds for $\biddernum = 2$. The preceding analysis is independent of the specific value of $\biddernum$, requiring only $\biddernum \ge 2$ to ensure that $\quantileBid_j$ is well-defined.
\item \textbf{Case 1b.} We show that if there exists $\valprofile \in \supp(\valprior)$ such that $\val_i = \val_j > \val_k =  0$ for distinct $i,j,k \in [\biddernum]$, then no private private information structure $\inforstructure$ simultaneously satisfies conditions (1)-(3). The proof proceeds similarly to Case 1a. We demonstrate that conditions (1) and (2) impose a strict upper bound on bidder surplus.
Suppose there exists a private private information structure $\inforstructure = ((\signalspace)_{i \in [\biddernum]}, \signalscheme)$ satisfying conditions (1) and (2). We restrict our attention to priors where for any $\valprofile \in \supp(\valprior)$ and any bidder $i$, either $\val_i = \mymax(\valprofile)$ or $\val_i = 0$ (otherwise, the analysis of Case 1a applies). We define $\selectvalprofile \in \supp(\valprior)$ as a value profile such that there exist $i, j, k$ such that $\selectvalprofile_i = \selectvalprofile_j > \selectvalprofile_k  = 0$. We define $\positiveProb \triangleq\valprior(\selectvalprofile)$, and note that $\positiveProb > 0$. For any $i \in [\biddernum]$, we define 
$\quantileBid_i \triangleq \inf\; \left\{\bid: \prob{\signal_j > \bid} \leq \frac{\positiveProb}{2\biddernum}\right\}$.
By symmetry, there exists $\valprofile^\dagger \in \supp(\valprior)$ such that $\valprofile^\dagger_i = \valprofile^\dagger_j > \val_{\maxbidder} =  0$, let $\selectset = \{i: \valprofile^\dagger_i = \mymax(\valprofile^\dagger) \}.$ By efficiency, 
\begin{align*}
    \sum\nolimits_{i \in \selectset}\int_{\signalprofile: \signal_i = \mymax(\signalprofile)} \frac{1}{\winnerNum(\signalprofile)} \cdot \signalscheme(\valprofile^\dagger, \signalprofile) \, \dd \signalprofile = \valprior(\valprofile^\dagger) = \positiveProb~.
\end{align*}
Thus, there exists $i \in \selectset$ such that $\int_{\signalprofile: \signal_i = \mymax(\signalprofile)} \frac{1}{\winnerNum(\signalprofile)} \cdot \signalscheme(\valprofile^\dagger, \signalprofile) \, \dd \signalprofile \leq \frac{\positiveProb}{\winnerNum(\valprofile^\dagger)}$, we fix such $i$.
We define the deviation strategy $\Devbidstrategyprofile_i$ as:
\begin{align*}
    \Devbidstrategyprofile_i(\signal_i) \triangleq \begin{cases}
        \max_{j \neq i, \maxbidder} \quantileBid_j~, & \text{if } \signal_i \leq \max_{j \neq i} \quantileBid_j ~;\\
        \signal_i~, & \text{otherwise}~.
    \end{cases}
\end{align*}
Analogously to Case 1a, $\biddersurplus_i(\Devbidstrategyprofile_i) - \biddersurplus_i(\bidstrategyprofile_i) \leq 0$ implies a lower bound on the opponents' bids. 
Specifically, we derive that $\max_{j \neq i, \maxbidder} \quantileBid_j > C$, where $C$ is a positive constant depending only on the prior $\valprior$. Consequently, by symmetry, the seller's expected revenue is bounded away from zero, which implies that $\biddersurplus(\inforstructure, \bidstrategyprofile)$ is strictly bounded below $\optwelfare$. This contradicts condition (3).
\end{enumerate}

\textbf{Case 2:} $\biddernum = 2$. Fix $\biddernum =2$, we show that if there exists $\valprofile \in \supp(\valprior)$ such that $\val_j > \val_i > 0$ for distinct $i, j$, then no private private information structure $\inforstructure$ simultaneously satisfies conditions (1)-(3). This result follows directly from Case 1a, as the  analysis in Case 1a holds for any $\biddernum \ge 2$. 
\end{proof}

\section{Proofs in \texorpdfstring{\Cref{subsec:approx maximum rev construction}}{}}
\label{apx:subsec approx maximum rev construction}

\begin{proof}[Proof of \Cref{lem:suff condi for approx full suplus extraction}]
\zsedit{Consider a symmetric private private information structure $\inforstructure = (\constructsignalspace^\biddernum, \signalscheme)$ satisfying the conditions in \Cref{lem:suff condi for approx full suplus extraction}. }
We first prove if all bidders use truthful bidding strategy, then the aggregate bidders' surplus is bounded above by $\varepsilon$.

Let us fix a  bidder $i$, and fix her received signal $\signal_i \in (\lowerSignal, \upperSignal)$. 
Assume that all other bidders $j \neq i$ adopts  the truthful bidding strategy $\bidstrategy_j(\signal_j) = \signal_j$. 
With a slight abuse of notation, we let $\expectedVal_i(\signal_i, y)$ denote bidder $i$'s expected valuation given her signal $\signal_i$ and the second-highest signal is $y$. 
\zsedit{Since ties occur with zero probability, we focus on signal profiles with a unique highest signal.}

By the condition \eqref{it:highest-signal} in \Cref{lem:suff condi for approx full suplus extraction}, 
conditional on bidder $i$ winning (so $\mymax(\signalprofile)=\signal_i$), bidder $i$'s
expected valuation depends on the signal profile $\signalprofile$ only through $(\signal_i,y)$, where
$y=\secmax(\signalprofile)$. 
With a slight abuse of notation, we write this conditional expected valuation as
$\expectedVal_i(\signal_i,y)$. 
Recall that $\signalscheme\in\Delta(V^N\times S^N)$ is the joint distribution over $(\valprofile,\signalprofile)$ in the information structure, and let
$\margsignalscheme\in\Delta(\signalprofilespace^\biddernum)$ be its marginal on the signal profile space, i.e.,
$\margsignalscheme(\signalprofile) = \sum_{\valprofile} \signalscheme(\valprofile, \signalprofile)$.

For a bid $\bid_i$, let $\biddersurplus_i(\bid_i\mid \signal_i)$ denote bidder $i$'s interim expected payoff when she observes signal $\signal_i$ and submits $b=\bid_i$, assuming all other bidders bid truthfully. 
Under truthful bidding by bidder
$i$ as well, bidder $i$ obtains positive surplus only on profiles where she wins, and we can write
\begin{align*}
    \biddersurplus_i(\signal_i \mid \signal_i) 
    &=
    \int_{y \in (\lowerSignal, \signal_i)}\int_{\signalprofile:  \mymax(\signalprofile)= \signal_i, \secmax(\signalprofile) = y} \expectedVal_i(\signalprofile) - y\;\dd \margsignalscheme(\signalprofile)  \;\dd  y \\
    & = \int_{y \in (\lowerSignal, \signal_i)}\int_{\signalprofile: \mymax(\signalprofile) = \signal_i, \secmax(\signalprofile) = y } \expectedVal_i(\signal_i, y) - y\;\dd \margsignalscheme(\signalprofile)  \;\dd  y~.
    \tag{By definition of $\expectedVal_i(\signal_i, y)$}
\end{align*}
Observe that the term $(\expectedVal_i(\signal_i, y) - y)$ depends only on $\signal_i$ and $y$, and is therefore constant with respect to the inner integration over $\signalprofile$. 
Consequently, we have:
\begin{align*}
    \biddersurplus_i(\signal_i \mid \signal_i)
    & = \int_{y \in (\lowerSignal, \signal_i)} (\expectedVal_i(\signal_i, y) - y)\int_{\signalprofile: \mymax(\signalprofile) = \signal_i, \secmax(\signalprofile) = y } \;\dd \margsignalscheme(\signalprofile)  \;\dd  y  \\
    & =
    \int_{\lowerSignal}^{\signal_i} \left( \expectedVal_i(\signal_i, y) - y \right) g(y) \;\dd y~,
\end{align*}
where we have defined $\winnerExpectedvalProb(y)$ as the (tie-adjusted) conditional distribution of the second-highest signal $y$ in profiles where bidder $i$ wins under truthful bidding:
\begin{align*}
    \winnerExpectedvalProb(y) \triangleq \int_{\signalprofile: \mymax(\signalprofile) = \signal_i, \secmax(\signalprofile) = y } \;\dd \margsignalscheme(\signalprofile)
\end{align*}
for notation simplicity.
Also note that the information structure is symmetric, so we must have that 
$\winnerExpectedvalProb(y) \propto (y - \lowerSignal)^{\biddernum-2}$.

Let $\middlepoint(\signal_i) \triangleq \frac{\signal_i + \lowerSignal}{2}$. 
We define the bidder $i$'s \emph{ex-post net utility} as $\expostbiddersurplus(\signal_i, y) \triangleq \expectedVal_i(\signal_i, y) - y$. 
With the expected valuation defined in \Cref{lem:suff condi for approx full suplus extraction} (with $\mymax(\signalprofile) = \signal_i$ and $\secmax(\signalprofile) = y$):
\begin{align}
    \label{eq:expostbiddersurplus}
    \expostbiddersurplus(\signal_i, y) = 
    \begin{cases} 
        \displaystyle 
        -\frac{(y - \lowerSignal)\varepsilon}{2\biddernum} & \text{if } y \in (\lowerSignal, \middlepoint(\signal_i))~, \\[10pt]
        \displaystyle
        \frac{(\signal_i - y)\varepsilon}{\biddernum} & \text{if } y \in [\middlepoint(\signal_i), \signal_i)~.
    \end{cases}
\end{align}
Thus, we can lower bound the bidder's surplus as follows:
\begin{align*}
    \biddersurplus_i(\signal_i\mid \signal_i) 
    & = \int_{\lowerSignal}^{\middlepoint(\signal_i)}-\frac{(y-\lowerSignal)\varepsilon}{2\biddernum}\,g(y)\; \dd y
    \;+\;
    \int_{\middlepoint(\signal_i)}^{\signal_i}\frac{(\signal_i-y)\varepsilon}{\biddernum}\,g(y)\; \dd y \\
    & =\frac{\varepsilon}{\biddernum}\left(\int_{\lowerSignal}^{\middlepoint(\signal_i)}(t-\lowerSignal)\,g(\lowerSignal+\signal_i-t)\; \dd t-\frac12\int_{\lowerSignal}^{\middlepoint(\signal_i)}(t-\lowerSignal)g(t)\; \dd t\right)\\
    & \ge \frac{\varepsilon}{\biddernum}\left(\int_{\lowerSignal}^{\middlepoint(\signal_i)}(t-\lowerSignal)\,g(t)\; \dd t-\frac12\int_{\lowerSignal}^{\middlepoint(\signal_i)}(t-\lowerSignal)g(t)\; \dd t\right)
    \tag{Since $\lowerSignal+\signal_i-t\ge t$ and $g$ is non-decreasing }\\
    & =\frac{\varepsilon}{2N}\int_{\lowerSignal}^{\middlepoint(\signal_i)}(t-\lowerSignal)g(t)\; \dd t\\
    & > 0~.
    \tag{Since $\varepsilon > 0,\; t - \lowerSignal \ge 0,\; g(t) > 0$}
\end{align*}
We next show that the bidders' surplus is upper bounded by $\varepsilon$:
\begin{align*}
    \biddersurplus(\inforstructure, \bidstrategyprofile)
    & = \sum\nolimits_{i\in[\biddernum]} \int_{\lowerSignal}^{\upperSignal} \biddersurplus_i(\signal_i \mid \signal_i ) \cdot \frac{1}{\upperSignal - \lowerSignal}\; \dd \signal_i \\
    & = 
    \sum\nolimits_{i\in[\biddernum]} \int_{\lowerSignal}^{\upperSignal} 
    \frac{1}{\upperSignal - \lowerSignal} \cdot 
    \int_{\middlepoint(\signal_i)}^{\signal_i} \frac{(\signal_i - y) \varepsilon}{\biddernum} \cdot g(y) \; \dd y \; \dd \signal_i \\
    & = 
    \sum\nolimits_{i\in[\biddernum]} \int_{\lowerSignal}^{\upperSignal} 
    \frac{1}{\upperSignal - \lowerSignal} \cdot 
    \frac{\varepsilon^2}{\biddernum}
    \int_{\middlepoint(\signal_i)}^{\signal_i} g(y) \; \dd y \; \dd \signal_i 
    \tag{Due to $\signal_i - y \le \varepsilon$}\\
    & \le \varepsilon^2 < \varepsilon~.
\end{align*}
We now prove that the truthful bidding is indeed a strict equilibrium under the conditions defined in \Cref{lem:suff condi for approx full suplus extraction}. Let us fix a bidder $i\in[\biddernum]$, and fix her received private signal $\signal_i$.

\xhdr{Case 1: Under-bidding ($\bid < \signal_i$)}
Now consider a downward deviation in which bidder $i$ bids $\bid<\signal_i$. 
Let $y \;\triangleq\; \mymax_{k\neq i} \signal_k$ be the highest signal among the other bidders.
Under the continuous signal construction in \Cref{lem:suff condi for approx full suplus extraction}, ties occur with probability zero, so bidder $i$ wins iff $y\le \bid$ under
bid $\bid$, and wins iff $y\le \signal_i$ under truthful bidding. 
Thus, we can write the bidder $i$' interim expected surplus as follows,
\begin{equation}
    \label{eq:biddersurplus-defs}
    \biddersurplus_i(\bid\mid \signal_i)= \expect{\expostbiddersurplus(\signal_i,y)\mathbf{1}\{y\le \bid\} \mid \signal_i}~,
    \quad
    \biddersurplus_i(\signal_i\mid \signal_i)
    = \expect{\expostbiddersurplus(\signal_i,y)\mathbf{1}\{y\le \signal_i\}\mid \signal_i}~.
\end{equation}
Subtracting yields
\begin{equation}
    \label{eq:biddersurplus-diff}
    \biddersurplus_i(\signal_i\mid \signal_i)-\biddersurplus_i(\bid\mid \signal_i)
    = \expect{\expostbiddersurplus(\signal_i,y)\indicator{\bid<y\le \signal_i}\mid \signal_i}~.
\end{equation}
We now discuss two subcases.
\begin{itemize}
    \item 
    \emph{Case 1a:} $\bid\ge \middlepoint(\signal_i)$.
    On the event $\{\bid <y<\signal_i\}\subseteq\{\middlepoint(\signal_i)\le y<\signal_i\}$, 
    Eqn.~\eqref{eq:expostbiddersurplus} implies that
    $\expostbiddersurplus(\signal_i,y)
    =\frac{(\signal_i-y)\varepsilon}{\biddernum}>0$.
    Moreover, under the full-support continuous construction of signals in \Cref{lem:suff condi for approx full suplus extraction}, we have $\prob{\bid<y<\signal_i\mid \signal_i} > 0$ whenever $\bid <\signal_i$.
    Thus, the expectation in Eqn.~\eqref{eq:biddersurplus-diff} is strictly positive:
    \begin{align*}
    \biddersurplus_i(\signal_i\mid \signal_i)-\biddersurplus_i(\bid\mid \signal_i)
    = \expect{\expostbiddersurplus(\signal_i,y)\indicator{\bid <y\le \signal_i}\mid \signal_i}
    >0~,
    \end{align*}
    which gives us $\biddersurplus_i(\bid\mid \signal_i)<\biddersurplus_i(\signal_i\mid \signal_i)$.

    \item 
    \emph{Case 1b:} $\bid<\middlepoint(\signal_i)$.
    First, on the event $\{y\le \bid\}\subseteq\{y<\middlepoint(\signal_i)\}$, Eqn.~\eqref{eq:expostbiddersurplus} implies that
    $\expostbiddersurplus(\signal_i,y)\le 0$. 
    Thus,
    \begin{equation}
    \label{eq:biddersurplusb-nonpos}
        \biddersurplus_i(\bid\mid \signal_i)
        = \expect{\expostbiddersurplus(\signal_i,y)\indicator{y\le \bid}\mid \signal_i}
        \le 0~.
    \end{equation}
    Second, on the event $\{\middlepoint(\signal_i)<y<\signal_i\}$, Eqn.~\eqref{eq:expostbiddersurplus} implies
    $\expostbiddersurplus(\signal_i,y)>0$.
    By full support,
    we also have $\prob{\middlepoint(\signal_i)<y<\signal_i\mid \signal_i} > 0$.
    Thus,
    \begin{equation}\label{eq:biddersurplustruth-pos}
        \biddersurplus_i(\signal_i\mid \signal_i)
        =\expect{\expostbiddersurplus(\signal_i,y)\indicator{y\le \middlepoint(\signal_i)}\mid \signal_i} +
        \expect{\expostbiddersurplus(\signal_i,y)\indicator{\middlepoint(\signal_i)<y<\signal_i}\mid \signal_i}
        >0~.
    \end{equation}
    Combining \eqref{eq:biddersurplusb-nonpos} and \eqref{eq:biddersurplustruth-pos} gives us $\biddersurplus_i(\bid\mid \signal_i)<\biddersurplus_i(\signal_i\mid \signal_i)$.
\end{itemize}
Thus,  downward deviations are strictly suboptimal.


\xhdr{Case 2: Over-bidding $(\bid >\signal_i)$}
Fix bidder $i$ and condition on her signal $\signal_i\in(\lowerSignal,\upperSignal)$.
Suppose all other bidders bid truthfully, i.e., $\bid_j=\signal_j$ for all $j\neq i$.
Consider a deviation in which bidder $i$ submits a bid $\bid>\signal_i$.

Let $y \;\triangleq\; \mymax_{k\neq i} \signal_k$ be the highest signal among the other bidders. Under truthful bidding by bidder $i$, she wins if and only if $\signal_i\ge y$; under the deviation $\bid>\signal_i$, she wins if and only if $\bid\ge y$.
Thus, the deviation changes the allocation only on the event
\begin{align*}
    \event(\bid)\;\triangleq\;\{\,\signal_i<y\le \bid\,\}~.
\end{align*}
On $\event(\bid)$ the allocation (and payment) is unchanged, and thus the deviation has no effect on bidder $i$'s payoff. Thus, the interim payoff change of bidder $i$ satisfies
\begin{equation}
\label{eq:upward-dev-diff}
    \biddersurplus_i(\bid\mid \signal_i)-\biddersurplus_i(\signal_i\mid \signal_i)
    \;=\;
    \expect{(\expectedVal_i(\signalprofile)-y) \indicator{\event(\bid)} \mid  \signal_i}~,
\end{equation}
because on the event $\event(\bid)$, the bidder wins and pays $y$ (the highest competing bid), while under truthful
bidding she would lose and obtain payoff $0$.
We will show that the ex-post payoff $\expectedVal_i(\signalprofile)-y<0$ holds almost surely on $\event(\bid)$, which implies the expectation
in Eqn.~\eqref{eq:upward-dev-diff} is strictly negative whenever $\bid>\signal_i$.

Fix any signal profile $\signalprofile$ such that $\event(\bid)$ occurs, i.e., $\signal_i<y\le \bid$.
Let $j$ be a bidder holding the highest signal, so that $\signal_j=\mymax(\signalprofile)=y$.
Since $\signal_i<\mymax(\signalprofile)$, the condition \eqref{it:not-highest-signal} implies that  $\expectedVal_i(\signalprofile)\le \expectedVal_j(\signalprofile)$, and hence
\begin{equation}
\label{eq:xi-xj-compare}
    \expectedVal_i(\signalprofile)-y \;\le\; \expectedVal_j(\signalprofile)-y~.
\end{equation}
We next show $\expectedVal_j(\signalprofile)-y\le 0$, and it is strictly negative except on a tie event.
Let $z\triangleq \secmax(\signalprofile)$. We consider two cases.
\begin{itemize}
    \item 
    \emph{Case 2a:} $y = \mymax(\signalprofile)=\secmax(\signalprofile)$.
    Substituting $\secmax(\signalprofile)=y$ into the expected valuation construction in \eqref{it:highest-signal} gives us $\expectedVal_j(\signalprofile)=y$, and thus
    \begin{align*}
        \expectedVal_i(\signalprofile)-y
        \le 
        \expectedVal_j(\signalprofile)-y=0~.
        \tag{By Eqn.~\eqref{eq:xi-xj-compare}}
    \end{align*}
    
    \item 
    \emph{Case 2b:} $y = \mymax(\signalprofile)>\secmax(\signalprofile)$.
    We show $\expectedVal_j(\signalprofile)-y<0$ in both subcases of \eqref{it:highest-signal}:
    
    \begin{itemize}
        \item If $\lowerSignal<\secmax(\signalprofile)<\frac{\signal_j+\lowerSignal}{2}=\frac{y+\lowerSignal}{2}$, then \eqref{it:highest-signal} tells us
        \begin{align*}
            \expectedVal_j(\signalprofile)-y
            = 
            \secmax(\signalprofile) -\frac{(\secmax(\signalprofile) -\lowerSignal)\varepsilon}{2\biddernum}
            - y
            \;<\; \secmax(\signalprofile) - y \;<\; 0~.
        \end{align*}
    
        \item If $\frac{\signal_j+\lowerSignal}{2}=\frac{y+\lowerSignal}{2}\le \secmax(\signalprofile) <\upperSignal$, then \eqref{it:highest-signal} tells us
        \begin{align*}
            \expectedVal_j(\signalprofile)-y
            =
            \secmax(\signalprofile)-y+\frac{(y-\secmax(\signalprofile))\varepsilon}{\biddernum}
            =
            -(y-\secmax(\signalprofile))\left(1-\frac{\varepsilon}{\biddernum}\right)
            <0~,
        \end{align*}
        since $y-\secmax(\signalprofile)>0$ and $1-\varepsilon/N>0$ (for $N\ge 2$ and $\varepsilon\in(0,1]$).
    \end{itemize}
    Consequently, in this case we have $\expectedVal_j(\signalprofile)-y<0$, and thus by Eqn.~\eqref{eq:xi-xj-compare}, we have $\expectedVal_i(\signalprofile)-y<0$.
\end{itemize}
Combining the two cases, we obtain
\begin{equation}
    \label{eq:negativity-ae}
    \expectedVal_i(\signalprofile)-y\le 0 \text{ on } \event(\bid),
    \quad\text{and}\quad
    \expectedVal_i(\signalprofile)-y<0 \text{ on } \event(\bid)\cap\{y>\secmax(\signalprofile)\}~.
\end{equation}
Because the signal space is a continuous interval and the induced marginal distribution over
signals is atomless under the symmetric private private construction, ties occur with probability zero, that is, $\prob{y=\secmax(\signalprofile)\mid \signal_i} = 0$.
Thus, conditional on $\event(\bid)$ we have $y>\secmax(\signalprofile)$ almost surely, and thus by
Eqn.~\eqref{eq:negativity-ae},
\begin{align*}
(\expectedVal_i(\signalprofile)-y)\mathbf{1}\{\event(\bid)\}<0 \quad\text{almost surely on } \{\event(\bid)\}.
\end{align*}
Moreover, if $\bid>\signal_i$, then $\Pr(\event(\bid)\mid \signal_i)>0$ (there is positive probability that $y\in(\signal_i,b]$).
Therefore,
\begin{align*}
    \biddersurplus_i(\bid\mid \signal_i)-\biddersurplus_i(\signal_i\mid \signal_i)
    =
    \expect{(x_i(\signalprofile)-y)\,\mathbf{1}\{\event(\bid)\}\,\mid \, \signal_i} < 0~,
\end{align*}
which proves that any upward deviation $\bid>\signal_i$ is strictly suboptimal. 
\end{proof}

\begin{proof}[Proof of \Cref{lem: existence of infor structure with unique equilibrium}]
Fix a sufficiently small constant $\eps > 0$, we consider a symmetric private private information structure, and the common signal space $\constructsignalspace$ for each bidder  is a continuous interval $(\lowerSignal, \upperSignal)$ with $\upperSignal - \lowerSignal \leq \eps$.
For all bidders $i\in[\biddernum]$, we define
\begin{align*}
    \valspace^{(i)} 
    \triangleq 
    \left\{\valprofile\in \valspace^\biddernum: \val_i=\max(\valprofile)\right\},
    \quad
    \constructsignalspace^{(i)} 
    \triangleq 
    \left\{\signalprofile\in \constructsignalspace^\biddernum: \signal_i=\max(\signalprofile)\right\}~;
\end{align*}
Under a uniform tie breaking rule on values, we define the tie weighted distribution
$\modifiedpriorCDF^{(i)}\in\Delta(\valspace^{(i)})$ by
\begin{equation}\label{eq:hat-vartheta-profile for unique BNE}
    \modifiedpriorCDF^{(i)}(\valprofile)\ 
    \triangleq\
    \frac{1}{\winnerNum(\valprofile)}
    \cdot \frac{\valprior(\valprofile)}{\sum\nolimits_{\valprofile'\in \valspace^{(i)}} \valprior(\valprofile')\cdot \frac{1}{\winnerNum(\valprofile')}}\,,
    \quad \valprofile\in \valspace^{(i)}~.
\end{equation}
Similarly, since signal profiles are uniformly distributed on $\constructsignalspace^{\biddernum}$, the distribution on $\constructsignalspace^{(i)}$ is well defined, denoted as  $\signalProfileDist^{(i)}$. 

\xhdr{Step 1: Existence of feasible $(\lowerSignal, \upperSignal, \eps)$} 
We first show that there exists a choice of $(\lowerSignal, \upperSignal, \eps)$ such that for every $i \in [\biddernum]$ there exists a joint distribution $\signalscheme^{(i)} \in \Delta(\valspace^{(i)} \times \constructsignalspace^{(i)})$ satisfying: the marginals of $\signalscheme^{(i)}$ on $\valspace^{(i)}$ and $\constructsignalspace^{(i)}$ are $\modifiedpriorCDF^{(i)}$ and $\signalProfileDist^{(i)}$ respectively; And for every $\signalprofile \in \constructsignalspace^{(i)}$ with a unique highest signal $\signal_i$, 
\begin{align}
    \label{eq: winner expected value}
    \expect{\val_i \mid \signalprofile} = 
        \begin{cases} 
        \secmax(\signalprofile) - \displaystyle\frac{(\secmax(\signalprofile) -  \lowerSignal) \cdot \eps}{2\biddernum} & \text{if } \lowerSignal< \secmax(\signalprofile) < \displaystyle\frac{\signal_i + \lowerSignal}{2}, \\[10pt]
        \secmax(\signalprofile) + 
        \displaystyle \frac{(\signal_i - \secmax(\signalprofile))\cdot\eps}{\biddernum} & \text{if } 
        \displaystyle \frac{\signal_i + \lowerSignal}{2} \leq \secmax(\signalprofile) < \upperSignal.
        \end{cases}
        \end{align}
Fix a bidder $i \in [\biddernum]$ and a common signal space $\constructsignalspace$. Restricting attention to symmetric private private information structures, the signal space $\constructsignalspace$ uniquely determines $\signalProfileDist^{(i)}$, the conditional distribution of signal profiles given that bidder $i$ wins. 
For every signal profile $\signalprofile \in \constructsignalspace^{(i)}$ with a unique highest signal $\signal_i$, we define the function $\uniqueExpectedval(\signalprofile)$ as:
\begin{align}
  \label{eq: definition of y}\uniqueExpectedval(\signalprofile)  = 
        \begin{cases} 
        \secmax(\signalprofile) - \displaystyle\frac{(\secmax(\signalprofile) -  \lowerSignal) \cdot \eps}{2\biddernum} & \text{if } \lowerSignal< \secmax(\signalprofile) < \displaystyle\frac{\signal_i + \lowerSignal}{2}, \\[10pt]
        \secmax(\signalprofile) + 
        \displaystyle \frac{(\signal_i - \secmax(\signalprofile))\cdot\eps}{\biddernum} & \text{if } 
        \displaystyle \frac{\signal_i + \lowerSignal}{2} \leq \secmax(\signalprofile) < \upperSignal.
        \end{cases} 
\end{align}
Based on $\signalProfileDist^{(i)}$, the CDF of $\uniqueExpectedval(\signalprofile)$, denoted by $\winnerposteriorCDF$, is given by:
\begin{align*}
    \winnerposteriorCDF(y) & \triangleq  \int_{s:\ \lowerSignal< \secmax(\signalprofile) < \frac{\signal_i + \lowerSignal}{2}, \secmax(\signalprofile) - \frac{(\secmax(\signalprofile) -  \lowerSignal) \cdot \eps}{2\biddernum}\le y}{ \signalProfileDist^{(i)}(\signalprofile) \, \dd \signalprofile }\\
    & + \int_{s:\ \frac{\signal_i + \lowerSignal}{2} \leq \secmax(\signalprofile) < \upperSignal, \secmax(\signalprofile) + \frac{(\signal_i - \secmax(\signalprofile))\cdot\eps}{\biddernum} \le y}{\signalProfileDist^{(i)}(\signalprofile)\, \dd \signalprofile }
\end{align*}
Let  $\winnerExpectedvalProb$ denote the PDF of $\winnerposteriorCDF$.
Our goal is to construct a joint distribution over values and signals $\signalscheme^{(i)}$, conditional on bidder~$i$ winning, such that bidder~$i$'s posterior mean satisfies \eqref{eq: winner expected value} for every signal profile with a unique highest signal. It suffices to show that the induced distribution $\winnerExpectedvalProb$ is Bayes-plausible with respect to bidder $i$'s modified prior $\modifiedpriorCDF_i$ (the distribution of $\val_i$ conditional on $i$ winning under uniform tie-breaking). Let $\winnerpriorCDF$ denote the CDF of $\modifiedpriorCDF_i$. We establish Bayes-plausibility by choosing $(\lowerSignal, \upperSignal, \eps)$ such that $\winnerposteriorCDF$ is a MPC of $\winnerpriorCDF$.

Let $\winnerMean = \expect[\modifiedpriorCDF_{i}]{\val_i}$. We first show that there exist parameters $(\lowerSignal, \upperSignal, \eps) \in \mathbb{R}^+ \times \mathbb{R}^+ \times \mathbb{R}^+ $ such that 
$\expect[\winnerExpectedvalProb]{y} = \winnerMean$.
Let $\secmaxsignal(\signalprofile)$ denote the second-highest signal in $\signalprofile$. When bidder $i$~wins, the joint distribution of the winning signal $\signal_i$ and the second-highest signal $\secmaxsignal$ is given by:
\begin{align*}
    \jointdist^{(i)}(\signal_i, \secmaxsignal) = \frac{\biddernum(\biddernum-1)(\secmaxsignal - \lowerSignal)^{\biddernum-2}}{(\upperSignal - \lowerSignal)^\biddernum}.
\end{align*}
We note that by definition, $\uniqueExpectedval(\signalprofile)$ depends on $\signal_i$ and  $\secmaxsignal$, we use $\uniqueExpectedval(\signal_i, \secmaxsignal)$ to denote $\uniqueExpectedval(\signalprofile)$.
We show that there exists $(\lowerSignal, \upperSignal, \eps)$ such that $\expect[\winnerExpectedvalProb]{y} = \winnerMean$ under the constraint $\upperSignal - \lowerSignal \leq \eps$. We have 
\begin{align*}
    \expect[\winnerExpectedvalProb]{y} &= \int_{\lowerSignal}^{\upperSignal} \int_{\lowerSignal}^{\signal_i} \uniqueExpectedval(\signal_i, \secmaxsignal) \frac{\biddernum(\biddernum-1)(\secmaxsignal - \lowerSignal)^{\biddernum-2}}{(\upperSignal - \lowerSignal)^\biddernum} \, \dd \secmaxsignal \, \dd \signal_i.
\end{align*}
Since $\uniqueExpectedval(\signal_i, \secmaxsignal)$ is piecewise linear given $\signal_i$ and the density $\jointdist^{(i)}$ is continuous in $(\lowerSignal, \upperSignal)$, $\expect[\winnerExpectedvalProb]{y}$ is a continuous function of the parameters $(\lowerSignal, \upperSignal, \eps)$. 
Fix an arbitrary $\eps > 0$ and let the interval length $\Length = \upperSignal - \lowerSignal$ be a constant. We perform a change of variables with $\delta_i = \signal_i - \lowerSignal$ and $\delta_{\secmaxsignal} = \secmaxsignal - \lowerSignal$, where $\delta_i, \delta_{\secmaxsignal} \in (0, \Length)$. Based on \eqref{eq: definition of y}:
\begin{itemize}
    \item If $0 < \delta_{\secmaxsignal} < \frac{\delta_i}{2}$, then $\uniqueExpectedval = \lowerSignal + \delta_{\secmaxsignal} - \frac{\delta_{\secmaxsignal} \eps}{2\biddernum}$;
    \item If $\frac{\delta_i}{2} \leq \delta_{\secmaxsignal} < \Length$, then $\uniqueExpectedval = \lowerSignal + \delta_{\secmaxsignal} + \frac{(\delta_i-\delta_{\secmaxsignal})\eps}{\biddernum}.$
\end{itemize}
Consequently, $\uniqueExpectedval$ can be decomposed into $\lowerSignal + \Delta(\delta_i, \delta_{\secmaxsignal}, \eps)$, where the term $\Delta$ is bounded by $O(\Length + \eps \Length)$. The expectation then becomes:
\begin{align*}
    \expect[\winnerExpectedvalProb]{\uniqueExpectedval} &= \lowerSignal + \frac{\biddernum(\biddernum-1)}{\Length^\biddernum} \int_0^\Length \int_0^{\delta_i} \Delta(\delta_i, \delta_\secmaxsignal, \eps) \delta_\secmaxsignal^{\biddernum-2} \, \dd \delta_\secmaxsignal \, \dd \delta_i.
\end{align*}
We define
\begin{align*}
    K(\Length, \eps) = \frac{\biddernum(\biddernum-1)}{\Length^\biddernum} \int_0^\Length \int_0^{\delta_i} \Delta(\delta_i, \delta_\secmaxsignal, \eps) \delta_\secmaxsignal^{\biddernum-2} \, \dd \delta_\secmaxsignal \, \dd \delta_i~.
\end{align*}
This yields the linear form $\expect[\winnerExpectedvalProb]{\uniqueExpectedval} = \lowerSignal + K(\Length, \eps)$, where $0 < K(\Length, \eps) < \Length$ is a constant independent of $\lowerSignal$ and is positively correlated with $\Length$. As $\lowerSignal$ varies from $-\infty$ to $+\infty$, the continuous function $\expect[\winnerExpectedvalProb]{\uniqueExpectedval}$ spans the entire real line. By the Intermediate Value Theorem, there exists a $\lowerSignal$ such that $\expect[\winnerExpectedvalProb]{\uniqueExpectedval} = \winnerMean$.

Analogously, we extend this decomposition to the second-order stochastic dominance condition. Recall that the integral of the CDF can be expressed as the expected shortfall. By substituting the decomposition $\uniqueExpectedval = \lowerSignal + \Delta(\delta_i, \delta_{\secmaxsignal}, \eps)$, we obtain:
\begin{align*}
    \int_{\minVal}^{x} \winnerposteriorCDF(t) \,\dd t 
    &= \expect[\winnerExpectedvalProb]{\max\{0, x - \uniqueExpectedval\}} \\
    &= \frac{\biddernum(\biddernum-1)}{\Length^\biddernum} \int_0^\Length \int_0^{\delta_i} \max\bigl\{0, x - (\lowerSignal + \Delta(\delta_i, \delta_\secmaxsignal, \eps))\bigr\} \, \delta_\secmaxsignal^{\biddernum-2} \, \dd \delta_\secmaxsignal \, \dd \delta_i.
\end{align*}
Since the perturbation term satisfies $\Delta(\cdot) = O(\Length)$ (and $\lowerSignal$ is adjusted continuously to fix the mean), the integral expression above is continuous in $\Length$. Let us denote this function as $\Lambda(x; \Length)$.
Consider the limit as $\Length \to 0$. The term $\Delta(\cdot)$ vanishes and $\lowerSignal \to \winnerMean$. Thus, for any fixed $x$:
\begin{align*}
    \lim_{\Length \to 0} \Lambda(x; \Length) = \max\{0, x - \winnerMean\}.
\end{align*}
For the prior $\modifiedpriorCDF_i$, we have for all $x \in (\minVal, \winnerMean)$:
\begin{align*}
    \int_{\minVal}^{x} \winnerpriorCDF(t) \,\dd t > \max\{0, x - \winnerMean\}.
\end{align*}
Given any $x$,  $\int_{\minVal}^{x} \winnerpriorCDF(t) \,\dd t$ is a constant.
Since $\Lambda(x; \Length)$ is continuous function for $\Length$, and the strict inequality holds in the limit, there exists a sufficiently small $\Length > 0$ such that the inequality holds for all $x \in (\minVal, \maxVal)$. This confirms that $\winnerExpectedvalProb$ is a mean-preserving contraction of $\modifiedpriorCDF_i$.

\xhdr{Step 2: Construct $\signalscheme^{(i)}$} 
Fix any feasible $(\lowerSignal, \upperSignal, \eps)$. We are now ready to construct the local information structure $\signalscheme^{(i)}$. As the distribution $\winnerExpectedvalProb$ is Bayes-plausible under the modified prior $\modifiedpriorCDF_i$, there exists a martingale coupling between $\val_i\sim\modifiedpriorCDF_i$ and $\expectedVal_i \sim \winnerExpectedvalProb$:
equivalently, there are conditional distributions $\{\conditionalDist_{t}\}_{t\in\constructsignalspace} \in \Delta(\valspace)$ such that
\begin{align*}
\sum\nolimits_{t\in\constructsignalspace} \winnerExpectedvalProb(t)\,\conditionalDist_{t}(\val_i) = \modifiedpriorCDF_i(\val_i)
\quad\text{and}\quad
\expect[\conditionalDist_{t}]{\val_i} =t\ \ \text{for all }t\in\bar S.
\end{align*}

We construct the joint distribution $\signalscheme^{(i)}$ as follows:
We first sample $\signalprofile\sim \signalProfileDist^{(i)}$ and set $t=\uniqueExpectedval(\signalprofile)$, then we sample $\val_i\sim \conditionalDist_t$, and finally, we sample $\val_{-i}$ from the conditional distribution of $\modifiedpriorCDF^{(i)}(\cdot\mid \val_i)$ independently of $\signalprofile$ given $v_i$. 
That is, for any $\valprofile \in \valspace^{(i)}$ and $\signalprofile \in \constructsignalspace^{(i)}$:
\begin{align*}
    \signalscheme^{(i)}(\valprofile, \signalprofile) & = \prob{\valprofile \mid \signalprofile} \cdot \signalProfileDist^{(i)}(\signalprofile) = \prob{(\val_i, \val_{-i}) \mid \signalprofile} \cdot \signalProfileDist^{(i)}(\signalprofile) 
    = \prob{\val_i \mid \signalprofile} \cdot \prob{\val_{-i} \mid \signalprofile, \val_i}\cdot\signalProfileDist^{(i)}(\signalprofile)~,
\end{align*}
By the construction we have that $\val_{-i}$ is independent of $\signalprofile$ given $\val_i$. Thus,
\begin{align*}
    \signalscheme^{(i)}(\valprofile, \signalprofile) 
    & = \prob{\val_i \mid \signalprofile} \cdot \prob{\val_{-i} \mid \signalprofile, \val_i}\cdot\signalProfileDist^{(i)}(\signalprofile) \\
    & = \prob{\val_i \mid \signalprofile} \cdot \prob{\val_{-i} \mid \val_i}\cdot\signalProfileDist^{(i)}(\signalprofile)\\
    & = \conditionalDist_{\uniqueExpectedval(\signalprofile)}(\val_i) \cdot \modifiedpriorCDF^{(i)}(\val_{-i} \mid \val_i) \cdot \signalProfileDist^{(i)}(s)~.
\end{align*}
By construction, the $\signalprofile$ marginal is $\signalProfileDist^{(i)}$, the $\valprofile$ marginal is $\modifiedpriorCDF^{(i)}$, and for all $\signalprofile$, 
\begin{align*}
    \expect[\signalscheme^{(i)}]{\val_i \mid \signalprofile} &= \sum\nolimits_{\val_i} \val_i \cdot \frac{\sum\nolimits_{\val_{-i}} \signalscheme^{(i)}((\val_i, \val_{-i}), \signalprofile)}{\signalProfileDist^{(i)}(\signalprofile)}\\
    & = \sum\nolimits_{\val_i} \val_i \cdot \frac{\sum\nolimits_{\val_{-i}}\conditionalDist_{\uniqueExpectedval(\signalprofile)}(\val_i) \cdot \modifiedpriorCDF^{(i)}(\val_{-i} \mid \val_i) \cdot \signalProfileDist^{(i)}(s)}{\signalProfileDist^{(i)}(\signalprofile)}\\
    & = \sum\nolimits_{\val_i} \val_{i} \cdot  \conditionalDist_{\uniqueExpectedval(\signalprofile)}(\val_i)  = \expect[\conditionalDist_{\uniqueExpectedval(\signalprofile)}]{\val_i} =\uniqueExpectedval(\signalprofile)~.
\end{align*}
which satisfies condition Eqn.~\eqref{eq: winner expected value}. 
Moreover, for all $\signalprofile \in \strictsignalspace^{(i)}$, since $\valprofile \in \valspace^{(i)}$ and only bidder~$i$ has the highest signal and \eqref{eq: winner expected value} holds, we have that the signal profile $\signalprofile$ satisfies conditions in \Cref{lem:suff condi for approx full suplus extraction} and \ref{eq:construction rule}. Thus, according to \Cref{lem:global-assembly}, we can construct $\signalscheme$ based on $\signalscheme^{(i)}$ such that $\signalscheme$ satisfies conditions in \Cref{lem:suff condi for approx full suplus extraction} and \ref{eq:construction rule} for all $\signalprofile \in \strictsignalspace$.
\end{proof}

\section{Achievable Welfare Outcomes in General Information Structures}
\label{subsec:welfare outcome}

\newcommand{\inforstructureA}{\inforstructure_A}

In this section, we relax the signal independency condition imposed by the private private information and instead 
consider the full set of surplus pairs that can arise under general information structures.

\begin{theorem}
\label{thm:general full characterization}
For any pair of non-negative values $(R, B) \in \reals_+^2$ satisfying the feasibility constraints: $\worstwelfare \leq R + B \leq \optwelfare$,
there exists an information structure $\inforstructureGeneral$ and a corresponding equilibrium $\bidstrategyprofile$ such that the expected seller revenue is $\Rev(\inforstructureGeneral, \bidstrategyprofile) = R$ and the bidders' surplus is $\biddersurplus(\inforstructureGeneral, \bidstrategyprofile) = B$.

In other words, the set of payoff pairs implementable by general information structures in a second-price auction coincides exactly with the set of feasible payoffs satisfying $\worstwelfare \leq R + B \leq \optwelfare$.
\end{theorem}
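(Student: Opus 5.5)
The plan is to prove existence only; the converse inclusion is immediate. Every allocation gives the item to some bidder, so welfare lies between $\worstwelfare$ and $\optwelfare$. Bidder surplus is non-negative because bidding $0$ always guarantees a non-negative payoff, and revenue is a second-highest bid, hence non-negative.

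For existence, I would first show that the set of implementable pairs $(B,R)$ under general information structures is convex. Given $(\inforstructure_1,\bidstrategyprofile_1)$ and $(\inforstructure_2,\bidstrategyprofile_2)$ and $\theta\in[0,1]$, draw a public coin $c\in\{1,2\}$ with $\prob{c=1}=\theta$ independently of $\valprofile$. Send every bidder the label $c$ together with her signal from $\inforstructure_c$. Conditional on $c$, each bidder faces exactly the game under $\inforstructure_c$, so playing $\bidstrategyprofile_c$ after label $c$ is an equilibrium, and payoffs are the $\theta$-mixture. The no-overbidding restriction is preserved because posteriors conditional on $c$ are unchanged. This correlation across bidders is exactly what private private structures forbid, and it is harmless here. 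The trapezoid $\{(B,R)\in\reals_+^2:\worstwelfare\le R+B\le\optwelfare\}$ is the convex hull of $A=(\optwelfare,0)$, $B=(0,\optwelfare)$, $C=(0,\worstwelfare)$ and $D=(\worstwelfare,0)$. It therefore suffices to implement the four vertices.

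\emph{Vertex $B$.} I would invoke \Cref{thm:pripri max rev}, since a private private structure is in particular a general one. In the degenerate case where $\max(\valprofile)$ is constant, I would use correlation directly. Designate a uniformly random top-value bidder as the winner and a second bidder as the price-setter, and tell each her role. The price-setter bids her posterior mean, and this mean can be pushed to $\valUB$ by pooling her with top-value states.

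\emph{Vertex $A$.} Designate a highest-value bidder $i^\star$ (uniformly among ties) and tell her only her value; all others receive the signal ``lose''. Then $i^\star$ bids $\val_{i^\star}$ and the others bid $0$, so the allocation is efficient at price $0$ and revenue is $0$. For a loser, winning requires outbidding $\val_{i^\star}=\max(\valprofile)\ge \val_j$, so any upward deviation yields a non-positive payoff. The winner already pays $0$ and cannot gain.

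\emph{Vertices $C$ and $D$.} These are the main obstacle. The item must go to a lowest-value bidder $i_\star$, who should pay her whole value at $C$ and nothing at $D$. At the same time, the losers, whose expected values exceed $\min(\valprofile)$, must not want to outbid her. Naively revealing $\val_{i_\star}$ fails: a loser could pay $\val_{i_\star}$ and gain $\expect{\val_j-\min(\valprofile)}>0$. My first attempt would be to withhold information from the designated winner so that her posterior support contains $\max(\valprofile)$. She can then legitimately bid $\max(\valprofile)$, which is allowed because that bid is not surely above her value. Losers who outbid her would pay at least $\max(\valprofile)\ge \val_j$ and lose money. I would do this by giving the winner a signal equal to the realized $\max(\valprofile)$, pooled across states in which she is and is not the top bidder, and by using symmetry of $\valprior$ to keep her posterior support wide.

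At $D$, the losers bid $0$, so the winner earns $\expect{\min(\valprofile)}=\worstwelfare$. She must still prefer her high bid to bidding $0$, which holds since every other bid is $0$. At $C$, one loser must bid exactly the winner's interim value. I would borrow the \ref{cond:winner}-style calibration of \Cref{lem:local-coupling}: a designated price-setter receives a signal equal to $\expect{\val_{i_\star}\mid\text{winner's signal}}$ and bids it. That bid must be weakly below the price-setter's own support maximum and weakly below the winner's bid. Verifying that these incentive constraints hold simultaneously for every loser is where I expect the real work to lie.
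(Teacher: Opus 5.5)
Your outer structure matches the paper's proof: the trivial converse, convexity via a public coin, and the reduction to the four vertices. Your vertex $A$ is also the paper's construction. Using \Cref{thm:pripri max rev} for vertex $B$, instead of the paper's correlated Gaussian-noise construction (\Cref{prop:general-full-extraction}), is a fine alternative when $\max(\valprofile)$ is nondegenerate.

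The genuine gap is at $C$ and $D$. You leave them unproved, and the route you sketch cannot be completed. Minimal welfare forces the designated winner to be a lowest-value bidder in almost every state. Pooling her signal with states in which she is the top-value bidder does not help. In such a state she bids the same realized maximum as that state's designated winner, so the two tie. Uniform tie-breaking then gives her the item with positive probability, and welfare exceeds $\worstwelfare$ unless those states are common-value.

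More fundamentally, for some priors no information structure and equilibrium obeying the paper's no-overbidding restriction attains these vertices. Take $\biddernum=2$ and $\valprior$ uniform on $\{(0,1),(1,0)\}$, so $\worstwelfare=0$ and $C=D=(0,0)$.
\begin{itemize}
\item Welfare $0$ requires the value-$0$ bidder to be the unique top bidder almost surely.
\item She pays the rival's bid and could secure a nonnegative payoff by bidding $0$. So in equilibrium the value-$1$ bidder bids $0$ almost surely.
\item Hence each bidder bids positively exactly when her value is $0$. At almost every such signal her posterior puts probability one on $\val_i=0$, so every positive bid is surely above her value.
\end{itemize}
Thus $(0,0)$ is not implementable under the restriction, and no cleverer pooling can rescue your plan.

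The paper does not attempt what you attempt.
\begin{itemize}
\item For $D$ (\Cref{prop:min efficiency zero rev}), it sends a lowest-value bidder $i_\star$ the fixed signal $\valUB=\max\valspace$ and everyone else the null signal $0$. She bids $\valUB$ and pays $0$, and a loser can win only by tying at $\valUB\ge\val_j$.
\item For $C$ (\Cref{prop:min efficiency zero bidders surplus}), it additionally tells a highest-value bidder the realized $\val_{i_\star}$ and has her bid it. The winner then pays exactly her own value ex post, so no interim calibration in the style of \Cref{lem:local-coupling} is needed. The price-setter's bid $\val_{i_\star}\le\val_{\maxbidder}$ stays within her own support.
\end{itemize}
The paper never checks that $\valUB$ lies in the designated winner's posterior support, and in general it does not. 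It tacitly exempts her from the no-overbidding restriction, and the example above shows this exemption is unavoidable. To finish, use the fixed bid $\valUB$ and say explicitly that the restriction is waived for the designated winner.

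Separately, your degenerate-case fix for vertex $B$ fails. Take $\valprior$ uniform on $\{(1,0),(0,1)\}$ with bids in $[0,\valUB]=[0,1]$. Revenue $\optwelfare=1$ forces the second-highest bid to equal $1$ almost surely. Then both bidders bid $1$ and tie, welfare is only $\tfrac12$, and bidder surplus would be negative. So vertex $B$ is not implementable there at all, consistent with \Cref{prop:general-full-extraction} assuming $\max(\valprofile)$ nondegenerate.
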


As illustrated in \Cref{fig:feasible_region}, the feasible payoff set forms a trapezoid with vertices $A, B, C,$ and $D$. The upper boundary, represented by the segment $BA$, corresponds to the efficiency frontier where the sum of revenue and surplus equals the maximum welfare, $\optwelfare$. Conversely, the lower boundary (segment $CD$) represents the inefficiency frontier associated with the maximally inefficient welfare, $\worstwelfare$. 
\Cref{thm:general full characterization} shows that 
every point in the trapezoid is implementable as an equilibrium outcome under some information structure.

\wtedit{It is worth comparing \Cref{thm:general full characterization} with the welfare outcomes achievable in first price auctions. \citet{BBM-17} characterize the achievable welfare outcomes in first price auctions and show that point $A$ is typically not achievable.
Instead, there is a convex curve connecting points $C$ and $A$, and the achievable region is the convex set generated by this curve together with point $B$.
In contrast, \Cref{thm:general full characterization} shows that the second price auction is more flexible: every payoff pair that satisfies constraint $\worstwelfare \le R+B\le \optwelfare$ and gives nonnegative payoffs to both the bidders and the seller is implementable under some information structure.
Prior work on second-price auctions can be viewed as identifying particular outcomes within this trapezoid. 
For example, \citet{BBX-18} construct outcomes arbitrarily close to point $B$, \citet{CH-13} discuss outcomes with zero seller revenue, and \citet{BHMSW-22} characterize welfare outcomes when the value prior is a product measure and the information structure is required to be conditionally independent.
Complementing these implementability results, \cite{BH-04} show that, even under full information and under IPV setting with at least three bidders, there is a continuum of equilibria inducing distinct welfare and revenue outcomes.}

To establish this result, we first observe that the set of payoff pairs $(\Rev, \biddersurplus)$ implementable by general information structures is convex\footnote{\zsedit{Convexity holds because any convex combination of outcomes can be achieved by randomizing over the corresponding information structures.}}.
This convexity property simplifies our characterization substantially: it suffices to identify the four extreme points of the feasible set, namely the vertices $A, B, C,$ and $D$ in \Cref{fig:feasible_region}. 
Once the existence of information structures inducing these extreme points is established, the entire feasible region is achievable as the convex hull of these vertices.


\newcommand{\maxvalue}{\valUB}
\newcommand{\maxvalueLB}{\valUB\primed}
\newcommand{\maxvalueUB}{\valUB\doubleprimed}
\newcommand{\gaussnoise}{\zeta}
\newcommand{\noiseScale}{\tau}
\newcommand{\noisymax}{Y}
\newcommand{\postmaxmean}{\mu}
\newcommand{\winnerShift}{\Delta}
\newcommand{\lowBid}{x^{-}}
\newcommand{\highBid}{x^{+}}
\newcommand{\normalDist}{\mathcal{N}}
\newcommand{\Var}{\operatorname{Var}}

\xhdr{Full surplus extraction} 
We provide a simple correlated information structure that admits a BNE to achieve full surplus extraction.
\begin{proposition}[Maximum revenue with zero bidders' surplus]
\label{prop:general-full-extraction}
Suppose that $\maxvalue\equiv \max_{j\in[\biddernum]}\val_j$ is nondegenerate. There exists a general information structure $\inforstructure$ and an equilibrium $\bidstrategy$ such that the allocation is efficient and the seller extracts the full surplus:
\begin{align*}
    \Rev(\inforstructure,\bidstrategy)=\totalsurplus,
    \quad
    \biddersurplus(\inforstructure,\bidstrategy)=0~.
\end{align*}
\end{proposition}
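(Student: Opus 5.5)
The quickest route is to observe that the statement follows from \Cref{thm:pripri max rev}: any private private information structure is in particular a general information structure. Under the same nondegeneracy assumption (that $\max_i \val_i$ is not constant on $\supp(\valprior)$), \Cref{thm:pripri max rev} already supplies a structure $\hat{\inforstructure}$ and an efficient equilibrium with revenue equal to $\optwelfare$. Since efficiency means $\totalsurplus = \optwelfare = \Rev + \biddersurplus$, bidder surplus is zero. I would still present a simpler correlated construction, since the point of this section is to exhibit simple general structures for the vertices of the trapezoid.

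The direct construction uses correlation to pin down the price. Draw $\valprofile\sim\valprior$ and let $i^\star$ be a highest-value bidder, chosen uniformly among ties. Draw $t$ from a distribution $G$ on an interval of length $L$ such that $G$ is a mean-preserving contraction of the law of $\maxvalue$. Couple $t$ with $\valprofile$ through a martingale coupling, so that $\expect{\maxvalue\mid t}=t$. Send $i^\star$ the signal $t+\eta$ for a small $\eta>0$, and send every other bidder the signal $t$. Concretely, I would take $G$ to be the law of $\expect{\maxvalue \mid \maxvalue + \text{noise}}$ for a continuous noise term. This is a garbling of $\maxvalue$, so it is Bayes plausible and automatically atomless.

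The steps are then as follows.
\begin{enumerate}
\item Under truthful bidding, the bidder with the top signal is $i^\star$, so the allocation is efficient. The price is $t$ and the winner's conditional value is $t$, so the winner's surplus is zero.
\item Check that upward deviations by losers do not pay. A loser who overbids wins at price $t+\eta$, while $\expect{\val_j\mid t}\le t$, so the deviation yields a loss.
\item Check that downward deviations by the winner do not pay. A winner who shades the bid either loses or pays $t$ against value $t$, so the payoff is zero in every case.
\end{enumerate}

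The main obstacle is that bidders observe only their own signal, not the full profile. A bidder receiving $t+\eta$ versus $t$ must not be able to infer her role asymmetrically in a way that breaks these calculations. Since the signal grid has the form $\{t, t+\eta\}$ with atomless $t$, a single signal $s$ could be either a winner's $t+\eta$ or a loser's $t$ at a nearby value of $t$. To handle this, I would separate the roles by making the winner's signal distinguishable, for example by using a separate label component. Alternatively, I would fall back on the private private construction of \Cref{lem:global-assembly} and \Cref{lem:suff condi for full suplus extraction}, which resolves exactly this issue, and invoke \Cref{thm:pripri max rev} directly.
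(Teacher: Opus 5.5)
Your first route is correct and differs from the paper's. Every private private structure is a general one, and \Cref{thm:pripri max rev} is proved under the same nondegeneracy of $\max_i v_i$. The statement is therefore an immediate corollary, with zero bidder surplus following from efficiency. This buys brevity, but it rests on the heavier machinery behind \Cref{lem:local-coupling} and \Cref{lem:global-assembly}: a mean-preserving contraction on a short interval, a martingale coupling, and a symmetric assembly. The paper instead gives a self-contained correlated construction. It needs only two facts: $\mu(y)=\mathbb{E}[\bar v\mid Y=y]$ is strictly increasing for $Y=\bar v+\zeta$ with Gaussian $\zeta$, and the tie-broken top bidder $i^\star$ is independent of $(\bar v,Y)$ by exchangeability.

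Your second construction is essentially the paper's, and your ``main obstacle'' is not an obstacle. A bid can depend only on the signal, and your steps 1--3 already show that bidding $s$ is optimal conditional on each role. As $i^\star$, facing rival bids $s-\eta$, every bid yields zero. As a non-selected bidder, every bid below $s+\eta$ loses and every other bid loses money. So bidding $s$ stays optimal after averaging over roles, which is exactly how the paper concludes, with no label. What those steps do need, and you leave implicit, is $\mathbb{E}[\bar v\mid t,i^\star=i]=\mathbb{E}[\bar v\mid t]$. This holds because $t$ depends on $v$ only through $\bar v$ and independent noise.

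One substantive difference remains. The paper shifts on the noise scale, sending $\mu(Y+\Delta)$ rather than $\mu(Y)+\eta$, which keeps every signal inside $(\min\mathrm{supp}\,\bar v,\max\mathrm{supp}\,\bar v)$. With Gaussian garbling, $t$ has full support on that interval, so $t+\eta$ exceeds $\max\mathrm{supp}\,\bar v$ with positive probability. Such a signal reveals the selected bidder, whose bid is then surely above her value and possibly outside $[0,\max V]$. This violates the paper's standing bid restriction. Shifting as the paper does fixes it. So does your first option: an atomless $G$ on a short interval around $\mathbb{E}[\bar v]$, coupled to $v$ only through $\bar v$.
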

The intuition behind the construction is as follows. We generate a common noisy statistic of the highest valuation. 
All non-selected bidders receive the same baseline signal, while a uniformly selected highest-valuation bidder receives an upward-shifted signal. The shift guarantees that the selected bidder submits the unique highest bid. At the same time, conditional on being selected as the winner, her expected valuation equals the baseline signal, which is exactly the second-highest bid. Thus, the winner receives zero expected surplus. A non-selected bidder, by contrast, must pay strictly more than her expected valuation in order to overtake the selected winner. These two properties make truthful bidding an equilibrium and allow the seller to extract the entire efficient welfare.

\begin{proof}[Proof of \Cref{prop:general-full-extraction}]
We prove the result by explicitly constructing the information structure and verifying the equilibrium properties. 
Let
\begin{align*}
    \maxvalueLB\equiv \min\supp(\maxvalue)~,
    \quad
    \maxvalueUB\equiv \max\supp(\maxvalue)~.
\end{align*}
Since $\maxvalue$ is nondegenerate, $\maxvalueLB<\maxvalueUB$. Fix arbitrary constants $\noiseScale>0$ and $\winnerShift>0$. Let $\gaussnoise\sim\normalDist(0,\noiseScale^2)$ be a noise independent of the valuation profile, and define $\noisymax\equiv \maxvalue+\gaussnoise$. 
Let $\postmaxmean(y)\equiv \expect{\maxvalue\condition \noisymax=y}$ denote the posterior mean of the maximum valuation after observing $\noisymax=y$.

We first record two useful properties of $\postmaxmean$. Since $\maxvalue$ is bounded and the Gaussian density is strictly positive, standard differentiation of the posterior mean gives
\begin{align*}
    \postmaxmean'(y)
    =
    \frac{\Var(\maxvalue\condition \noisymax=y)}{\noiseScale^2}
    >0~.
\end{align*}
The strict inequality follows from the nondegeneracy of $\maxvalue$ and the full support of the Gaussian noise. Hence $\postmaxmean$ is strictly increasing. Moreover, as $y\to -\infty$, the posterior of $\maxvalue$ concentrates on the lower end of its support, while as $y\to+\infty$, it concentrates on the upper end of its support. Therefore, $\postmaxmean(\R)=(\maxvalueLB,\maxvalueUB)$. We take the common signal space to be $\signalspace\equiv(\maxvalueLB,\maxvalueUB)$.

For every realized value profile $\valprofile$, select $\maxbidder\in\argmax_{j\in[\biddernum]}\val_j$ uniformly at random among the highest-valuation bidders. After drawing $\gaussnoise$, the signal profile is generated as follows:
\begin{align*}
    \signal_{\maxbidder}=\postmaxmean(\noisymax+\winnerShift),
    \quad
    \signal_j=\postmaxmean(\noisymax)
    \quad\text{for every }j\neq \maxbidder~.
\end{align*}
Let $\signalscheme(\cdot\condition \valprofile)$ denote the conditional law of the signal profile generated by this procedure, and let $\inforstructure$ be the resulting information structure. The signals are generally correlated because, conditional on $\noisymax$, all non-selected bidders receive the same signal.

We consider truthful bidding, $\bidstrategy_i(\signal_i)=\signal_i$ for every bidder $i$. We first show that $\maxbidder$ is independent of $\maxvalue$. For any measurable set $A\subseteq\supp(\maxvalue)$,
\begin{align*}
    \prob{\maxbidder=i,\maxvalue\in A}
    &=
    \expect{
    \indicator{\maxvalue\in A}
    \frac{\indicator{\val_i=\maxvalue}}{\setsize{\argmax_{j\in[\biddernum]}\val_j}}
    }~.
\end{align*}
By exchangeability, the right-hand side is the same for every bidder $i$. Summing over all bidders gives $\prob{\maxvalue\in A}$. Hence $\prob{\maxbidder=i,\maxvalue\in A}=\frac{1}{\biddernum}\prob{\maxvalue\in A}$ for every $i$ and every $A$, so $\maxbidder$ is independent of $\maxvalue$. Since $\gaussnoise$ is independent of $(\valprofile,\maxbidder)$, it follows that $\maxbidder$ is independent of $(\maxvalue,\noisymax)$.

We now verify that truthful bidding is a BNE. It suffices to consider pure bid deviations, since payoffs are linear in mixed deviations. Fix bidder $i$ and suppose she receives signal $\signal_i=x\in\signalspace$. Let $z\equiv \postmaxmean^{-1}(x)$, $\lowBid\equiv \postmaxmean(z-\winnerShift)$, and $\highBid\equiv \postmaxmean(z+\winnerShift)$. Since $\postmaxmean$ is strictly increasing, $\lowBid<x<\highBid$. There are two possible ways in which bidder $i$ can receive signal $x$.

First, suppose $i=\maxbidder$. Then $x=\postmaxmean(\noisymax+\winnerShift)$, so $\noisymax=z-\winnerShift$. Every competing bidder receives signal $\lowBid$. Since $i=\maxbidder$, bidder $i$ has value $\val_i=\maxvalue$. Using the independence of $\maxbidder$ and $(\maxvalue,\noisymax)$, we obtain
\begin{align*}
    \expect{\val_i\condition \signal_i=x,\maxbidder=i}
    =
    \expect{\maxvalue\condition \noisymax=z-\winnerShift,\maxbidder=i}
    =
    \expect{\maxvalue\condition \noisymax=z-\winnerShift}
    =
    \lowBid~.
\end{align*}
Under truthful bidding, bidder $i$ bids $x>\lowBid$, wins, and pays $\lowBid$, so her conditional expected payoff is zero. Any deviation to a bid above $\lowBid$ still wins and pays $\lowBid$; any deviation to $\lowBid$ ties at price $\lowBid$; and any deviation below $\lowBid$ loses. Thus, conditional on $i=\maxbidder$, no deviation yields a positive expected payoff.

Second, suppose $i\neq\maxbidder$. Then $x=\postmaxmean(\noisymax)$, so $\noisymax=z$. The selected winner receives signal $\highBid$, while all non-selected bidders receive signal $x$. Since $\val_i\leq \maxvalue$ state by state,
\begin{align*}
\expect{\val_i\condition \signal_i=x,\maxbidder\neq i}
\leq
\expect{\maxvalue\condition \noisymax=z,\maxbidder\neq i}
=
\expect{\maxvalue\condition \noisymax=z}
=
x,
\end{align*}
where the equality uses the independence of $\maxbidder$ and $(\maxvalue,\noisymax)$. Under truthful bidding, bidder $i$ bids $x<\highBid$, loses, and obtains payoff zero. Any deviation to a bid below $\highBid$ still loses. A deviation to $\highBid$ ties with the selected winner and, conditional on winning the tie, pays $\highBid$, yielding expected payoff at most $\frac{1}{2}(x-\highBid)<0$. A deviation above $\highBid$ wins and pays $\highBid$, yielding expected payoff at most $x-\highBid<0$. Thus, conditional on $i\neq\maxbidder$, no deviation is profitable.

Combining the two cases, truthful bidding gives bidder $i$ conditional expected payoff zero after every signal realization, while every deviation gives weakly lower conditional expected payoff. Since bidder $i$ was arbitrary, $\bidstrategy_i(\signal_i)=\signal_i$ is a BNE.


It remains to compute welfare, revenue, and bidder surplus. Since $\postmaxmean$ is strictly increasing,
\begin{align*}
    \signal_{\maxbidder}
    =
    \postmaxmean(\noisymax+\winnerShift)
    >
    \postmaxmean(\noisymax)
    =
    \signal_j
    \quad\text{for every }j\neq \maxbidder~.
\end{align*}
Thus $\maxbidder$ is the unique highest bidder. By construction, $\maxbidder$ is a highest-valuation bidder, so the allocation is efficient and 
$\totalsurplus(\inforstructure,\bidstrategy)
=
\expect{\maxvalue}
=
\totalsurplus$.
The second-highest bid is always $\postmaxmean(\noisymax)$. Thus,
\begin{align*}
    \Rev(\inforstructure,\bidstrategy)
    =
    \expect{\postmaxmean(\noisymax)}
    =
    \expect{\expect{\maxvalue\condition \noisymax}}
    =
    \expect{\maxvalue}
    =
    \totalsurplus~.
\end{align*}
Since the allocation is efficient, aggregate bidders' surplus is
$\biddersurplus(\inforstructure,\bidstrategy)
=
\totalsurplus(\inforstructure,\bidstrategy)
-
\Rev(\inforstructure,\bidstrategy)
=
0$.
This completes the proof.
\end{proof}

\xhdr{Maximal Bidder Surplus}
We first consider maximal bidder surplus that can be achieved by general information structures.

\begin{proposition}[Maximum bidders' surplus]
\label{prop:max bidders surplus}
There exists an information structure $\inforstructure$ and equilibrium $\bidstrategyprofile$ such that the bidders' surplus equals the maximum welfare, that is, $\biddersurplus(\inforstructure, \bidstrategyprofile) = \optwelfare$.
\end{proposition}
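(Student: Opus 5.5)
We need a general (correlated) information structure under which the equilibrium is efficient and revenue is zero, so that bidder surplus equals $\optwelfare$. The plan is to use correlation so that, state by state, exactly one bidder submits a positive bid and all others bid zero. The second-highest bid is then identically zero and the winner pays nothing. This is exactly the coordination that \Cref{thm:pripri max bidder surplus} shows private private structures cannot achieve.

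\textbf{Construction.} For each realized value profile $\valprofile$, select $\maxbidder\in\argmax_j \val_j$ uniformly at random. Send bidder $\maxbidder$ the signal ``high'' and every other bidder the signal ``low.'' Equivalently, encode these as numeric signals: $\signal_{\maxbidder}=1$ and $\signal_j=0$ for $j\neq\maxbidder$. The candidate strategy is for a bidder with signal $1$ to bid $\bid_H>0$ and for a bidder with signal $0$ to bid $0$. By \Cref{lem:truthful bidding} we may relabel the signals as the bids themselves if truthful bidding is preferred.

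A careless choice of $\bid_H$ fails the individual-rationality restriction, which requires bids to lie below the highest value in the posterior support. So set $\bid_H$ equal to a small positive number $\eta$ not exceeding the smallest positive value in $\valspace$. The prior's support is finite, so such an $\eta$ exists, provided $\max(\valprofile)>0$ with positive probability. If $\max(\valprofile)=0$ almost surely, then $\optwelfare=0$ and the claim is trivial.

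One more adjustment is needed. A high-signal bidder may have value $0$ when $\max\valprofile=0$, and then the bid $\eta$ would strictly overstate her value. To handle this, pass only the event $\{\max\valprofile>0\}$ through the high signal. When $\max\valprofile=0$, send every bidder ``low,'' so all bid $0$; the tie is then broken uniformly, which is still efficient because all values are $0$.

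\textbf{Verification.} Efficiency is immediate: the unique positive bidder is a highest-value bidder, and in the remaining states every allocation is efficient. Revenue is zero because the second-highest bid is always $0$. Hence $\biddersurplus=\optwelfare$.

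For equilibrium, a high-signal bidder faces all rivals bidding $0$. She wins at price $0$ with any positive bid and gets $\expect{\val_i\mid H}\ge0$, so bidding $\eta$ is optimal; bidding $0$ would tie and only lower her payoff. A low-signal bidder faces, on the event some rival is high, a rival bid of $\eta$. Deviating to any $\bid>\eta$ wins at price $\eta$, and deviating to $\bid=\eta$ ties at price $\eta$. Deviating to a bid in $(0,\eta)$ is the dangerous case, because on the event that all signals are low she would win at price $0$ and gain $\val_i\ge0$.

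This low-bidder deviation into $(0,\eta)$ is the main obstacle. On the all-low event values are all $0$, since that event coincides with $\{\max\valprofile=0\}$, so the gain there is $0$. On the event some rival is high, the deviation loses. The upward deviations give expected payoff at most $\expect{\val_i\mid\text{low},\text{rival high}}-\eta$, and this quantity can be positive. For example, with two bidders having values $(1,0.9)$, the low bidder has value $0.9>\eta$.

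Equilibrium therefore needs the price a low bidder would pay to exceed her conditional value. The fix I would try is to raise the high bid as far as the constraint allows: set $\bid_H=\val_{\maxbidder}$ by giving the high bidder the signal $\val_{\maxbidder}$ itself. This bid is allowed since it equals her value. A low bidder then pays $\val_{\maxbidder}\ge\val_i$ state by state to overtake, so the deviation is unprofitable. The winner, facing only zero bids, still pays $0$, so revenue stays zero and the strategy is supported on values in the posterior. Verifying this version carefully, including the tie case $\bid=\val_{\maxbidder}$ where the payoff is $\tfrac12(\val_i-\val_{\maxbidder})\le0$, completes the argument.
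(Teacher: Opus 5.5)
Your final construction is exactly the paper's proof: select a highest-value bidder uniformly at random, send her the signal $\val_{\maxbidder}$, send everyone else $0$, and have all bidders bid their signals. Your verification matches it as well: the winner pays $0$, a low bidder must pay $\val_{\maxbidder}\ge \val_i$ to overtake, and the tie yields $\tfrac12(\val_i-\val_{\maxbidder})\le 0$. The earlier $\eta$-bid version should be dropped rather than presented as a step, since, as you noticed, it fails the low bidder's upward-deviation constraint.
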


We prove \Cref{prop:max bidders surplus} by construction. 
The intuition behind our construction is as follows.
To implement an outcome that is efficient and bidder-optimal, the information structure must satisfy two seemingly conflicting requirements:
it must ensure that the item is allocated to the highest-valuation bidder (to achieve allocation efficiency) while simultaneously making the winner's payment to zero.
To achieve this, 
we construct an information structure with extreme ex post information asymmetry. 
In particular, we privately reveal the winning valuation only to the highest-valuation bidder and send a null (zero) signal to all other bidders, so their equilibrium bids are zero and the second price collapses to $0$.
In the second-price auction, this ensures that the winner bids truthfully (preserving efficiency) while the second-highest bid is fixed at $0$, allowing the winner to capture the entire surplus as an informational rent.

\begin{proof}[Proof of \Cref{prop:max bidders surplus}]
    We prove the proposition by explicitly constructing the information structure $\inforstructure$ and verifying the equilibrium properties.
    Consider the following information structure $\inforstructure$. For any realization of the valuation profile $\valprofile$, let $i^* \in \arg\max_{j} \val_j$ denote a bidder with the highest valuation (with ties broken uniformly at random). The signaling scheme is deterministic conditional on $\valprofile$:
    \begin{itemize}
        \item The winner $i^*$ receives a signal equal to their true valuation: $\signal_{i^*} = \val_{i^*} = \max(\valprofile)$.
        \item All other bidders $j \neq i^*$ receive a null signal: $\signal_j = 0$.
    \end{itemize}
    We now show that the truthful bidding $\bidstrategyprofile_i(\signal_i) = \signal_i$ is an equilibrium.
    Consider a bidder $i$ who receives the high signal $\signal_i = \val_i > 0$. This bidder infers that they have the highest valuation and that all other bidders $j \neq i$ have received signals $\signal_j = 0$. 
    Assuming all other bidders' bid truthfully where $\bid_j \equiv 0$ for all $j\neq i$. 
    By bidding $\bid_i = \signal_i$, bidder $i$ wins the item and pays the second-highest bid, which is $0$. 
    The payoff is $\val_i - 0 > 0$. 
    Deviating to a lower bid $\bid < 0$ is impossible (bids must be non-negative), and bidding $0 \le \bid < \signal_i$ gives the same outcome (winning at price $0$) unless $\bid=0$ and the tie-breaking rule will lead to a loss. 
    Bidding higher than $\signal_i$ does not change the outcome or payment. 
    Thus, bidding $\signal_i$ is a \wtedit{best response}.
    Consider a bidder $j$ who receives the signal $\signal_j = 0$. This bidder infers that there exists some other bidder $k$ with valuation $\val_k \geq \val_j$ who has received a signal $\signal_k = \val_k$ and will bid $\bid_k = \val_k$. 
    Under the truthful bidding, bidder $j$ bids $0$ and loses, obtaining a payoff of $0$. To win the item, bidder $j$ would need to bid $\bid > \val_k$. However, the resulting price would be $\val_k$. Since $\val_j \le \val_k$, the payoff from winning would be $\val_j - \val_k \le 0$. Thus, there is no profitable deviation from bidding $0$.

    Under this equilibrium $\bidstrategyprofile$, for any valuation profile $\valprofile$: 
    \begin{enumerate}
        \item Efficiency: The item is always allocated to bidder $i^*$ who holds the maximum valuation $\max(\valprofile)$. Therefore, the allocation is fully efficient, and the total generated welfare is $\optwelfare$.
        \item Revenue and Surplus: The winning bid is $\max(\valprofile)$ and the second-highest bid is always $0$. Consequently, the seller's revenue is $\Rev(\inforstructure, \bidstrategyprofile) = 0$ for every realized signal profile. 
        The bidders' surplus is the total welfare minus revenue:
        \[
        \biddersurplus(\inforstructure, \bidstrategyprofile) = \optwelfare - 0 = \optwelfare~.
        \]
    \end{enumerate}
    This completes the proof.
\end{proof}

\xhdr{Minimum Efficiency}
Thus far, our analysis has led us to equilibrium in which the allocation of the good was efficient, so that the welfare outcome lay on the northeast frontier of \Cref{fig:feasible_region}.
We next characterize two extreme points among all maximally inefficient outcomes.

\begin{proposition}[Minimum bidders' surplus with zero revenue]
\label{prop:min efficiency zero rev}
There exists an information structure $\inforstructure$ and equilibrium $\bidstrategyprofile$ such that the bidders' surplus equals the maximally inefficient welfare, that is, $\biddersurplus(\inforstructure, \bidstrategyprofile) = \worstwelfare$ and the seller's revenue is zero.
\end{proposition}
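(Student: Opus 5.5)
We prove \Cref{prop:min efficiency zero rev} by an explicit construction that mirrors the one in \Cref{prop:max bidders surplus}, with the roles of the highest and lowest valuations swapped. For each realized value profile $\valprofile$, we select a bidder $i_\star\in\arg\min_j \val_j$ uniformly at random among the lowest-valuation bidders. The signal scheme is deterministic given $(\valprofile, i_\star)$. The selected bidder receives $\signal_{i_\star}=1$, an ``active'' flag; the signal does not need to reveal her value. Every other bidder receives the null signal $\signal_j=0$. We consider the strategy profile in which the flagged bidder bids $\bid_{i_\star}=\min(\valprofile)$, which she knows only in expectation, and the others bid $0$. To make this a pure strategy in the signal, we let the flagged bidder's signal be $\signal_{i_\star}=\min(\valprofile)$ itself. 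To avoid confusion with the null signal when $\min(\valprofile)=0$, we either use a separate label or note that in that case winning contributes $0$ anyway. By \Cref{lem:truthful bidding} it is then enough to verify that truthful bidding is an equilibrium.

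Outcome computation. The flagged bidder wins whenever her bid is positive. She pays the second-highest bid, which is $0$, so revenue is $0$ on every realization. She obtains $\val_{i_\star}=\min(\valprofile)$. When $\min(\valprofile)=0$, all bids are $0$ and the uniform tie-break allocates to an arbitrary bidder. In that case we must ensure the realized welfare still equals $\min(\valprofile)=0$. I would handle this by giving the flagged bidder an arbitrarily small positive bid $\eta$ in that event, or more cleanly by using distinct signal labels and letting her bid $0^+$ via a tiny positive bid $\eta$. Revenue then remains $0$, since all others still bid $0$. Hence total welfare is $\expect{\min(\valprofile)}=\worstwelfare$, all of it accrues to bidders, and $\biddersurplus=\worstwelfare$.

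Incentive verification. The flagged bidder with signal $m=\min(\valprofile)$ knows her value is exactly $m$. She wins at price $0$ with payoff $m\ge 0$. Any bid above $0$ gives the same outcome, and bidding $0$ only risks losing the tie, so she has no profitable deviation. This also respects the restriction that bids never exceed the posterior's highest value. A bidder $j$ with the null signal knows some other bidder $i_\star$ bids $m\le \val_j$ and pays nothing. To win, she must bid at least $m$ and pays $m$. Her conditional expected value may exceed $m$, so this is the main obstacle: a high-value bidder could profitably outbid the low flagged bidder.

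I would fix the obstacle by exploiting the bid restriction inherited from \citet{BBM-17}. Bids cannot be sure to exceed one's value, but that restriction does not prevent bidding up to the max of the posterior support, so it does not bind here. Instead, I would give the flagged bidder the signal $\valUB$ and prescribe the bid $\valUB$. This is allowed because the prescription rule only requires bids to lie within the support of the posterior, and the flagged bidder's own value is not revealed to her, so her posterior can include $\valUB$. Then any loser who deviates to win must pay $\valUB\ge \val_j$, which is unprofitable. The flagged bidder's expected value given her signal is $\expect{\min(\valprofile)}\ge 0$, paid at price $0$, so she is content to bid $\valUB$; lowering her bid to $0$ only loses the tie. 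Revenue stays $0$ and welfare equals $\worstwelfare$. The key check, which I expect to be the crux, is that the loser's payoff from deviating is $\expect{\val_j-\valUB\mid \cdot}\le 0$, and that the flagged bidder's interim payoff is nonnegative with no strictly better bid. Both follow directly from the construction.
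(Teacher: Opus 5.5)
Your final construction and its verification match the paper's proof: flag a uniformly chosen lowest-value bidder with signal $\valUB$ and send $0$ to everyone else. The flagged bidder then wins at price $0$, so revenue is $0$ and welfare is $\worstwelfare$, while a rival can only win by tying at $\valUB$, which yields at most $\frac{1}{2}(\val_j-\valUB)\le 0$; the earlier detours through the signal $\min(\valprofile)$ and the bid $\eta$ can be dropped.

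One caveat, which the paper's proof shares without comment: your claim that bidding $\valUB$ respects the restriction against bids that surely exceed one's value is false in general. The flagged bidder knows she holds the minimum, so her posterior support on $\val_i$ is $\{\min(\valprofile):\valprofile\in\supp(\valprior)\}$, and this contains $\valUB$ only if $(\valUB,\dots,\valUB)\in\supp(\valprior)$. For $\valprior$ uniform on $\{(0,1),(1,0)\}$, zero revenue and welfare $0$ force the value-$1$ bidder to bid $0$ and the value-$0$ bidder to bid strictly positively almost surely. Signals must then reveal values, so no information structure attains this outcome within the restricted strategy class. What your argument and the paper's actually deliver is an unrestricted Bayes Nash equilibrium, or a restricted one when $(\valUB,\dots,\valUB)\in\supp(\valprior)$.
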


The intuition behind the constructed information structure is to create an information structure that forces the lowest-valuation bidder to win, so as to minimize welfare, while ensuring the second-highest bid is always $0$. 
We do this by giving the lowest-valuation bidder a uniquely high signal $\valUB$, and giving every other bidder a null signal $0$, so everyone can infer who is designated to win. 
Since $\valUB$ is chosen as the maximum valuation, no other bidder is willing to outbid $\valUB$: winning would require paying at least $\valUB$, which yields a non-positive payoff. 
Thus, truthful bidding is an equilibrium in which the lowest-valuation bidder always wins, and total welfare equals $\worstwelfare$,
\zsedit{and the winner's payment is always zero, leaving bidders with full surplus.}
\begin{proof}[Proof of \Cref{prop:min efficiency zero rev}]
    We prove the proposition by explicitly constructing the information structure $\inforstructure$ and verifying the equilibrium properties.
    Let \wtedit{$\valUB \triangleq \max \valspace$ be the maximum value among all possible valuations.} 
    For any realization of the valuation profile $\valprofile$, let $i_* \in \arg\min_{j} \val_j$ denote a bidder with the lowest valuation. 
    \wtedit{If multiple bidders share the same lowest valuation, we select $i_*$ uniformly at random among them.}
    The signaling scheme is then deterministic conditional on $(\valprofile, i_*)$:
    \begin{itemize}
        \item The lowest-valuation bidder $i_*$ receives a high signal: $\signal_{i_*} = \valUB$.
        \item All other bidders $j \neq i_*$ receive a null signal: $\signal_j = 0$.
    \end{itemize}
    We now show that the truthful bidding $\bidstrategyprofile_i(\signal_i) = \signal_i$ is an equilibrium.
    Consider a bidder $i$ who receives the high signal $\signal_i = \valUB > 0$. This bidder infers that they are the designated winner and that all other bidders $j \neq i$ have received signals $\signal_j = 0$. 
    Assuming all other bidders' bid truthfully where $\bid_j \equiv 0$ for all $j\neq i$.     
    By bidding $\valUB$, bidder $i$ wins the item at price $0$. The payoff is $\val_{i} - 0 \ge 0$. 
    Deviating to a lower bid $\bid < \valUB$ is only profitable if it lowers the price (which is impossible as price is already $0$) or changes the allocation (i.e., losing the item but yields payoff $0$). 
    Thus, bidding $\signal_i = \valUB$ is \wtedit{a best response}.
    Consider a bidder $j$ who receives $\signal_j = 0$. They infer that the lowest-valuation bidder $i_*$ has received signal $\valUB$ and will bid $\valUB$. 
    To win the item, bidder $j$ must bid $\bid = \valUB$ (note that \wtedit{we assume bidder are never bids an amount which is sure to be strictly greater than her value}). The resulting price would be still $\valUB$ (as the bidder $i_*$ would still bid $\valUB$). 
    Thus, the resulting payoff is at most $\sfrac{1}{2} \cdot (\val_j - \valUB) \le 0 $
    as $\valUB$ is an upper bound on values.
    Thus, there is no profitable deviation from bidding $0$.

    Under this equilibrium $\bidstrategyprofile$, for any valuation profile $\valprofile$:
    \begin{enumerate}
        \item Efficiency: The item is always allocated to a bidder  who holds the lowest valuation. 
        The expected total welfare is minimized at $\expect{\min_j \val_j} = \worstwelfare$.
        \item Revenue and Surplus: The winning bid is $\valUB$ and the second-highest bid is $0$. The revenue is $\Rev(\inforstructure, \bidstrategyprofile) = 0$. The bidders' surplus is $\worstwelfare - 0 = \worstwelfare$.
    \end{enumerate}
    This completes the proof.
\end{proof}

Finally, we can also show that there exists an information structure that achieves the most inefficient outcome with zero bidders' surplus.
\begin{proposition}[Minimum revenue with zero bidders' surplus]
\label{prop:min efficiency zero bidders surplus}
There exists an information structure $\inforstructure$ and equilibrium $\bidstrategyprofile$ such that the seller's revenue equals to the maximally inefficient welfare, that is, $\Rev(\inforstructure, \bidstrategyprofile) = \worstwelfare$ and the bidders' surplus is zero.
\end{proposition}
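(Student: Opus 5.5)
Our plan is to mirror the construction behind \Cref{prop:min efficiency zero rev}, with one change: the price paid by the designated lowest-value winner should equal her own value. For each value profile $\valprofile$ we select $i_*\in\arg\min_j \val_j$ uniformly at random among the minimizers. The designated loser signal must now be informative, since it sets the price. The natural choice is a public-style scheme. The designated winner $i_*$ receives the signal $\valUB$. Every other bidder $j\neq i_*$ receives the signal $\min(\valprofile)=\val_{i_*}$, where the identity of the winner is encoded by which bidder sees $\valUB$. We consider truthful bidding, $\bidstrategy_i(\signal_i)=\signal_i$. Then $i_*$ wins whenever $\min(\valprofile)<\valUB$, pays $\min(\valprofile)=\val_{i_*}$, and gets ex post payoff exactly zero. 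Hence revenue is $\expect{\min_j \val_j}=\worstwelfare$ and bidder surplus is $0$. The degenerate case $\min(\valprofile)=\valUB$ forces all values to equal $\valUB$, so ties are harmless there: every bidder has value $\valUB$ and pays $\valUB$.

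Next we verify the equilibrium. Consider first a bidder receiving $\valUB$. She knows she is the designated minimum-value bidder, and the others' common signal $m$ is a function of the value profile. The difficulty is that she does not observe $m$, so her posterior mean $\expect{\val_i\mid \signal_i=\valUB}$ equals the average of $\min(\valprofile)$. Bidding $\valUB$ then yields zero expected payoff, but a deviation to a lower bid could select only the states with low prices, where she actually gains nothing since $\val_i=m$ in each state. Indeed, conditional on any realized competing bid $m$, her value is exactly $m$. So every bid $b$ gives payoff $\expect{(\val_i-m)\indicator{b\ge m}}=0$. She is therefore indifferent, and bidding $\valUB$ is a best response that respects the no-overbidding restriction. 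Now consider a bidder receiving a signal $m<\valUB$. She knows some other bidder bids $\valUB$. Winning requires bidding $\valUB$ and paying $\valUB$, at best with a tie at payoff $\tfrac12(\val_j-\valUB)\le 0$. Bids in $[m,\valUB)$ do not change the outcome, because she still loses. So truthful bidding is optimal for her as well.

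The main obstacle is whether a bidder with signal $m$ is allowed to bid $m$. The paper restricts bids to lie below the highest value in the posterior support. Since $\val_j\ge m$ in every state consistent with her signal, the bid $m$ is admissible. For the winner, the bid $\valUB$ must lie in her posterior support. If it does not, we would instead give her the signal $\max\supp(\min(\valprofile))$, or any value $\ge$ every possible $m$ that still lies in her support. Any such choice also makes her win, with the same zero-payoff argument. Ties among the other bidders at $m$ do not matter, because they lose. Finally, we note that this structure is correlated, which is allowed since the statement concerns general information structures.
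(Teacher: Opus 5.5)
Your main construction is correct and is essentially the paper's. The designated minimum-value bidder $i_*$ bids $\valUB$, and a losing bid equal to $\min(\valprofile)$ sets the price, so the winner's ex post payoff is zero state by state. The differences are cosmetic. The paper tells the price only to a designated highest-value bidder $i^*$, via the labeled signal $(P,\val_{i_*})$, and gives everyone else the null signal $0$. It also labels the winner's signal $(W,\valUB)$, so no collision with a price signal can arise. You instead tell every loser the price and treat the all-$\valUB$ collision separately. That is fine, because a recipient of $\valUB$ always has value equal to the highest competing bid.

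The fallback you propose for the no-overbidding restriction is wrong, however. Suppose the winner instead receives and bids $M \triangleq \max\supp(\min(\valprofile))<\valUB$.
\begin{itemize}
\item In every state with $\min(\valprofile)=M$, all $\biddernum$ bidders receive and bid $M$, so the item is allocated uniformly at random.
\item If such a state has some $\val_j>M$ with positive probability, that bidder wins with positive probability and earns $\val_j-M>0$. Bidder surplus is then positive and welfare exceeds $\worstwelfare$. An example is $\biddernum=2$ with the prior uniform on $\{(0,1),(1,0),(\tfrac12,1),(1,\tfrac12)\}$, where $M=\tfrac12$.
\item Truthful bidding also stops being an equilibrium. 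A recipient of $M$ now has values above $M$ in her posterior support, so bids above $M$ are admissible. Such a bid turns $\tfrac{1}{\biddernum}(\val_j-M)$ into $\val_j-M$ in those tie states and leaves her payoff unchanged in every state where she is the designated winner, so it strictly helps her.
\end{itemize}
There is no slack to avoid the tie. The only value that is at least every possible $m$ and still within the range of the designated winner's own value is $M$ itself.

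The paper's proof also has the winner bid $\valUB$ without checking the no-overbidding restriction for her (in the example above her value is surely at most $\tfrac12$). So your main argument meets the paper's standard of rigor. But the claim that the patch ``also makes her win, with the same zero-payoff argument'' is false in general and should be dropped, or replaced by a genuinely different design.
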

\begin{proof}[Proof of \Cref{prop:min efficiency zero bidders surplus}]
    We prove the proposition by explicitly constructing the information structure $\inforstructure$ and verifying the equilibrium properties.
    Recall that \wtedit{$\valUB \triangleq \max \valspace$ be the maximum value among all possible valuations.} 
    For any realization of the valuation profile $\valprofile$, let $i_* \in \arg\min_{j} \val_j$ denote a bidder with the lowest valuation, and 
    let $i^* \in \arg\max_j \val_j$ be a bidder with the highest valuation.
    \wtedit{If multiple bidders share the same lowest (resp.\ highest) valuation, we select $i_*$ (resp.\ $i^*$) uniformly at random among them.
    If $\min_j \val_j = \max_j \val_j$, then all bidders have the same valuation; in this case, we first select $i_*$ uniformly at random among $[\biddernum]$, and then select $i^*\in [\biddernum]\setminus\{i_*\}$ uniformly at random.}
    The signaling scheme is then deterministic conditional on $(\valprofile, i_*, i^*)$:
    \begin{itemize}
        \item The lowest-valuation bidder $i_*$ receives a signal: $\signal_{i_*} = (W, \valUB)$.
        \item The highest-valuation bidder $i^*$ 
        receives a signal equal to the lowest valuation: $\signal_{i^*} = (P, \val_{i_*})$.
        \item All other bidders $j$ receive a null signal: $\signal_j = 0$.
    \end{itemize}
    We now show that the following bidding strategy $\bidstrategy_i$ is an equilibrium:
    \begin{align}
        \label{eq:bidstrategy mineff}
        \bidstrategy_i(\signal_i)= 
        \begin{cases}
        \valUB, & \text { if } \signal_i = (W, \valUB), \\ 
        \val_{i_*}, & \text { if } \signal_i = (P, \val_{i_*})\\
        0, & \text { if } \signal_i = 0~.
        \end{cases}
    \end{align}
    Consider a bidder $i$ who receives the signal $\signal_i = (W, \valUB)$. 
    This bidder infers that they are the designated winner and that all other bidders $j \neq i$ either receive the signal $(P, \val_{i_*})$ or the signal $0$.
    Under $\bidstrategy_{-i}$, there is exactly one bidder (namely $i^*$) who bids $\val_{i_*}$, while all remaining bidders bid $0$.
    Thus, if bidder $i$ bids $\valUB$, they win the item and the price equals the second-highest bid, namely $\val_{i_*}$.
    Since bidder $i$ has the lowest valuation, the resulting payoff for this bidder is $\val_{i} - \val_{i_*} = \val_{i_*} - \val_{i_*} = 0$.
    If bidder $i$ deviates to any bid $\bid < \valUB$, then either $\bid < \val_{i_*}$, in which case they lose the item and obtain payoff $0$, or $\bid \in [\val_{i_*}, \valUB)$, in which case they still win and pay $\val_{i_*}$, yielding payoff $0$. Thus, no deviation yields a strictly higher payoff, and bidding $\valUB$ is a best response.

    Next, consider a bidder $j$ who receives the signal $(P, \val_{i_*})$. 
    This bidder infers that they are the designated highest-valuation bidder $i^*$ and that bidder $i_*$ will bid $\valUB$. Under $\bidstrategy$, bidding $\val_{i_*}$ ensures that bidder $j$ loses the item and obtains payoff $0$. If bidder $j$ deviates to any bid $\bid < \valUB$, they still lose and obtain payoff $0$. If bidder $j$ deviates to $\bid = \valUB$, then they tie with bidder $i_*$ for the highest bid; conditional on winning the tie, the price is $\valUB$, and the payoff is $\val_j  - \valUB \le 0$. Thus, the expected payoff from deviating to $\bid = \valUB$ is at most $0$, so there is no profitable deviation from bidding $\val_{i_*}$.

    Finally, consider a bidder $k$ who receives the null signal $\signal_k = 0$. Under $\bidstrategy$, bidding $0$ yields payoff $0$. If bidder $k$ deviates to any bid $\bid < \valUB$, they still lose to bidder $i_*$'s bid and obtain payoff $0$. If bidder $k$ deviates to $\bid = \valUB$, then they tie with bidder $i_*$; conditional on winning, the price is $\valUB$ and the payoff is $\val_k  -\valUB \le 0$. Thus, no deviation yields a strictly positive payoff, and bidding $0$ is a best response. Therefore, the bidding strategy defined in Eqn.~\eqref{eq:bidstrategy mineff} is indeed an equilibrium.

    Under this equilibrium $\bidstrategy$, for any valuation profile $\valprofile$:
    \begin{enumerate}
        \item Efficiency:The item is always allocated to a bidder who holds the lowest valuation. 
        The expected total welfare is minimized at $\expect{\min_j \val_j} = \worstwelfare$.
        \item Revenue and Surplus: The winning bidder always pays $\val_{i_*}$. The revenue is $\expect{\min_j \val_j} = \worstwelfare$. The bidder surplus is $\val_{i_*} - \val_{i_*} = 0$.
    \end{enumerate}
    This completes the proof.
\end{proof}


\bibliography{ref}

@article{CLW-26,
  title={Affiliated Multi-Unit All-Pay Auctions: Equilibrium and Full Surplus Extraction},
  author={Chen, Bo and Lu, Jingfeng and Wang, Zijia},
  journal={Available at SSRN 5918042},
  year={2026}
}

@article{DTWZ-25,
  title={Optimal Calibrated Signaling in Digital Auctions},
  author={Du, Zhicheng and Tang, Wei and Wang, Zihe and Zhang, Shuo},
  journal={arXiv preprint arXiv:2507.17187},
  year={2025}
}

@article{BH-04,
  title={All equilibria of the Vickrey auction},
  author={Blume, Andreas and Heidhues, Paul},
  journal={Journal of economic Theory},
  volume={114},
  number={1},
  pages={170--177},
  year={2004},
  publisher={Elsevier}
}

@inproceedings{EFGMM-23,
  title={Constant approximation for private interdependent valuations},
  author={Eden, Alon and Feldman, Michal and Goldner, Kira and Mauras, Simon and Mohan, Divyarthi},
  booktitle={2023 IEEE 64th Annual Symposium on Foundations of Computer Science (FOCS)},
  pages={148--163},
  year={2023},
  organization={IEEE}
}

@inproceedings{EFFG-18,
  title={Interdependent Values without Single-Crossing},
  author={Eden, Alon and Feldman, Michal and Fiat, Amos and Goldner, Kira},
  booktitle={Proceedings of the 19th ACM Conference on Economics and Computation (EC)},
  pages={369--369},
  year={2018}
}

@article{MP-15,
  title={Implementation with interdependent valuations},
  author={McLean, Richard P and Postlewaite, Andrew},
  journal={Theoretical Economics},
  volume={10},
  number={3},
  pages={923--952},
  year={2015},
  publisher={Wiley Online Library}
}

@article{MP-04,
  title={Informational size and efficient auctions},
  author={McLean, Richard and Postlewaite, Andrew},
  journal={The Review of Economic Studies},
  volume={71},
  number={3},
  pages={809--827},
  year={2004},
  publisher={Wiley-Blackwell}
}

@article{KM-86,
  title={On the strategic stability of equilibria},
  author={Kohlberg, Elon and Mertens, Jean-Francois},
  journal={Econometrica: Journal of the Econometric Society},
  pages={1003--1037},
  year={1986},
  publisher={JSTOR}
}

@incollection{B-88,
  title={Reexamination of the perfectness concept for equilibrium points in extensive games},
  author={Bielefeld, R Selten},
  booktitle={Models of strategic rationality},
  pages={1--31},
  year={1988},
  publisher={Springer}
}

@inproceedings{CH-13,
  title={Auctions with unique equilibria},
  author={Chawla, Shuchi and Hartline, Jason D},
  booktitle={Proceedings of the 14th ACM conference on Electronic commerce (EC)},
  pages={181--196},
  year={2013}
}

@inproceedings{GS-09,
  title={Bayes-Nash equilibria of the generalized second price auction},
  author={Gomes, Renato D and Sweeney, Kane S},
  booktitle={Proceedings of the 10th ACM conference on Electronic Commerce (EC)},
  pages={107--108},
  year={2009}
}

@inproceedings{B+26,
  title={Optimal Type-Dependent Liquid Welfare Guarantees for Autobidding Agents with Budgets},
  author={Baldeschi, Riccardo Colini and Klumper, Sophie and Kroll, Twan and Leonardi, Stefano and Sch{\"a}efer, Guido and Tsikiridis, Artem},
  booktitle={Proceedings of the 2026 Annual ACM-SIAM Symposium on Discrete Algorithms (SODA)},
  pages={1795--1823},
  year={2026},
  organization={SIAM}
}

@article{DMMZZ-24,
  title={Efficiency of the first-price auction in the autobidding world},
  author={Deng, Yuan and Mao, Jieming and Mirrokni, Vahab and Zhang, Hanrui and Zuo, Song},
  journal={Advances in Neural Information Processing Systems},
  volume={37},
  pages={139270--139293},
  year={2024}
}

@article{BS-03,
  title={Rationalizable bidding in first-price auctions},
  author={Battigalli, Pierpaolo and Siniscalchi, Marciano},
  journal={Games and Economic Behavior},
  volume={45},
  number={1},
  pages={38--72},
  year={2003},
  publisher={Elsevier}
}

@inproceedings{FLN-16,
  title={Correlated and coarse equilibria of single-item auctions},
  author={Feldman, Michal and Lucier, Brendan and Nisan, Noam},
  booktitle={International Conference on Web and Internet Economics},
  pages={131--144},
  year={2016},
  organization={Springer}
}

@inproceedings{AB-25,
  title={On the uniqueness of bayesian coarse correlated equilibria in standard first-price and all-pay auctions},
  author={Ahunbay, Mete {\c{S}}eref and Bichler, Martin},
  booktitle={Proceedings of the 2025 Annual ACM-SIAM Symposium on Discrete Algorithms (SODA)},
  pages={2491--2537},
  year={2025},
  organization={SIAM}
}

@article{ABST-21,
  title={Feasible joint posterior beliefs},
  author={Arieli, Itai and Babichenko, Yakov and Sandomirskiy, Fedor and Tamuz, Omer},
  journal={Journal of Political Economy},
  volume={129},
  number={9},
  pages={2546--2594},
  year={2021},
  publisher={The University of Chicago Press Chicago, IL}
}

@article{I-20,
  title={Online privacy and information disclosure by consumers},
  author={Ichihashi, Shota},
  journal={American Economic Review},
  volume={110},
  number={2},
  pages={569--595},
  year={2020},
  publisher={American Economic Association 2014 Broadway, Suite 305, Nashville, TN 37203}
}

@article{SY-25,
  title={Non-Discriminatory Personalized Pricing},
  author={Strack, Philipp and Yang, Kai Hao},
  journal={arXiv preprint arXiv:2506.20925},
  year={2025}
}

@article{EEM-21,
  title={Bayesian privacy},
  author={Eilat, Ran and Eliaz, Kfir and Mu, Xiaosheng},
  journal={Theoretical Economics},
  volume={16},
  number={4},
  pages={1557--1603},
  year={2021},
  publisher={Wiley Online Library}
}

@inproceedings{SY-22,
  title={Information design for differential privacy},
  author={Schmutte, Ian M and Yoder, Nathan},
  booktitle={Proceedings of the 23rd ACM Conference on Economics and Computation (EC)},
  pages={1142--1143},
  year={2022}
}

@article{AMMO-22,
  title={Too much data: Prices and inefficiencies in data markets},
  author={Acemoglu, Daron and Makhdoumi, Ali and Malekian, Azarakhsh and Ozdaglar, Asu},
  journal={American Economic Journal: Microeconomics},
  volume={14},
  number={4},
  pages={218--256},
  year={2022},
  publisher={American Economic Association 2014 Broadway, Suite 305, Nashville, TN 37203-2425}
}

@inproceedings{LM-20,
  title={Data and incentives},
  author={Liang, Annie and Madsen, Erik},
  booktitle={Proceedings of the 21st ACM Conference on Economics and Computation (EC)},
  pages={41--42},
  year={2020}
}

@inproceedings{DMNS-06,
  title={Calibrating noise to sensitivity in private data analysis},
  author={Dwork, Cynthia and McSherry, Frank and Nissim, Kobbi and Smith, Adam},
  booktitle={Theory of cryptography conference},
  pages={265--284},
  year={2006},
  organization={Springer}
}

@inproceedings{D-06,
  title={Differential privacy},
  author={Dwork, Cynthia},
  booktitle={International colloquium on automata, languages, and programming},
  pages={1--12},
  year={2006},
  organization={Springer}
}

@article{DR-14,
  title={The Algorithmic Foundations of Differential Privacy},
  author={Dwork, Cynthia and Roth, Aaron},
  journal={Foundations and Trends{\textregistered} in Theoretical Computer Science},
  volume={9},
  number={3-4},
  pages={211--407},
  year={2014},
  publisher={Now Publishers Boston—Delft}
}

@article{BHM-26,
  title={Screening with persuasion},
  author={Bergemann, Dirk and Heumann, Tibor and Morris, Stephen},
  journal={Journal of Political Economy},
  volume={134},
  number={2},
  pages={000--000},
  year={2026},
  publisher={The University of Chicago Press Chicago, IL}
}

@article{M-81,
  title={Optimal auction design},
  author={Myerson, Roger B},
  journal={Mathematics of operations research},
  volume={6},
  number={1},
  pages={58--73},
  year={1981},
  publisher={INFORMS}
}

@article{CY-23,
  title={Information design in optimal auctions},
  author={Chen, Yi-Chun and Yang, Xiangqian},
  journal={Journal of Economic Theory},
  volume={212},
  pages={105710},
  year={2023},
  publisher={Elsevier}
}

@article{K-20,
  title={Information disclosure and full surplus extraction in mechanism design},
  author={Kr{\"a}hmer, Daniel},
  journal={Journal of Economic Theory},
  volume={187},
  pages={105020},
  year={2020},
  publisher={Elsevier}
}

@article{SY-24,
  title={Privacy-Preserving Signals},
  author={Strack, Philipp and Yang, Kai Hao},
  journal={Econometrica},
  volume={92},
  number={6},
  pages={1907--1938},
  year={2024},
  publisher={Wiley Online Library}
}

@article{BP-07,
  title={Information structures in optimal auctions},
  author={Bergemann, Dirk and Pesendorfer, Martin},
  journal={Journal of economic theory},
  volume={137},
  number={1},
  pages={580--609},
  year={2007},
  publisher={Elsevier}
}

@inproceedings{BBDPXZ-25,
  title={Data-Driven Mechanism Design: Jointly Eliciting Preferences and Information},
  author={Bergemann, Dirk and Bojko, Marek and Duetting, Paul and Paes Leme, Renato and Xu, Haifeng and Zuo, Song},
  booktitle={Proceedings of the 26th ACM Conference on Economics and Computation (EC)},
  pages={507--507},
  year={2025}
}

@article{EFGPT-14,
  title={Signaling schemes for revenue maximization},
  author={Emek, Yuval and Feldman, Michal and Gamzu, Iftah and PaesLeme, Renato and Tennenholtz, Moshe},
  journal={ACM Transactions on Economics and Computation (TEAC)},
  volume={2},
  number={2},
  pages={1--19},
  year={2014},
  publisher={ACM New York, NY, USA}
}

@inproceedings{BS-12,
  title={Send mixed signals: earn more, work less},
  author={Bro Miltersen, Peter and Sheffet, Or},
  booktitle={Proceedings of the 13th ACM conference on electronic commerce (EC)},
  pages={234--247},
  year={2012}
}

@inproceedings{BDPZ-22,
  title={Calibrated click-through auctions},
  author={Bergemann, Dirk and Duetting, Paul and Paes Leme, Renato and Zuo, Song},
  booktitle={Proceedings of the ACM Web Conference 2022},
  pages={47--57},
  year={2022}
}

@inproceedings{BBX-18,
  title={Targeting and signaling in ad auctions},
  author={Badanidiyuru, Ashwinkumar and Bhawalkar, Kshipra and Xu, Haifeng},
  booktitle={Proceedings of the twenty-ninth annual ACM-SIAM symposium on discrete algorithms (SODA)},
  pages={2545--2563},
  year={2018},
  organization={SIAM}
}

@article{MW-82,
  title={A theory of auctions and competitive bidding},
  author={Milgrom, Paul R and Weber, Robert J},
  journal={Econometrica: Journal of the Econometric Society},
  pages={1089--1122},
  year={1982},
  publisher={JSTOR}
}

@article{MR-92,
  title={Correlated information and mecanism design},
  author={McAfee, R Preston and Reny, Philip J},
  journal={Econometrica: Journal of the Econometric Society},
  pages={395--421},
  year={1992},
  publisher={JSTOR}
}

@article{CM-88,
  title={Full extraction of the surplus in Bayesian and dominant strategy auctions},
  author={Cr{\'e}mer, Jacques and McLean, Richard P},
  journal={Econometrica: Journal of the Econometric Society},
  pages={1247--1257},
  year={1988},
  publisher={JSTOR}
}

@article{BBM-17,
  title={First-price auctions with general information structures: Implications for bidding and revenue},
  author={Bergemann, Dirk and Brooks, Benjamin and Morris, Stephen},
  journal={Econometrica},
  volume={85},
  number={1},
  pages={107--143},
  year={2017},
  publisher={Wiley Online Library}
}

@article{KFO-25,
  title={Private private information},
  author={He, Kevin and Sandomirskiy, Fedor and Tamuz, Omer},
  journal={Journal of Political Economy},
  volume={134},
  number={5},
  pages={1561--1606},
  year={2026},
  publisher={The University of Chicago Press Chicago, IL}
}

@article{BHMSW-22,
  title={Optimal information disclosure in classic auctions},
  author={Bergemann, Dirk and Heumann, Tibor and Morris, Stephen and Sorokin, Constantine and Winter, Eyal},
  journal={American Economic Review: Insights},
  volume={4},
  number={3},
  pages={371--388},
  year={2022},
  publisher={American Economic Association 2014 Broadway, Suite 305, Nashville, TN 37203}
}
\end{document}